\newcommand{\vP}{{\bf P}}
\newcommand{\vR}{{\bf R}}
\newcommand{\bzero}{{\mathbf 0}}
\date{\today}
\newcommand{\be}{\begin{eqnarray}}
\newcommand{\ee}{\end{eqnarray}}
\newcommand{\R}{\mathbb{R}}
\newcommand{\half}{\frac{1}{2}}
\newcommand{\ep}{\varepsilon}
\newcommand{\om}{\Omega}
\newcommand{\Div}{{\rm div}\,}
\newcommand{\1}{{\bf 1}}
\newcommand{\tr}{{\rm tr}\,}
\newcommand{\cof}{{\rm cof}\,}
\newcommand{\diag}{{\rm diag}\,}
\newcommand{\n}{{\bf n}}
\newcommand{\N}{{\bf N}}
\newcommand{\x}{{\bf x}}
\newcommand{\z}{{\bf z}}
\newcommand{\p}{{\bf p}}
\newcommand{\q}{{\bf q}}
\newcommand{\m}{{\bf m}}
\newcommand{\M}{{\bf M}}
\newcommand{\B}{{\bf B}}
\newcommand{\vE}{{\bf E}}
\newcommand{\Q}{{\bf Q}}
\newcommand{\az}{{\bf a}}
\newcommand{\bz}{{\bf b}}
\newcommand{\e}{{\bf e}}
\newcommand{\uu}{{\bf u}}
\newcommand{\y}{{\bf y}}
\newcommand{\A}{{\bf A}}
\newcommand{\vD}{{\bf D}}
\newcommand{\nnu}{{\bm \nu}}
\newtheorem{thm}{Theorem}
\newtheorem{cor}[thm]{Corollary}
\newtheorem{lem}[thm]{Lemma}
\newtheorem{prop}[thm]{Proposition}
\newtheorem{rem}{Remark}
\def\def\IPEfile{#}\input{\IPEfile}1{\def\IPEfile{#1}\input{\IPEfile}}
\begin{document}

\title*{Liquid crystals and their defects}
\author{J. M. Ball}
\institute{Oxford Centre for Nonlinear PDE,\\ Mathematical Institute,   University of Oxford,\\
Andrew Wiles Building,
Radcliffe Observatory Quarter,\\
Woodstock Road,
Oxford,
OX2 6GG,
 U.K.\\
\texttt{ball@maths.ox.ac.uk}
}
%
%
\maketitle
 \begin{abstract}
These lectures describe some classical models of liquid crystals, the relations between them, and the different ways in which these models describe defects. 
\end{abstract}

\section{Introduction}
\label{intro}
\setcounter{equation}{0}
This course of lectures discusses classical models of liquid crystals, and the different ways in which defects are described according to the different models. By a defect we mean  a point, curve or surface, in the neighbourhood of which the order parameter describing the orientation of the liquid crystal molecules varies very rapidly.  Defects can be observed optically, for example using polarized light, but it is difficult to obtain definitive information about their small-scale structure via microscopy. Depending on the theory used, a defect may or may not be represented by a mathematical singularity in the order parameter field. One of the themes running through the lectures is the importance of a proper function space setting for the  description of  defects. The same energy functional may predict different behaviour according to the function space used, as this space may allow the description of one kind of defect but not another. 

This is part of a more general issue concerning continuum models of physics, which are not complete without specification of a function space describing the allowed singularities. Said differently, {\it the function space is part of the model}. 
Of course these questions are closely related to the different possible levels of description for materials (atomic, molecular, continuum ...) and how these can be reconciled. The more detailed the description the larger the dimension of the corresponding order parameter. Textbook derivations of models of continuum physics do not usually pay much attention to function spaces, explicitly or implicitly assuming that continuum variables are smooth, and it is only when analysts start trying to prove existence, uniqueness and regularity theorems that function spaces start to proliferate. This  obviously unsatisfactory state of affairs  glosses over the deep question of where the function space comes from, which is intimately connected to the relation between models having more or less  detailed levels of description.

These issues are perhaps better understood for solid mechanics than for liquid crystals (for more discussion see \cite{j70}),   a  particularly strong analogy being with function spaces used to describe different kinds of fracture in solids.

A further theme is the treatment of equality and inequality constraints in models of continuum physics, and how they are preserved by solutions. Again there is an interesting comparison to be made between liquid crystals and solid mechanics, where there are similar open problems concerning the preservation of the eigenvalue constraints on the $\Q-$tensor for the Landau - de Gennes theory, and positivity of the Jacobian (related to non-interpenetration of matter) in nonlinear elasticity.

The study of liquid crystals is an interdisciplinary subject in which aspects of chemistry, physics, engineering, mathematics and computer simulation are all necessary for a full understanding. The interaction with mathematics, in particular algebra, geometry, topology and partial differential equations, continues to be a source of deep and interesting problems, and I hope these notes will help to attract researchers to some of these.

For general introductions to the mathematics of liquid crystals the reader is referred to the texts of Stewart \cite{stewart04} and Virga \cite{virgabook}, and for a compendium of classic papers in the subject to Sluckin, Dunmur \& Stegemeyer  \cite{sluckin2004crystals}. For a comprehensive review of liquid crystal defects see Kl\'eman \cite{Kleman1989}.

\section{What are liquid crystals?}
\label{lcs}
Liquid crystal phases are states of matter  intermediate between crystalline solids and isotropic fluids. The  interaction of these phases with electromagnetic fields has led to a multi-billion dollar industry centred around the ubiquitous liquid crystal displays (LCDs)  found in billions of PCs and laptops, televisions and watches. The characteristic properties of liquid crystal phases originate from the shape and other properties of their constituent molecules and the interactions between them. 

In the most common {\it thermotropic} liquid crystals the liquid crystalline phases typically exist in a temperature range above which  the material behaves like an isotropic fluid, and below which it behaves like a solid. The liquid crystal phases are characterized by orientational order of their constituent molecules,  with in some cases a limited amount of positional order, and they form a special kind of {\it nonlinear fluid}. Commercial liquid crystals usually comprise a mixture of different kinds of molecules to optimize performance. In {\it lyotropic} liquid crystals, which we do not consider further in this course,  the liquid crystal phases depend both on temperature and the concentration of the liquid crystal molecules in a solvent, such as water.

Typical thermotropic liquid crystals, such as MBBA  and 5CB, consist of  molecules having lengths of the order of 2-3nm. It is instructive to look at 3D space-filling models of such molecules, in which atoms are represented by spheres with radius proportional to the radius of the atom, see Fig. \ref{MBBA5CBa}. 
\begin{figure}[tbp] 
  \centering
  \includegraphics[width=4.09in,height=0.8in,keepaspectratio]{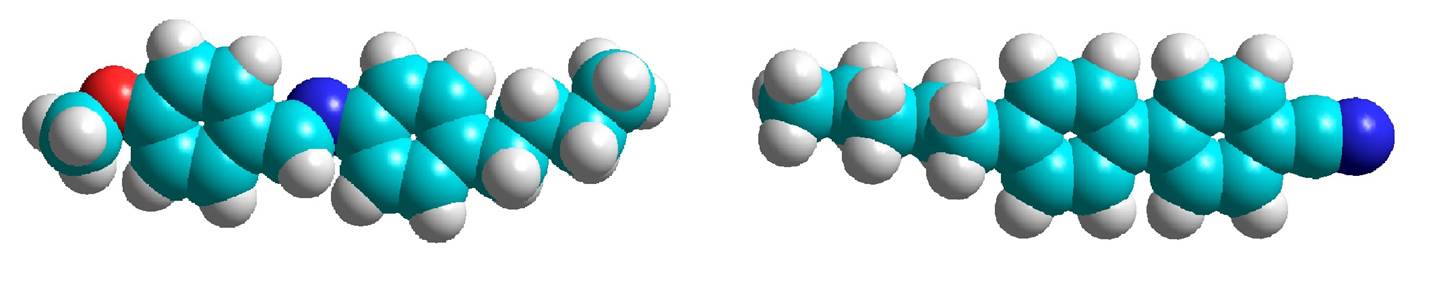}
  \caption{Space-filling models of liquid crystal molecules (courtesy C. Zannoni): (a) MBBA (N-(4-methoxybenzylidene)-4-butylaniline), (b) 5CB (4-Cyano-4$'$-pentylbiphenyl)}
  \label{MBBA5CBa}
\end{figure}
Such molecules have approximate rod-like shapes, and are often  idealized as ellipsoids of revolution.  

There are three main liquid crystal phases, {\it nematics}, {\it cholesterics} and {\it smectics}. In the nematic phase the molecules have orientational order but no positional order, so that the mean orientation of the long axis of the molecules at the point $\x$ and time $t$  can be represented by a unit vector $\n=\n(\x,t)$ called the {\it director} (see Fig. \ref{nemsmectic}(a)).  In the cholesteric (or chiral nematic) phase the molecules form a helical structure with an axis perpendicular to the local director (see Fig. \ref{cholesteric}).
 The smectic phases have orientational and  some positional order. In the smectic A phase the molecules arrange themselves in layers of the order of a molecular length in thickness, with the director $\n$ parallel to the layer normal $\m$ (see Fig. \ref{nemsmectic}(b)). The molecules may move between layers. In the smectic C phase (see Fig. \ref{nemsmectic}(c)) the director makes a fixed angle with the layer normal. The molecules are in thermal motion, so that Fig. \ref{nemsmectic} is a schematic representation at a fixed time. \begin{figure}[tbp] 
  \centering
  \includegraphics[width=4.09in,height=1.67in,keepaspectratio]{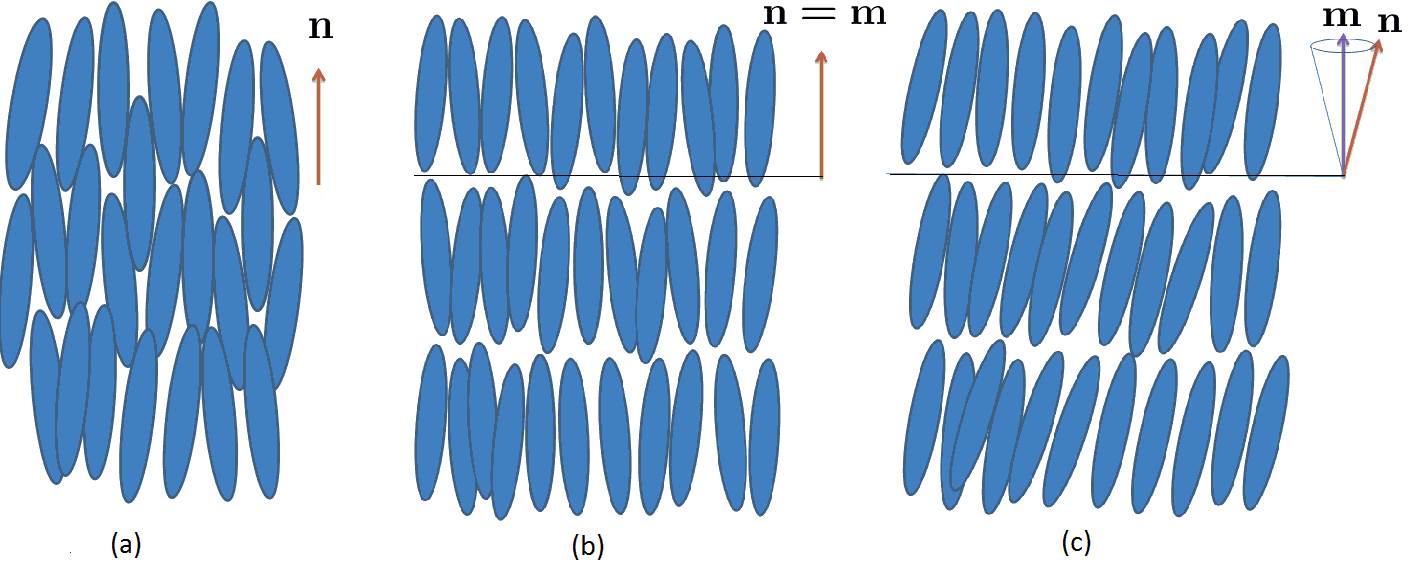}
  \caption{Nematic and smectic liquid crystal phases: (a) nematic phase, with director $\bf n, |\n|=1,$ giving the mean orientation of molecules, (b) smectic A phase, with $\bf n$ parallel to the layer normal $\bf m$, (c) smectic C phase, in which $\bf n$ makes a fixed angle with $\bf m$.}
  \label{nemsmectic}
\end{figure}
\begin{figure}[tbp] 
  \centering
  \includegraphics[width=2.00in,height=2in,keepaspectratio]{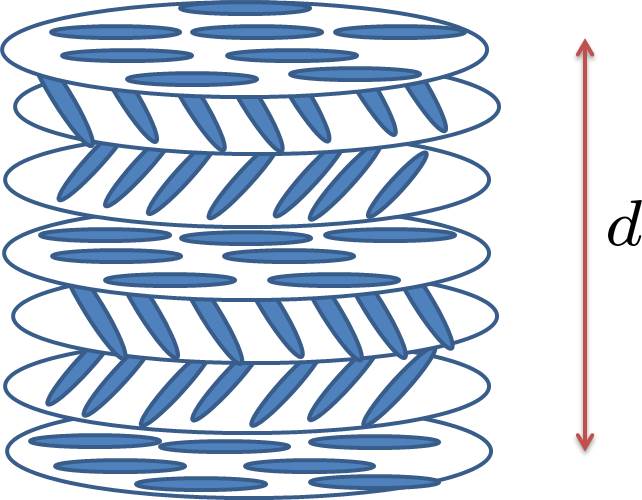}
  \caption{Cholesteric phase: the pitch $d$ corresponds to the distance, typically of the order of microns, over which the mean orientation of molecules rotates by $2\pi$.}
\label{cholesteric}
\end{figure}
There are other possible phases such as smectic B, which is similar to smectic A but with hexagonal ordering in the layers.

The nematic phase typically arises on cooling through a critical temperature as a phase transition from a higher temperature isotropic phase, in which the molecules have no long-range orientational or positional order, as illustrated in Fig. \ref{isotropicpic}. 
\begin{figure}[tbp] 
  \centering
  \includegraphics[width=2.09in,height=2in,keepaspectratio]{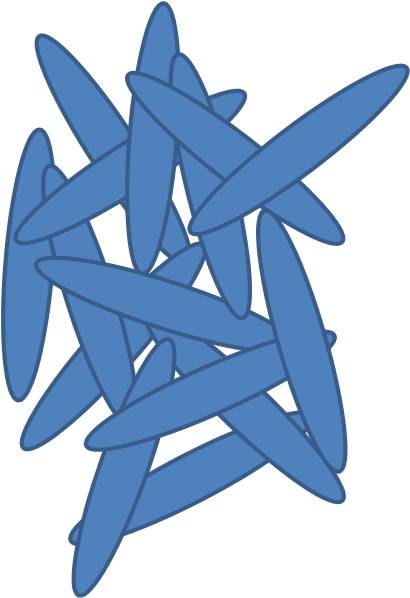}
  \caption{Isotropic phase with no orientational or positional order.}
  \label{isotropicpic}
\end{figure}
Thus for temperatures $\theta>\theta_c$ the material is an isotropic fluid, while for $\theta_m<\theta<\theta_c$ the material is in the nematic phase. For $\theta<\theta_m$ the material may be in another liquid crystal (e.g. smectic) or solid phase. For MBBA we have $\theta_m\sim 17^\circ C, \theta_c\sim 45^\circ C$. For videos of the isotropic to nematic phase transition see the website  
\url {\https://www.doitpoms.ac.uk/}.

Most liquid crystal displays are of {\it twisted nematic} type. A single pixel consists of nematic liquid crystal   confined between two parallel glass plates, at $x_3=0, \delta$ say, treated so that the director lies parallel to $\e_1$ on the plate $x_3=0$ and parallel to $\e_2$ on the plate $x_3=\delta$, where $\e_i$ denotes the unit vector in the $x_i$-direction. Assuming that these boundary conditions are exactly satisfied, the Oseen-Frank theory discussed later in these notes predicts that in equilibrium and in the absence of an applied electric field the director undergoes a pure twist having the form 
\be 
\label{twist}
\n(\x)=\left(\cos \frac{\pi x_3}{2\delta}, \sin \frac{\pi x_3}{2\delta},0\right).
\ee 
Attached to the glass plates are polarizers aligned at right-angles to each other and parallel to the easy axis prescribed on each plate. Plane polarized light passes through the first polarizer and is then twisted by the liquid crystal so that it passes through the second one, so that the pixel is bright. But if an electric field normal to the plates is applied, and if the nematic has been chosen to have a positive dielectric anisotropy, the molecules align parallel to the field, the light is not twisted, and the pixel is dark.

\section{Models and order parameters}
\label{modelsorderparameters}
\subsection{Molecular dynamics}Liquid crystals can be modelled with various degrees of precision. At a very detailed level one can describe and simulate the interactions between the atoms in each liquid crystal molecule, and between these atoms and those of other molecules, but of course such a detailed description is intractable for the very large number of molecules in typical applications. 

Somewhat more tractable is to carry out Monte Carlo or molecular dynamics simulations using an empirical potential for the interaction between molecules. One commonly used such potential is the Gay-Berne potential \cite{gayberne1981}, which models the molecules as ellipsoids of revolution, the interaction potential between a pair of molecules being a generalization of the Lennard-Jones potential between pairs of atoms or molecules that depends on the orientations of the ellipsoids and the vector joining their centres of mass. This potential predicts the existence of isotropic, nematic, smectic A and smectic B  phases (see, for example, \cite{migueletal1996,luckhurstsimmonds1993,zannoni2001}). It has been used in 
\cite{riccietal2010} to study the twisted nematic cell in a simulation using about $10^6$ molecules, confirming the twist structure and giving information on switching between the bright and dark states. 

Generally, given faith in the effectiveness of the potential, atomistic and molecular dynamics simulations can probe regions, such as near surfaces and defects, which are inaccessible to current microscopy, providing useful input to appropriate continuum models.

\subsection {Order parameters}
\label{orderparameters}
Despite the interest of molecular dynamics models, they are clearly inadequate for predicting and understanding macroscopic configurations of liquid crystals, for which a continuum description is essential. Among the variables necessary for such a continuum description are {\it order parameters} that describe the nature and degree of order in the liquid crystal. We have already introduced one such order parameter, the director $\n=\n(\x,t)$, a unit vector describing the mean orientation of the molecules at the point $\x$ and time $t$. In fact the sign of $\n$ has no physical meaning because of the statistical head-to-tail symmetry of the molecules, so that $\pm\n$ are physically equivalent. Thus it is better to think of the director not as a vector field but as a line field, i.e. for each $\x,t$ to identify the mean orientation of molecules with the line through the origin parallel to $\n(\x,t)$.  Lines through the origin form the {\it real projective plane} $\R P^2$, elements of which can be identified with antipodal pairs of unit vectors $\pm\p$ or with matrices $\p\otimes \p, \p\in S^2$, where $(\p\otimes\p)_{ij}=p_ip_j$ and $S^2=\{\p\in\R^3:|\p|=1\}$ denotes the unit sphere.

 In this course we will consider only static configurations of liquid crystals, in which the fluid velocity is zero (although the discussion of order parameters that follows applies more generally). Thus the continuum variables will depend on $\x$ and not on $t$. We represent a typical  liquid crystal molecule by a bounded open region $M\subset \R^3$ (rod, ellipsoid, parallepiped ...) of approximately the same shape and symmetry. We place $M$ in a standard position with centroid at the origin, so that 
\be 
\label{centroid}
\int_M\y\,d\y=\bzero.
\ee
 Denoting by $M^{3\times 3}$ the space of real $3\times 3$ matrices with inner product $\A\cdot\B=\tr \A^T\B$ and corresponding norm $|\A|=(\A\cdot\A)^\half$, we define the isotropy groups
\begin{eqnarray*}
G_M&=&\{\vR\in O(3):\vR M=M\},\\
G_M^+&=&\{\vR\in SO(3):\vR M=M\},
\end{eqnarray*}
where $O(3)=\{\vR\in M^{3\times 3}:\vR^T\vR=\1\}$ is the set of orthogonal matrices and $SO(3)=\{\vR\in O(3):\det \vR=1\}$ is the set of rotations\footnote{Here we consider only the shape of $M$ as being important. More generally we could require the invariance of a vector $\uu=\uu(\x), \x\in M,$ of additional molecular variables (such as   mass or charge density), defining corresponding isotropy groups $\tilde G_M=\{\vR\in O(3):\vR M=M, \uu(\vR\x)=\uu(\x) \mbox{ for all }\x\in M\}, \tilde G_M^+=\{\vR\in SO(3):\vR M=M, \uu(\vR\x)=\uu(\x) \mbox{ for all }\x\in M\}$.}. Note that by \eqref{centroid} the centroid of $\vR M$ is zero for all $\vR\in O(3)$.

We say that the molecule represented by $M$ is {\it chiral} (as in cholesterics) if no reflection of $M$ is a rotation of $M$, that is $(\1-2\e\otimes\e) M\neq \vR M$ for any unit vector $\e$ and any $\vR\in  SO(3)$, which is easily seen to be equivalent to the condition that $G_M=G_M^+$. 

Note that  $\vR M=\tilde \vR M$ for $\vR,\tilde \vR\in SO(3)$ if and only if  $\tilde \vR^T \vR\in G_M^+$. Hence the orientation of a molecule can be represented (c.f. Mermin \cite{mermin1979}) by an element of the (left) {\it space of cosets} $SO(3)/G_M^+$ consisting of the distinct sets $\vR G_M^+$ where $\vR\in SO(3)$. For $M$ a circular cylindrical rod or ellipsoid of revolution with long axis parallel to the unit vector $\e_1$ in the $x_1-$direction we have\footnote{For example, in the case of the ellipsoid of revolution $M=\{\x=(x_1,x_2,x_3): \frac{x_1^2}{a^2}+\frac{x_2^2+x_3^2}{b^2}<1\}$, with semimajor axes $a>0, b>0, a\neq b$, if $\hat\vR M=M$ then $\hat\vR\partial M=\partial M$, and since $|\pm\hat\vR a\e_1|=a$ and the only points of $\partial M$ distant $a$ from $\bzero$ are $\pm a \e_1$ we have that $\hat\vR \e_1=\pm\e_1$. Conversely, if $\hat\vR\e_1=\pm\e_1$ then it is easily checked that $\hat\vR M=M$.} that $G_M^+=\{\hat\vR\in SO(3): \hat\vR\e_1=\pm\e_1\}$. Hence in these cases each coset $\vR G_M^+$ has the form
\begin{eqnarray*}
\vR G_M^+&=&\{\vR\hat\vR: \hat\vR\in SO(3), \hat\vR \e_1=\pm\e_1\}\\
&=&\{\tilde\vR:\tilde \vR\in SO(3):\tilde \vR\e_1=\pm\vR\e_1\}
\end{eqnarray*}
and thus can be identified with the rotations $\tilde \vR$ mapping the line joining $\pm \e_1$ to the line joining $\pm\p$, where $\p=\vR\e_1$. Hence the possible orientations of such a molecule can be identified with the elements $\p\otimes\p$ of $\R P^2$, as is intuitively clear.

Consider a liquid crystal filling a container $\om$, which we take to be a bounded open subset of $\R^3$ having sufficiently regular (e.g. Lipschitz) boundary. We suppose that the liquid crystal molecules are rod-like, so that as described above their orientations can be identified with elements of $\R P^2$. We adopt a coarse-graining procedure which up to now has not been justified rigorously.
Pick a point $\x\in \om$ and a small radius $\delta>0$. We suppose that $\delta$ is   sufficiently small so that  the ball $B(\x,\delta)$ can be identified with the material point $\x$ (i.e. $\delta$ is small on a  macroscopic length-scale), but large enough to contain enough molecules for a statistical description to be valid. To get an idea of the orders of magnitude, if $\delta=1\mu$m then for a typical liquid crystal $B(\x,\delta)$ will contain about $10^9$ molecules. Picking molecules at random from those $N=N(\x)$ molecules lying entirely within $B(\x,\delta)$ (see Fig. \ref{picking})
\begin{figure}[tbp] 
  \centering
  \includegraphics[width=4.17in,height=3.48in,keepaspectratio]{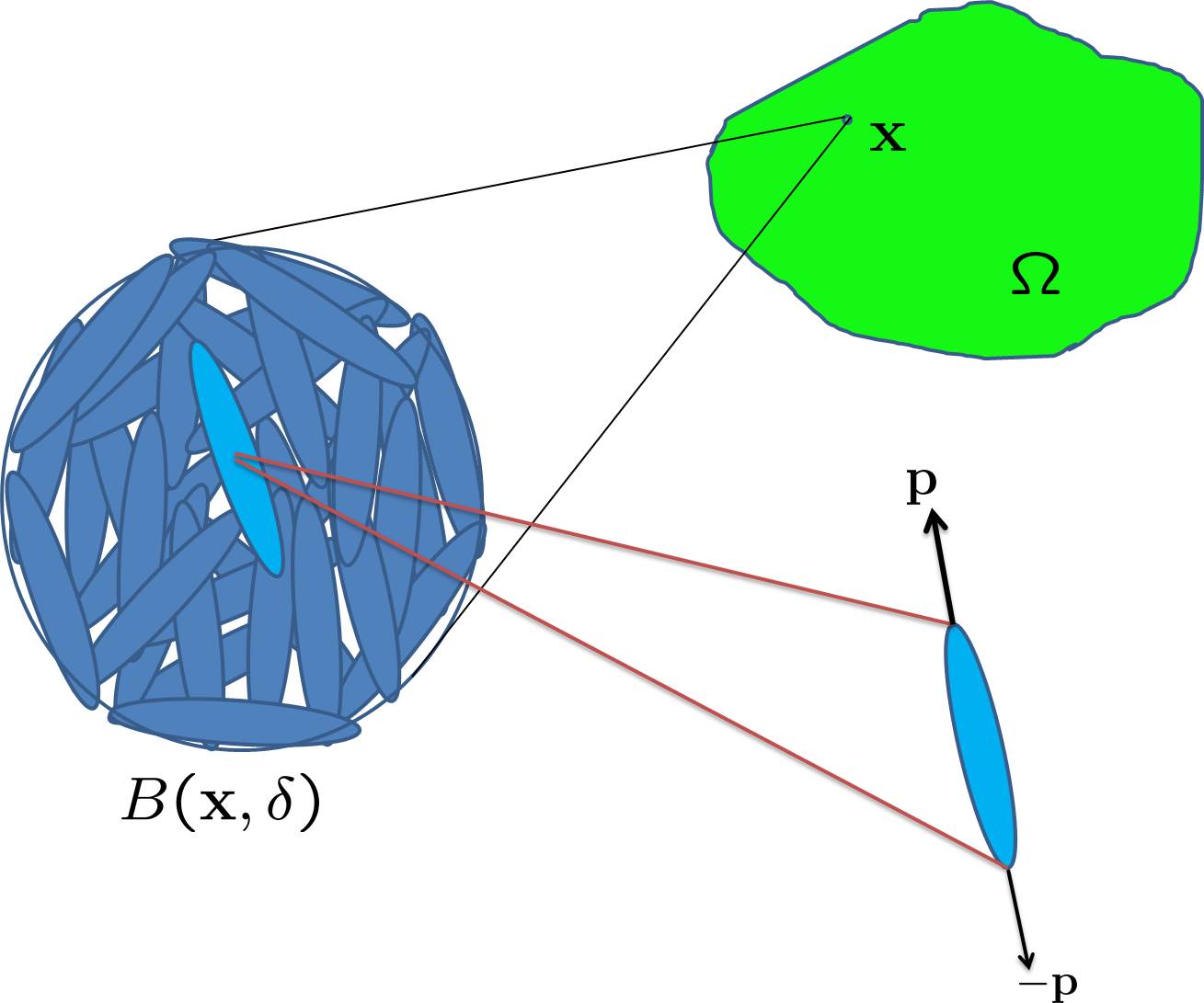}
  \caption{Picking molecules at random from those within $B(\x,\delta)$. }
  \label{picking}
\end{figure} 
we obtain a probability measure on $\R P^2$ for  the orientations of molecules in $B(\x,\delta)$, or equivalently  a probability measure $\mu=\mu_{\x}$ on the unit sphere $S^2$,  given by
\be 
\label{prob}
\mu_x=\frac{1}{N}\sum_{i=1}^N \half\left(\delta_{\p_i}+ \delta_{-\p_i}\right),
\ee
where $\pm\p_i$ denotes the orientation of the $i$th molecule. Here we take the point of view that we sample the orientations of molecules at a fixed time; however the molecules are in thermal motion, and by averaging the resulting probability measure over a macroscopically small time interval the value of $\delta$ for an effective statistical description could be reduced. More generally we will consider $\mu_\x$ to be a (Borel) probability measure on $S^2$ satisfying the head-to-tail symmetry condition 
\be 
\label{headtotail}
\mu_\x(E)=\mu_\x(-E) \mbox{ for all }\mu_\x-\mbox{measurable }E\subset S^2.
\ee
   For the time being we suppress the dependence of $\mu_\x$ on $\x$, denoting it simply by $\mu$, and we write $\langle g(\p)\rangle=\int_{S^2}g(\p)\,d\mu(\p)$ for any scalar, vector, or tensor $g=g(\p)$.

As an example, the measure 
\be 
\label{perfectalignment}
\mu=\half\left(\delta_{\e}+ \delta_{-\e}\right)
\ee
represents a state of perfect alignment of the molecules parallel to the unit vector $\e$. Such a state of perfect alignment being unrealistic, we will often consider $\mu$ to be a   continuously distributed measure  $d\mu(\p)=\rho(\p)\,d\p$, where $d\p$ denotes  the  surface area element on $S^2$ and $\rho\in L^1(S^2)$, $\rho\geq 0$, $\int_{S^2}\rho(\p)\,d\p=1$, $\rho(\p)=\rho(-\p)$ for a.e. $\p\in S^2$, which we can think of as a good approximation to the empirical measure in \eqref{prob} for $N$ large.

If the orientation of molecules is equally distributed
 in all directions, we say that the distribution is {\it isotropic}, and then $\mu=\mu_0$, where
$$  d\mu_0(\p)=\frac{1}{4\pi}d\p,$$
for which $\rho(\p)=\frac{1}{4\pi}$.

A natural idea would be to use as an order parameter the probability measure $\mu=\mu_{\x}$.  However this represents an infinite-dimensional state variable at each point $\x$, so it makes sense to use instead a finite-dimensional approximation consisting of a  finite number of {moments} of $\mu$. Because of \eqref{headtotail}  the first moment vanishes:
$$\int_{S^2}\p\,d\mu(\p)=0.$$
The second moment 
$$\M=\int_{S^2}\p\otimes \p\,d\mu(\p)$$
is a symmetric non-negative tensor satisfying $\tr\M=1$. The second moment tensor of the isotropic distribution $\mu_0$, $d\mu_0=\frac{1}{4\pi}d\p$, is
$$ \M_0=\frac{1}{4\pi}\int_{S^2}\p\otimes \p\,d\p=\frac{1}{3}{\bf 1}$$
(since $\int_{S^2}p_1p_2\,d\p=0,\;\int_{S^2}p_1^2\,d\p=\int_{S^2}p_2^2\,d\p$ etc, 
and ${\rm tr}\, \M_0=1$.)  The {\it de Gennes $\Q$-tensor} 
$$\Q=\M-\M_0=\int_{S^2}\left(\p\otimes \p-\frac{1}{3}{\bf 1}\right)d\mu(\p)$$ thus
measures the deviation of $\M$ from its isotropic value, and $\Q=\Q^T,\; \tr \Q=\bzero,\; \Q\geq-\frac{1}{3}{\bf 1}$ (i.e. $(\Q+\frac{1}{3}\1)\e\cdot\e\geq 0$ for all $\e\in S^2$).
(Note that whereas by construction $\Q=\bzero$ if $\mu=\mu_0$,  $\Q=\bzero$ does not imply $\mu=\mu_0$.
 For example we can take
$\mu=\frac{1}{6}\sum_{i=1}^3(\delta_{\e_i}+\delta_{-\e_i}).$)

Let us define
\be 
\label{E}{\mathcal E}=\{\Q\in M^{3\times 3}:\Q=\Q^T, {\rm tr}\,\Q=0\}.
\ee
Then it is easily checked that $\mathcal E$ is a 5-dimensional subspace of $M^{3\times 3}$ with orthonormal basis $\{\vE_1,\vE_2,\vE_3,\vE_4,\vE_5\}$, where
\be 
\label{onb} \vE_1=\frac{1}{\sqrt 2}(\e_2\otimes\e_3+\e_3\otimes\e_2),&& \vE_2=\frac{1}{\sqrt 2}(\e_3\otimes\e_1+\e_1\otimes\e_3),\nonumber\\ &&\hspace{-.8in}\vE_3=\frac{1}{\sqrt 2}(\e_1\otimes\e_2+\e_2\otimes\e_1),\\
\vE_4=\frac{1}{\sqrt 2}(\e_1\otimes\e_1-\e_3\otimes\e_3),&& \vE_5=\frac{1}{\sqrt 2}(\e_2\otimes\e_2-\e_3\otimes\e_3)\nonumber.
\ee
It is sometimes convenient to express an arbitrary $\Q\in\mathcal E$ in terms of this basis,
writing
\be 
\label{expandQ}
\Q=\sum_{i=1}^5q_i\vE_i,
\ee
where $q_i=\Q\cdot\vE_i$.

Since $\Q\in\mathcal E$, $\Q$ has a spectral decomposition
$$\Q=\lambda_1\n_1\otimes \n_1+\lambda_2\n_2\otimes \n_2+\lambda_3\n_3\otimes \n_3,$$
where $\{\n_i\}$ is an orthonormal basis of eigenvectors of $\Q$ with
corresponding eigenvalues $\lambda_i=\lambda_i(\Q)$ satisfying
$\lambda_1+\lambda_2+\lambda_3=0$. Since $\Q\geq -\frac{1}{3}{\bf 1}$, each $\lambda_i\geq-\frac{1}{3}$ 
and hence 
\be 
\label{evconstraints}
-\frac{1}{3}\leq \lambda_i\leq \frac{2}{3}.
\ee
Conversely, if each  $\lambda_i\geq -\frac{1}{3}$ then $\M$ is the second moment tensor for some $\mu$, e.g. for
$$\mu=\sum_{i=1}^3\left(\lambda_i+\frac{1}{3}\right)\frac{1}{2}(\delta_{\n_i}+\delta_{-\n_i}).$$

We can order the eigenvalues as $$\lambda_{\rm min}(\Q)\leq\lambda_{\rm mid}(\Q)\leq\lambda_{\rm max}(\Q)$$ with
$\n_{\rm max}=\n_{\rm max}(\Q),\,\n_{\rm mid}=\n_{\rm mid}(\Q),\, \n_{\rm min}=\n_{\rm min}(\Q)$  corresponding orthonormal eigenvectors. If $\lambda_{\rm min}(\Q)=-\frac{1}{3}$ then  we have $\Q\n_{\rm min}\cdot \n_{\rm min} =-\frac{1}{3}$, and hence
$$\int_{S^2}(\p\cdot \n_{\rm min})^2d\mu(\p)=0,$$
so that
$\mu$ is supported on the great circle of $S^2$ 
perpendicular to $\n_{\rm min}$. In particular, if $\mu$ is continuously distributed then the inequalities in \eqref{evconstraints} are strict.
If  also $\lambda_{\rm max}(\Q)=\frac{2}{3}$, so that $\lambda_{\rm min}(\Q)=\lambda_{\rm mid}(\Q)=-\frac{1}{3}$, then
$$\M\n_{\rm max}\cdot \n_{\rm max}=\int_{S^2}(\p\cdot \n_{\rm max})^2d\p=1,$$
and hence 
$$\int_{S^2}|\p\otimes \p-\n_{\rm max}\otimes \n_{\rm max}|^2d\mu=0,$$
  so that $\mu=\frac{1}{2}(\delta_{\n_{\rm max}}+\delta_{-\n_{\rm max}})$. 

Recall that we defined the director $\n$ as being the mean orientation of molecules. We can express this by looking for the $\n\in S^2$ that minimize
$$\int_{S^2}|\p\otimes\p-\n\otimes\n|^2d\mu(\p)= 2\int_{S^2}(1-(\p\cdot\n)^2)\,d\mu(\p)=2\left(\frac{2}{3}-\Q\n\cdot\n\right).$$
Thus the minimizers are $\n=\pm\n_{\rm max}(\Q)$. 

If two eigenvalues of $\Q$ are equal then $\Q$ is said to be {\it uniaxial} and has the form
\be 
\label{uniaxial}
\Q=s\left(\n\otimes \n-\frac{1}{3}{\bf 1}\right),
\ee
where $ \n\in S^2$ and the {\it scalar order parameter} $s\in[-\frac{1}{2},1]$ (with $s\in(-\frac{1}{2},1)$ if $\n$ is continuously distributed). Otherwise 
$\Q$ is {\it biaxial}. Provided $s>0$ the maximum eigenvalue $\lambda_{\rm max}(\Q)=\frac{2}{3}s$ of a uniaxial $\Q$ has multiplicity one, so that the $\n$ in \eqref{uniaxial} can be identified up to sign with the director. If $\Q$ is biaxial then $\lambda_{\rm max}(\Q)$ again has multiplicity one, so that the director is also well defined. In fact it is   difficult to experimentally  observe $\Q$ that are not very close to uniaxial with a nearly constant value of $s$ (typically in the range $0.6-0.7$). We will see why this is to be expected later. 

In order to give a more direct interpretation of $s$, note that  
\begin{eqnarray*}
\Q\n\cdot \n&=&\frac{2s}{3} =\langle(\p\cdot \n)^2-\frac{1}{3}\rangle\\ &=&
\langle\cos^2\theta-\frac{1}{3}\rangle,
\end{eqnarray*}
where $\theta$ is the angle between $\p$ and $\n$. Hence
$$s=\frac{3}{2}\langle \cos^2\theta-\frac{1}{3}\rangle.$$

\begin{prop} 
\label{uniaxialcondn}The tensor 
$\Q\in\mathcal E$ is uniaxial with scalar order parameter $s$ if and only if
\be 
\label{unicondn}
|\Q|^2=\frac{2s^2}{3},\;\det\Q=\frac{2s^3}{27}.
\ee
\end{prop}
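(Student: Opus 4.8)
The plan is to reduce everything to statements about the eigenvalues $\lambda_1,\lambda_2,\lambda_3$ of $\Q$. Since $\Q\in\mathcal E$ is symmetric and traceless, the spectral decomposition applies and $\lambda_1+\lambda_2+\lambda_3=0$, while the two quantities in the proposition are the isotropic invariants $|\Q|^2=\tr\Q^2=\lambda_1^2+\lambda_2^2+\lambda_3^2$ and $\det\Q=\lambda_1\lambda_2\lambda_3$. I would record at the outset that, by \eqref{uniaxial}, a uniaxial $\Q$ with scalar order parameter $s$ has eigenvalue $\frac{2s}{3}$ in the direction $\n$ and $-\frac{s}{3}$ (twice) in the orthogonal plane.

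For the forward implication I would simply substitute these eigenvalues, obtaining $|\Q|^2=\frac{4s^2}{9}+\frac{s^2}{9}+\frac{s^2}{9}=\frac{2s^2}{3}$ and $\det\Q=\frac{2s}{3}\cdot\frac{s^2}{9}=\frac{2s^3}{27}$, which is exactly \eqref{unicondn}.

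The converse carries the real content, and I would handle it through the characteristic polynomial. Because $\tr\Q=0$, the elementary symmetric functions of the eigenvalues are $e_1=0$, $e_2=\lambda_1\lambda_2+\lambda_2\lambda_3+\lambda_3\lambda_1=\frac12(e_1^2-|\Q|^2)=-\frac12|\Q|^2$, and $e_3=\det\Q$. Hence \eqref{unicondn} forces $e_1=0$, $e_2=-\frac{s^2}{3}$, $e_3=\frac{2s^3}{27}$, so the characteristic polynomial is pinned down as
\[
t^3-e_1t^2+e_2t-e_3=t^3-\frac{s^2}{3}\,t-\frac{2s^3}{27}.
\]
I would then check that this factors as $\left(t-\frac{2s}{3}\right)\left(t+\frac{s}{3}\right)^2$, so the eigenvalues are exactly $\frac{2s}{3},-\frac{s}{3},-\frac{s}{3}$. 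Taking a unit eigenvector $\n$ for $\frac{2s}{3}$ (any direction when $s=0$ and $\Q=\bzero$), the spectral decomposition becomes $\Q=\frac{2s}{3}\,\n\otimes\n-\frac{s}{3}(\1-\n\otimes\n)=s\left(\n\otimes\n-\frac13\1\right)$, which is precisely uniaxiality with scalar order parameter $s$.

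The one point needing care is that \eqref{unicondn} determines $s$ unambiguously, not merely $|s|$: although $|\Q|^2=\frac{2s^2}{3}$ is even in $s$, the condition $\det\Q=\frac{2s^3}{27}$ is odd, so the sign of $\det\Q$ selects the correct sign of $s$ and no spurious second solution arises. I expect the factorization of the cubic to be the main step, though it is still elementary; everything else is bookkeeping with the symmetric functions of the eigenvalues.
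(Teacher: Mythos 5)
Your proof is correct and follows essentially the same route as the paper: the converse is handled by exactly the same computation of the elementary symmetric functions and the factorization of the characteristic cubic as $\left(\lambda+\frac{s}{3}\right)^2\left(\lambda-\frac{2s}{3}\right)$. The only cosmetic difference is in the necessity direction, where you substitute the eigenvalues $\frac{2s}{3},-\frac{s}{3},-\frac{s}{3}$ directly while the paper invokes the identity $\det(\1+\az\otimes\bz)=1+\az\cdot\bz$; both are one-line computations.
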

\begin{proof}
That conditions \eqref{unicondn} are necessary is an easy computation using the formula $\det(\1+\az\otimes\bz)=1+\az\cdot\bz$. Conversely, if \eqref{unicondn} holds then the eigenvalues $\lambda_i$ of $\Q$ satisfy
\begin{eqnarray*}
\lambda_1+\lambda_2+\lambda_3&=&0,\\
\lambda_1^2+\lambda_2^2+\lambda_3^2&=&\frac{2s^2}{3},\\
\lambda_1\lambda_2\lambda_3&=&\frac{2s^3}{27},
\end{eqnarray*}
from which it follows that $\lambda_1\lambda_2+\lambda_2\lambda_3+\lambda_3\lambda_1=-\half(\lambda_1^2+\lambda_2^2+\lambda_3^2)=-\frac{s^2}{3}$. Thus the characteristic equation for $\Q$ is
$$\lambda^3-\frac{s^2}{3}\lambda-\frac{2s^3}{27}=0,$$
which factorizes as
$$\left(\lambda+\frac{s}{3}\right)^2\left(\lambda-\frac{2s}{3}\right)=0.$$
Letting $\n$ be the eigenvector corresponding to the eigenvalue $\frac{2s}{3}$ we obtain \eqref{uniaxial}.
\end{proof}
\begin{cor}
 Necessary and sufficient conditions for $\Q\in\mathcal E$ to be uniaxial with scalar order parameter $s\in[-\half,1]$ are that 
\be 
\label{uni}
|\Q|^6=54(\det\Q)^2,\;\det\Q\in \frac{2}{27}[-{\textstyle\frac{1}{8}},1].
\ee
\end{cor}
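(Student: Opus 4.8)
The plan is to eliminate the parameter $s$ from the two equations \eqref{unicondn} in Proposition~\ref{uniaxialcondn} and to verify that the single polynomial identity in \eqref{uni}, together with the stated range for $\det\Q$, recovers exactly those two equations with $s$ in the admissible interval $[-\half,1]$. Thus the whole Corollary reduces to a bookkeeping of powers of $s$ plus one monotonicity observation, after which the Proposition does the real work.

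For \emph{necessity} I would start from \eqref{unicondn}. Cubing $|\Q|^2=\frac{2s^2}{3}$ gives $|\Q|^6=\frac{8s^6}{27}$, while squaring $\det\Q=\frac{2s^3}{27}$ gives $(\det\Q)^2=\frac{4s^6}{729}$; since $54\cdot\frac{4}{729}=\frac{8}{27}$, comparing the two yields $|\Q|^6=54(\det\Q)^2$. The range statement is then immediate: as $s$ runs over $[-\half,1]$ the cube $s^3$ runs over $[-\frac{1}{8},1]$, so $\det\Q=\frac{2}{27}s^3\in\frac{2}{27}[-\frac{1}{8},1]$.

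For \emph{sufficiency} the key idea is that, although squaring has discarded the sign of $\det\Q$ in $|\Q|^6=54(\det\Q)^2$, one should reintroduce $s$ through $\det\Q$ rather than through $|\Q|$, because it is the determinant that fixes the sign. Concretely, I would \emph{define} $s$ to be the unique real cube root determined by $\det\Q=\frac{2s^3}{27}$, i.e. $s=\left(\frac{27}{2}\det\Q\right)^{1/3}$. The hypothesis $\det\Q\in\frac{2}{27}[-\frac{1}{8},1]$ is precisely the statement $s^3\in[-\frac{1}{8},1]$, hence $s\in[-\half,1]$, placing $s$ in the admissible range. It then remains only to recover the first equation of \eqref{unicondn}: substituting $\det\Q=\frac{2s^3}{27}$ into $|\Q|^6=54(\det\Q)^2$ gives $|\Q|^6=\frac{8s^6}{27}=\left(\frac{2s^2}{3}\right)^3$, and since both $|\Q|^2$ and $\frac{2s^2}{3}$ are nonnegative, taking cube roots yields $|\Q|^2=\frac{2s^2}{3}$. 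With both conditions of \eqref{unicondn} now in hand, Proposition~\ref{uniaxialcondn} shows that $\Q$ is uniaxial with scalar order parameter $s\in[-\half,1]$, completing the argument.

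There is no serious obstacle here; the only point requiring care — and the one genuine reason the range hypothesis cannot be dropped — is the sign/branch issue in passing from the squared identity back to $s$. The equation $|\Q|^6=54(\det\Q)^2$ alone determines $s$ only up to sign via $|\Q|^2=\frac{2s^2}{3}$, and it is exactly the interval condition on $\det\Q$ (equivalently on $s^3$) that both selects the correct sign and confines $s$ to $[-\half,1]$. Without it one could conclude only that $\Q$ is uniaxial for \emph{some} real $s$, not necessarily a physically admissible scalar order parameter.
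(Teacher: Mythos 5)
Your argument is correct and is precisely the content of the paper's one-line proof, which simply asserts that \eqref{unicondn} holds for some $s\in[-\half,1]$ if and only if \eqref{uni} does; you have filled in the elimination of $s$ in both directions, including the right way to handle the sign by recovering $s$ from $\det\Q$ rather than from $|\Q|$. No gaps.
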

\begin{proof}
\eqref{unicondn} holds for some $s\in[-\half,1]$ if and only if \eqref{uni} does.
\end{proof}

Thus for nematic liquid crystals we have various possible choices for the order parameter:
\begin{itemize}
  \item the probability density function $\rho$ ($\infty$-dimensional, used in Onsager and Maier-Saupe models),
  \item $\Q$ (5-dimensional, used in the Landau - de Gennes theory),
  \item the pair $(s,\n)$  (3-dimensional, Ericksen theory \cite{ericksen1991liquid}),
  \item $\n$ (2-dimensional, Oseen-Frank theory).
  \end{itemize}  
We discuss these choices and models in the following sections.

\section{The isotropic to nematic phase transition}
\label{isotropicnematic}
We discuss this \\ (a) for  models in which the order parameter is the probability density function $\rho=\rho(\p)$,\\
(b) for a model in which the order parameter is $\Q$.\\
In both cases we assume that the order parameter is independent of $\x$ and look for minimizers of a corresponding free energy.

\subsection {Description using the probability density function}
There are two classical models, the Onsager and Maier-Saupe models, in both of which 
the probability measure  $\mu$ is assumed to be
 continuously distributed with density $\rho=\rho(\p)\in L^1(S^2)$, $\rho\geq 0$, $\int_{S^2}\rho(\p)\,d\p=1$, $\rho(\p)=\rho(-\p)$ for a.e. $\p\in S^2$, and in which the bulk free energy per particle at temperature $\theta>0$ has the form 
\be 
\label{onsager}
I_\theta(\rho)= U(\rho)-\theta \eta(\rho),
\ee
where $\eta(\rho)$ is an entropy term given by
$$ \eta(\rho)=-k_B\int_{S^2}\rho(\p)\ln\rho(\p)\,d\p,$$
 $k_B$ is Boltzmann's constant, and $U$  is an interaction term given by 
$$U(\rho)=\half\int_{S^2}\int_{S^2}K(\p,\q)\rho(\p)\rho(\q)\,d\p\,d\q.$$
We assume that the kernel $K:S^2\times S^2\to\R$ is frame-indifferent, so that 
$$K(\vR\p,\vR\q)=K(\p,\q) \mbox{ for all }\vR\in SO(3),$$
which due to a result of Cauchy (see \cite[p. 29]{truesdellnoll}) holds if and only if
$$K(\p,\q)=k(\p\cdot \q)$$
for some $k:[-1,1]\to \mathbb R.$ In the mean-field Maier-Saupe theory $U(\rho)$ is an internal energy term with $k$   given by
\be 
\label{MS}k(\p\cdot \q)=2\kappa\left(\frac{1}{3}-(\p\cdot \q)^2\right),
\ee
where $\kappa$ is a constant independent of temperature. In the Onsager theory, which corresponds to the case of a suspension of liquid crystal molecules in a solvent,   $U(\rho)$  represents positional entropy, with 
\be 
\label{Onsager}k(\p\cdot \q)=2\kappa \sqrt{1-(\p\cdot \q)^2},
\ee
where   $\kappa$ is proportional to both the temperature and  concentration. Denoting by
\be 
\label{Qrho}\Q(\rho)=\int_{S^2}\left(\p\otimes \p-\frac{1}{3}{\bf 1}\right)\rho(\p)\,d\p 
\ee
the corresponding $\Q$-tensor, we have that
\begin{eqnarray*}
|\Q(\rho)|^2&=&\int_{S^2}\int_{S^2}\left(\p\otimes \p-\frac{1}{3}{\bf 1}\right)\cdot\left(\q\otimes \q-\frac{1}{3}{\bf 1}\right)\rho(\p)\rho(\q)\,d\p\,d\q\\
&=&\int_{S^2}\int_{S^2}\left((\p\cdot \q)^2-\frac{1}{3}\right)\rho(\p)\rho(\q)\,d\p\,d\q.
\end{eqnarray*}
Hence for the Maier-Saupe potential $U(\rho)=-\kappa |\Q(\rho)|^2$ and 
\be 
\label{MS1}
I_\theta(\rho)=k_B\theta\int_{S^2}\rho(\p)\ln\rho(\p)\,d\p-\kappa|\Q(\rho)|^2.
\ee

Critical points of $I_\theta$ are solutions of the corresponding Euler-Lagrange equation obtained formally by setting $\frac{d}{d\tau}I_\theta(\rho+\tau u)|_{\tau=0}=0$ for $u$ satisfying $\int_{S^2}u(\p)\,d\p=0$, namely
\be 
\label{EL}
k_B\theta\ln \rho(\p)+\int_{S^2}k(\p\cdot\q)\rho(\q)\,d\q=c,
\ee
where $c$ is a constant. One solution of \eqref{EL} is the isotropic state $\rho(\p)=\frac{1}{4\pi}$. As shown in Fatkullin \& Slastikov \cite{FatkullinSlastikov2005}, Liu, Zhang \& Zhang \cite{LiuZhangZhang2005}, for the Maier-Saupe kernel all solutions can be determined explicitly and have the axially symmetric form 
\be 
\label{MSsolns}
\rho(\p,\e)=\frac{1}{4\pi\int_0^1\exp(-\sigma z^2)\,dz}\exp(-\sigma (\p\cdot\e)^2),
\ee
where $\e\in S^2$, and $\sigma$  is a function of the dimensionless parameter $\alpha=\frac{2\kappa}{k_b\theta}$,  the solution with $\sigma=0$ corresponding to the isotropic state. Up to rotation (that is, making different choices of $\e$) there can be 1, 2 or 3 distinct solutions depending on the value of $\alpha$. There is a transcritical bifurcation from the isotropic state at $\alpha=\frac{15}{2}$. The situation for the Onsager kernel is more complicated, since there are infinitely many bifurcation points from the isotropic state. However, by using techniques of equivariant bifurcation theory, expansions in spherical harmonics and variational arguments Vollmer \cite{vollmer2016} (see also Wachsmuth \cite{wachsmuth}) shows that there is a transcritical bifurcation to an axially symmetric solution, together with rotations of it, at the least bifurcation point $\alpha=\frac{32}{\pi}$, and she establishes other properties of the set of solutions, though a complete understanding of this set remains open.

\subsection{Description using a $\Q$-tensor model.}
We suppose that for a homogeneous (that is $\x$-independent) configuration  the free energy per unit volume (the bulk energy density) is given by a function $\psi_B(\Q,\theta)$ defined for trace-free symmetric $\Q=\int_{S^2}(\p\otimes \p-\frac{1}{3}\1)\,d\mu(\p)$ and an interval of temperatures $\theta$.

Consider two observers,  the first using the Cartesian coordinate system $\x=(x_1,x_2,x_3)$, and the second using translated and rotated coordinates $\z=\bar \x+\vR(\x-\bar \x)$, where $\bar\x\in\R^3, \vR\in SO(3)$. We require that both observers measure the same temperature and  free-energy density, that is
$$\psi_B(\Q^*,\theta)=\psi_B(\Q,\theta),$$
where $\Q^*$ is the value of $\Q$ measured by the second observer. Since 
\begin{eqnarray*}
\Q^*&=&\int_{S^2}\left(\q\otimes \q-\frac{1}{3}{\bf 1}\right)d\mu(\vR^T\q)\\
&=&\int_{S^2}\left(\vR\p\otimes \vR\p-\frac{1}{3}{\bf 1}\right)d\mu(\p)\\
&=&\vR\int_{S^2}\left(\p\otimes \p-\frac{1}{3}{\bf 1}\right)d\mu(\p)\vR^T,
\end{eqnarray*}
we deduce that  $\Q^*=\vR\Q\vR^T$ and so obtain the frame-indifference (isotropy) condition
\be 
\label{fi}
\psi_B(\vR\Q\vR^T,\theta)=\psi_B(\Q,\theta) \mbox{ for all } \vR\in SO(3).
\ee
In order to characterize functions satisfying \eqref{fi} we make use of the following standard result, giving a proof for the convenience of the reader.
\begin{lem}
\label{isotropic}
A function $f(\Q)$ of a real, symmetric, $3\times 3$ matrix $\Q$ is isotropic, that is
\be 
\label{iso} f(\vR\Q\vR^T)=f(\Q)\mbox{ for all }\vR\in SO(3),
\ee
if and only if $f(\Q)=g(\tr\Q,\tr\Q^2,\tr\Q^3)$ for some function $g$, and if $f$ is a polynomial so is $g$.
\end{lem}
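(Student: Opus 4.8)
The plan is to prove both implications, with the forward (``if'') direction being immediate and the reverse (``only if'') direction carrying all the content. For the easy direction, suppose $f(\Q)=g(\tr\Q,\tr\Q^2,\tr\Q^3)$. Since $\vR^T\vR=\1$ and the trace is invariant under cyclic permutation, $\tr((\vR\Q\vR^T)^k)=\tr(\vR\Q^k\vR^T)=\tr(\Q^k)$ for every $k$, so $f(\vR\Q\vR^T)=f(\Q)$ and $f$ is isotropic. Note this uses nothing beyond orthogonality of $\vR$.

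For the reverse direction, I would first reduce an arbitrary symmetric $\Q$ to a diagonal matrix. By the spectral theorem there is an orthogonal $\mathbf{U}$ with $\mathbf{U}^T\Q\mathbf{U}=\mathbf{D}:=\diag(\lambda_1,\lambda_2,\lambda_3)$; if $\det\mathbf{U}=-1$ I replace $\mathbf{U}$ by $\mathbf{U}\,\diag(1,1,-1)$, which still conjugates $\Q$ to $\mathbf{D}$ (a diagonal sign change leaves a diagonal matrix invariant under conjugation) and now has determinant $+1$. Thus I may take the diagonalizing matrix $\vR\in SO(3)$, and isotropy \eqref{iso} gives $f(\Q)=f(\mathbf{D})$, so $f$ depends on $\Q$ only through its unordered eigenvalues. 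Writing $\phi(\lambda_1,\lambda_2,\lambda_3):=f(\mathbf{D})$, the next step is to show $\phi$ is a \emph{symmetric} function of its arguments. It suffices to realize each transposition of eigenvalues by a rotation: the rotation by angle $\pi$ about the axis $\frac{1}{\sqrt2}(\e_1+\e_2)$ sends $\e_1\mapsto\e_2$, $\e_2\mapsto\e_1$, $\e_3\mapsto-\e_3$, lies in $SO(3)$, and conjugating $\mathbf{D}$ by it interchanges $\lambda_1,\lambda_2$ while fixing $\lambda_3$; the analogous rotations handle the remaining transpositions, so $\phi$ is invariant under all of $S_3$.

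It then remains to express a symmetric function of $(\lambda_1,\lambda_2,\lambda_3)$ through the power sums $p_k=\tr\Q^k=\sum_i\lambda_i^k$, $k=1,2,3$. The key observation is that the map sending an unordered triple of eigenvalues to $(p_1,p_2,p_3)$ is injective: by Newton's identities (valid in characteristic zero) the power sums determine the elementary symmetric polynomials
\[
e_1=p_1,\qquad e_2=\tfrac12(p_1^2-p_2),\qquad e_3=\tfrac16(p_1^3-3p_1p_2+2p_3),
\]
hence the characteristic polynomial of $\Q$, hence the multiset $\{\lambda_1,\lambda_2,\lambda_3\}$. Because $\phi$ is constant on $S_3$-orbits and those orbits are separated by $(p_1,p_2,p_3)$, there is a well-defined $g$ on the image with $\phi(\lambda_1,\lambda_2,\lambda_3)=g(p_1,p_2,p_3)$, giving $f(\Q)=g(\tr\Q,\tr\Q^2,\tr\Q^3)$. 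For the polynomial refinement, if $f$ is a polynomial in the entries of $\Q$ then $\phi$ is a symmetric polynomial, so by the fundamental theorem of symmetric polynomials it is a polynomial in $e_1,e_2,e_3$; substituting the displayed expressions for the $e_k$ in terms of the $p_k$ shows $g$ may be taken polynomial.

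I expect the main obstacle to be the reverse direction, and within it the two points easiest to gloss over: (i) that the diagonalization and every eigenvalue transposition can be effected within $SO(3)$ rather than merely $O(3)$ --- without this one obtains only invariance under even permutations; and (ii) the passage from ``symmetric function of eigenvalues'' to ``function of $(p_1,p_2,p_3)$'', which for the general (non-polynomial) case rests on injectivity of the orbit map rather than on the fundamental theorem of symmetric polynomials, the latter being needed only to secure the polynomial conclusion.
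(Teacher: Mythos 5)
Your proof is correct and follows essentially the same route as the paper's: diagonalize via isotropy, use permuting rotations to show the resulting function of the eigenvalues is symmetric, and then pass from symmetric functions to the power sums (the paper routes through the characteristic-polynomial coefficients $\tr\Q$, $\tr\cof\Q$, $\det\Q$ where you invoke Newton's identities directly, which is the same content). Your explicit verification that the diagonalization and each transposition can be realized within $SO(3)$, and that orbit separation rather than the fundamental theorem of symmetric polynomials suffices in the non-polynomial case, fills in details the paper leaves implicit.
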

\begin{proof}
Suppose $f$ is isotropic. Choosing $\vR$ to diagonalize $\Q$ we see that \eqref{iso} is equivalent to 
\be 
\label{newfi}
f(\Q)=f(\diag(\lambda_1,\lambda_2,\lambda_3)):=h(\lambda_1,\lambda_2,\lambda_3)
\ee
 for a function $h$ of the eigenvalues $\lambda_i$ of $\Q$, and choosing $\vR$ so as to permute these eigenvalues we deduce  that $h$ is symmetric with respect to  permutations of the $\lambda_i$. Since the eigenvalues are the roots of the characteristic equation
\be 
\label{chareqn}
\lambda^3-(\tr\Q)\lambda^2+(\tr\cof\Q)\lambda-\det\Q=0,
\ee
where $\cof\Q$ denotes the cofactor matrix of $\Q$, and since the coefficients determine the roots up to an arbitrary permutation, it follows that $h$ is a function of these coefficients, namely
\begin{eqnarray}\nonumber
\tr\Q&=&\lambda_1+\lambda_2+\lambda_3,\\ \label{coeffs}
\tr\cof\Q&=&\lambda_1\lambda_2+\lambda_2\lambda_3+\lambda_3\lambda_1,\\
\det\Q&=&\lambda_1\lambda_2\lambda_3,\nonumber
\end{eqnarray}
and hence, on account of the formulae 
\begin{eqnarray*}
\tr\cof\Q&=&\half\left((\tr\Q)^2-\tr\Q^2\right),\\ \det \Q&=&\tr\Q^3-\frac{3}{2}\tr\Q\,\tr\Q^2+\half(\tr\Q)^3,
\end{eqnarray*}
 $f$ is a function of $\tr\Q,\tr\Q^2,\tr\Q^3$. The converse is obvious since each of  $\tr\Q,\tr\Q^2,\tr\Q^3$ is isotropic.

If $f$ is a polynomial, then so is $h$, and by the fundamental theorem of symmetric polynomials (see, for example, \cite[\S 10]{edwards1984}) $h$ is a polynomial in the coefficients  \eqref{coeffs}, so that $g$ is a polynomial.
\end{proof}

\begin{prop}
\label{fiprop}The bulk energy 
$\psi_B$ satisfies the frame-indifference condition \eqref{fi} if and only if
\be 
\label{firep}
\psi_B(\Q,\theta)=g(\tr \Q^2,\tr \Q^3,\theta)
\ee
for some function $g$. 
If, for a given temperature $\theta$, $\psi_B(\Q,\theta)$ is a polynomial in $\Q$  then $g(\tr \Q^2,\tr\Q^3 ,\theta)$ is a polynomial in $\tr \Q^2,\tr\Q^3$.
\end{prop}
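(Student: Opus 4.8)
The plan is to obtain this as a corollary of Lemma~\ref{isotropic}, fixing $\theta$ throughout and treating $\psi_B(\cdot,\theta)$ as a function of $\Q$ alone. The one genuine issue to confront is that Lemma~\ref{isotropic} is stated for functions defined on \emph{all} real symmetric $3\times3$ matrices, whereas $\psi_B(\cdot,\theta)$ is only defined on the trace-free subspace $\mathcal E$, on which $\tr\Q\equiv 0$ so that the first invariant carries no information. I would bridge this gap by extending $\psi_B$ to the full space of symmetric matrices. Let $P$ denote the orthogonal projection onto $\mathcal E$, namely $P\Q=\Q-\frac13(\tr\Q)\1$, and define $\tilde f(\Q):=\psi_B(P\Q,\theta)$ for every symmetric $\Q$. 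Since $\vR\1\vR^T=\1$ we have $P(\vR\Q\vR^T)=\vR(P\Q)\vR^T$, so that frame-indifference \eqref{fi} of $\psi_B$ on $\mathcal E$ gives $\tilde f(\vR\Q\vR^T)=\psi_B(\vR(P\Q)\vR^T,\theta)=\psi_B(P\Q,\theta)=\tilde f(\Q)$; thus $\tilde f$ is isotropic in the sense of \eqref{iso}.

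Next I would apply Lemma~\ref{isotropic} to $\tilde f$, obtaining $\tilde f(\Q)=g_0(\tr\Q,\tr\Q^2,\tr\Q^3)$ for some function $g_0$ (carrying also the fixed parameter $\theta$). Restricting to $\Q\in\mathcal E$, where $P\Q=\Q$ and $\tr\Q=0$, and reinstating the $\theta$-dependence, this reads $\psi_B(\Q,\theta)=g_0(0,\tr\Q^2,\tr\Q^3,\theta)$, so that \eqref{firep} holds with $g(a,b,\theta):=g_0(0,a,b,\theta)$. The converse implication is immediate: because $\tr(\vR\Q\vR^T)^2=\tr\Q^2$ and $\tr(\vR\Q\vR^T)^3=\tr\Q^3$ for every $\vR\in SO(3)$, any $\psi_B$ of the form \eqref{firep} automatically satisfies \eqref{fi}.

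For the polynomial assertion I would exploit that $P$ is a \emph{linear} map, so that if $\psi_B(\cdot,\theta)$ is a polynomial in $\Q$ then so is $\tilde f(\Q)=\psi_B(P\Q,\theta)$. The polynomial half of Lemma~\ref{isotropic} then guarantees that $g_0$ is a polynomial in $\tr\Q,\tr\Q^2,\tr\Q^3$, and setting its first argument to $0$ leaves $g(\tr\Q^2,\tr\Q^3,\theta)$ a polynomial in $\tr\Q^2,\tr\Q^3$, as required.

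I do not expect any serious obstacle here; the only point needing care is the domain mismatch just described, since naively invoking Lemma~\ref{isotropic} on $\mathcal E$ would either leave a spurious dependence on $\tr\Q$ or, more to the point, is not licensed by the lemma as stated. The projection device resolves this while simultaneously preserving both isotropy and the polynomial structure, which is exactly what is needed for the two halves of the statement.
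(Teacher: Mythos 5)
Your argument is correct and is essentially the paper's own proof: the paper likewise applies Lemma~\ref{isotropic} to the extension $\hat\psi_B(\Q,\theta)=\psi_B(\Q-\frac{1}{3}(\tr\Q)\1,\theta)$, which is exactly your $\tilde f=\psi_B(P\Q,\theta)$, and then restricts back to $\mathcal E$ where $\tr\Q=0$. You have simply spelled out in detail the verification of isotropy of the extension and the preservation of the polynomial structure, which the paper leaves implicit.
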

\begin{proof}Apply Lemma \ref{isotropic} to the  function $\hat\psi_B(\Q,\theta)=\psi_B(\Q-\frac{1}{3}(\tr\Q)\1,\theta)$, which is isotropic.
\end{proof}
Note that $\tr\Q^4=\half(\tr\Q^2)^2$ when $\Q=\Q^T, \tr\Q=0$. Hence by Proposition \ref{fiprop} the most general frame-indifferent $\psi_B$ that is a quartic polynomial in $\Q$ is a linear combination of $1,\tr\Q^2,\tr\Q^3$ and $\tr\Q^4$ with coefficients depending on $\theta$.\footnote{Similarly, for a sixth order polynomial $\psi_B$ is a linear combination of $1,\tr\Q^2,\tr\Q^3, \tr\Q^2\tr\Q^3, (\tr\Q^2)^3, (\tr\Q^3)^2$; see, for example,  \cite{gramsbergenetal1986}.} 
Following de Gennes \cite{degennes1971}, Schophol \& Sluckin \cite{schopohlsluckin1987}, Mottram \& Newton \cite{mottramnewton} we consider the special case 
\be 
\label{quarticbulk}
\psi_B(\Q, \theta)=a(\theta){\rm tr}\,\Q^2-\frac{2b}{3}{\rm tr}\,\Q^3+c{\rm tr}\,\Q^4,
\ee
where $b>0,c>0$ are constants independent of $\theta$ and  $a(\theta)=\alpha(\theta-\theta^*)$ for constants $\alpha>0, \theta^*>0$.  Thus we have dropped the term that is independent of $\Q$, since this does not affect the minimizing $\Q$, and made the approximation that the coefficient of $\tr\Q^2$ is linear in $\theta$, while the coefficients of $\tr\Q^3, \tr\Q^4$ do not depend on $\theta$ (in fact the expansion \eqref{expansion} below suggests that these coefficients should be proportional to $\theta$, but this   affects the following calculation only by changing the predicted value of the nematic initiation temperature $\theta_{\rm NI}$). Setting $a=a(\theta)$ we can write \eqref{quarticbulk} in terms of the eigenvalues $\lambda_i$ of $\Q$ as
\be 
\label{quartic}\psi_B=a\sum_{i=1}^3\lambda_i^2-\frac{2b}{3}\sum_{i=1}^3\lambda_i^3+c\sum_{i=1}^3\lambda_i^4.
\ee
 A calculation shows that the critical points of  \eqref{quartic}  subject to the constraint $\sum_{i=1}^3\lambda_i=0$  have two $\lambda_i$ equal, so that 
$\lambda_1=\lambda_2=\lambda, \lambda_3=-2\lambda$ say, and that 
$$\lambda(a+b\lambda+6c\lambda^2)=0.$$
Hence $\lambda=0$ is always a critical point, while if $a\leq\frac{b^2}{24c}$ there are also critical points   $\lambda=\lambda_\pm$,  where
$$\lambda_\pm=\frac{-b\pm\sqrt{b^2-24ac}}{12c}.$$
For   a critical point we have that
$$\psi_B=6a\lambda^2+4b\lambda^3+18c\lambda^4,$$
which is negative when 
$$6a+4b\lambda+18c\lambda^2=3a+b\lambda<0.$$
Thus there is a critical point with $\psi_B<0$ provided $3a+b\lambda_-<0$, and a short calculation shows that this holds if and only if $a<\frac{b^2}{27c}$. 

Hence we find that there is a phase transformation from an isotropic fluid to a uniaxial nematic phase at the critical temperature
$\theta_{\rm NI}=\theta^*+\frac{b^2}{27\alpha c}$. If $\theta>\theta_{\rm NI}$ then the unique minimizer of $\psi_B(\cdot,\theta)$ is $\Q=0$. If $\theta<\theta_{\rm NI}$ then the minimizers are
$$\Q=s\left(\n\otimes \n-\frac{1}{3}{\bf 1}\right)\;\; \mbox{ for }\n\in S^2,$$
with scalar order parameter 
\be 
\label{scalarop}
s=\frac{b+\sqrt{b^2-24ac}}{4c}>0. 
\ee

\subsection{Satisfaction of the eigenvalue constraints}
\label{evconstraintsa}
Recall from \eqref{evconstraints} and the subsequent discussion that the eigenvalues of the $\Q$-tensor should satisfy the constraints
\be 
\label{evconstraintsb}-\frac{1}{3}\leq \lambda_{\rm min}(\Q)\leq \lambda_{\rm mid}(\Q)\leq\lambda_{\rm max}(\Q)\leq\frac{2}{3},
\ee
and that the cases when $\lambda_{\rm min}(\Q)=-\frac{1}{3}$ represent states of perfect ordering (which can be regarded as unphysical).
However for the quartic bulk potential the   minimizers $\Q$ of $\psi_B(\Q,\theta)$ do not in general satisfy \eqref{evconstraintsb}, e.g. for MBBA with experimentally measured coefficients, the scalar order parameter of the nematic state exceeds 1 for temperatures only $7^\circ$C below the nematic initiation temperature.

A natural way  to enforce the eigenvalue constraints (suggested by Ericksen \cite{ericksen1991liquid} in the context of his liquid crystal theory) is to suppose that 
\be 
\label{blowup}
\psi_B(\Q,\theta)\to\infty\mbox{ as } \lambda_{\rm min}(\Q)\to -\frac{1}{3}.
\ee
A method to derive a singular bulk-energy $\psi_B^s(\Q,\theta)$ satisfying this condition from the Onsager model with the Maier-Saupe potential was proposed by Katriel et al \cite{katrieletal} and developed by  Majumdar and the author in \cite{u9,j59}, to which the reader is referred for a detailed treatment.

Given $\Q\in\mathcal E$   satisfying $\lambda_{\rm min}(\Q)>-\frac{1}{3}$ we define
\begin{eqnarray}
\psi_B^s(\Q,\theta)&=&\inf_{\{\rho:\Q(\rho)=\Q\}}I_\theta(\rho)\nonumber\\
&=&k_B\theta\inf_{\{\rho:\Q(\rho)=\Q\}}\int_{S^2}\rho(\p)\ln\rho(\p)\,d\p-\kappa|\Q|^2,\label{mspsi}
\end{eqnarray}
where $I_\theta(\rho), \Q(\rho)$ are defined in \eqref{onsager}, \eqref{Qrho}. Thus we just need to understand how  to minimize
$$E(\rho)=\int_{S^2}\rho(\p)\ln\rho(\p)\,d\p$$
subject to $\Q(\rho)=\Q$. We seek a minimizer of $E$ in
$${\mathcal A}_\Q=\{\rho\in L^1(S^2): \rho\geq 0, \int_{S^2}\rho(\p)\,d\p=1, \Q(\rho)=\Q\}.$$
(Note that we don't impose the condition $\rho(\p)=\rho(-\p)$ since this turns out to be automatically satisfied by a minimizer.) It is not hard to check that ${\mathcal A}_\Q$ is nonempty, and then a routine use of the direct method of the calculus of variations, making use of the fact that $\rho\ln\rho$ is a strictly convex function of $\rho$ having superlinear growth, gives
\begin{thm}  $E$ attains a minimum at a unique $\rho_\Q\in{\mathcal A}_\Q$.
\end{thm}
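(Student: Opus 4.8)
The plan is to run the direct method of the calculus of variations, the whole point being that the superlinear growth and strict convexity of $t\mapsto t\ln t$ supply exactly the compactness, lower semicontinuity and uniqueness that the method requires. First I would record that $E$ is bounded below on $\mathcal{A}_\Q$: since $t\ln t\geq -1/e$ for $t\geq 0$ we have $E(\rho)\geq -4\pi/e$, so that $m:=\inf_{\mathcal{A}_\Q}E$ is finite (nonemptiness of $\mathcal{A}_\Q$ having already been noted). I would then fix a minimizing sequence $\rho_n\in\mathcal{A}_\Q$ with $E(\rho_n)\to m$.

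The crucial step is compactness. Because $\|\rho_n\|_{L^1(S^2)}=1$ is bounded but $L^1$ is not reflexive, boundedness alone does not produce a weakly convergent subsequence; I need equi-integrability, and this is precisely where superlinear growth enters. Writing $\Phi(t)=t\ln t$, the bound on $E(\rho_n)$ together with $\Phi\geq -1/e$ controls $\int_{S^2}\Phi(\rho_n)^+\,d\p$ uniformly in $n$; since $\Phi(t)/t\to\infty$ as $t\to\infty$ (explicitly, on $\{\rho_n\geq K\}$ one has $\rho_n\leq \rho_n\ln\rho_n/\ln K$, so $\int_{\{\rho_n\geq K\}}\rho_n\,d\p\to 0$ uniformly as $K\to\infty$), the de la Vall\'ee-Poussin criterion shows that $\{\rho_n\}$ is uniformly integrable. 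By the Dunford--Pettis theorem there is a subsequence, not relabelled, with $\rho_n\weak\rho_\Q$ weakly in $L^1(S^2)$. I expect this equi-integrability argument to be the main obstacle, in the sense that it is the only point at which something beyond convexity and boundedness is genuinely used.

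It then remains to check that $\rho_\Q$ is admissible and optimal. Each defining constraint is preserved under weak $L^1$ convergence: nonnegativity because the cone $\{\rho\geq 0\}$ is convex and strongly closed, hence weakly closed; the normalization $\int_{S^2}\rho\,d\p=1$ by testing against the constant function $1\in L^\infty(S^2)$; and $\Q(\rho_\Q)=\Q$ because the map $\rho\mapsto\int_{S^2}(\p\otimes\p-\frac{1}{3}\1)\rho\,d\p$ is weakly continuous, its integrand having components in $L^\infty(S^2)$, so that $\Q=\Q(\rho_n)\to\Q(\rho_\Q)$. Hence $\rho_\Q\in\mathcal{A}_\Q$. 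Since $\Phi$ is convex and bounded below, the integral functional $E$ is sequentially weakly lower semicontinuous on $L^1(S^2)$ (a standard fact, e.g. via Mazur's lemma or Ioffe's theorem), so that $E(\rho_\Q)\leq\liminf_n E(\rho_n)=m$; combined with $\rho_\Q\in\mathcal{A}_\Q$ this forces $E(\rho_\Q)=m$, and the minimum is attained.

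Finally, uniqueness follows from strict convexity. The set $\mathcal{A}_\Q$ is convex, being the intersection of the nonnegativity cone with the affine constraints $\int_{S^2}\rho\,d\p=1$ and $\Q(\rho)=\Q$. If $\rho_1\neq\rho_2$ were two minimizers, then $\frac{1}{2}(\rho_1+\rho_2)\in\mathcal{A}_\Q$, and the strict convexity of $t\mapsto t\ln t$ would yield $E(\frac{1}{2}(\rho_1+\rho_2))<\frac{1}{2}(E(\rho_1)+E(\rho_2))=m$, contradicting the definition of $m$. Hence the minimizer $\rho_\Q$ is unique.
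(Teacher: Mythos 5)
Your proof is correct and follows exactly the route the paper indicates: it states that the result follows from ``a routine use of the direct method of the calculus of variations, making use of the fact that $\rho\ln\rho$ is a strictly convex function of $\rho$ having superlinear growth,'' and your write-up is a careful implementation of precisely that sketch (superlinear growth giving equi-integrability and hence weak $L^1$ compactness via Dunford--Pettis, convexity giving weak lower semicontinuity, strict convexity giving uniqueness). Nothing in your argument departs from the paper's intended proof.
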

The minimizer $\rho_\Q$ can be given semi-explicitly:
\begin{thm}
\label{explicitrhoQ}
Let $\Q$ have spectral decomposition $\Q=\sum_{i=1}^3\lambda_i\n_i\otimes\n_i$. Then
$$\rho_\Q(\p)=\frac{\exp(\mu_1p_1^2+\mu_2p_2^2+\mu_3p_3^2)}{Z(\mu_1,\mu_2,\mu_3)},$$
where $\p=\sum_{i=1}^3 p_i\n_i$ and
$$Z(\mu_1,\mu_2,\mu_3)=\int_{S^2}\exp(\mu_1p_1^2+\mu_2p_2^2+\mu_3p_3^2)\,d\p.$$
The $\mu_i$  (unique up to adding a constant to each) solve the equations
\begin{eqnarray*}
\frac{\partial \ln Z}{\partial \mu_i}=\lambda_i+\frac{1}{3}, \;\;i=1,2,3.
\end{eqnarray*}
\end{thm}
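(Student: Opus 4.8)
The plan is to bypass the Euler--Lagrange equation for the abstract minimiser supplied by the previous theorem and instead \emph{construct} a density $\rho$ of the asserted exponential form, show it lies in $\mathcal A_\Q$, and prove it minimises $E$ by a Gibbs (relative-entropy) inequality; uniqueness from the previous theorem then forces $\rho=\rho_\Q$. Throughout I work in the eigenbasis $\{\n_i\}$ of $\Q$ and write $m_i=\lambda_i+\frac13$. Since $\tr\Q=0$ we have $\sum_i m_i=1$, and the standing hypothesis $\lambda_{\min}(\Q)>-\frac13$ gives $m_i>0$ for each $i$.

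First I would record two structural facts about $Z(\mu_1,\mu_2,\mu_3)=\int_{S^2}\exp(\sum_i\mu_i p_i^2)\,d\p$. Because $\sum_i p_i^2=|\p|^2=1$ on $S^2$, one has $Z(\mu+c\1)=e^c Z(\mu)$, so $\ln Z(\mu+c\1)=\ln Z(\mu)+c$ and $\sum_i\partial_{\mu_i}\ln Z\equiv 1$; this is the source of the stated non-uniqueness (``up to adding a constant to each''). Moreover $\ln Z$ is the cumulant generating function of the vector $(p_1^2,p_2^2,p_3^2)$, hence convex, with Hessian equal to the covariance of $(p_1^2,p_2^2,p_3^2)$ under $\rho\,d\p$; a short computation shows this covariance is positive definite on the complement of $\1$ (its only null direction is $\1$, since $\sum_i v_i p_i^2$ is constant on $S^2$ only for $v\in\mathrm{span}\,\1$, as evaluation at $\p=\n_1,\n_2,\n_3$ shows). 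Thus $\ln Z$ is \emph{strictly} convex modulo $\1$.

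The heart of the argument is solving $\partial_{\mu_i}\ln Z=m_i$. I would introduce the dual functional $F(\mu)=\ln Z(\mu)-\sum_i\mu_i m_i$, which is well defined and convex on the hyperplane $H=\{\mu:\sum_i\mu_i=0\}$ (both terms shift by the same constant under $\mu\mapsto\mu+c\1$, using $\sum_i m_i=1$), and whose critical points are exactly the solutions of the moment equations. The key step --- and the main obstacle --- is \emph{coercivity} of $F$ on $H$: its recession function in a direction $v\in H\setminus\{0\}$ is $\lim_{t\to\infty}F(tv)/t=\max_i v_i-\sum_i v_i m_i$, since $\lim_{t\to\infty}t^{-1}\ln Z(tv)=\max_{S^2}\sum_i v_i p_i^2=\max_i v_i$. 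Because the $m_i$ are \emph{strictly} positive weights summing to $1$ while $v\neq 0$ has $\sum_i v_i=0$ (so some $v_i<\max_i v_i$), the weighted average obeys $\sum_i v_i m_i<\max_i v_i$, making the recession function positive in every nonzero direction and hence $F$ coercive on $H$. Being continuous, $F$ attains a minimum, unique by strict convexity, yielding $\mu$ (unique modulo $\1$) with $\partial_{\mu_i}\ln Z=m_i$. This is precisely where $\lambda_{\min}(\Q)>-\frac13$ is indispensable: if some $m_i=0$ the corresponding recession value would vanish and the multipliers could escape to infinity.

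Finally I would close the loop. With these $\mu_i$, set $\rho(\p)=Z^{-1}\exp(\sum_i\mu_i p_i^2)$. It is positive and normalised by construction; its off-diagonal second moments $\int_{S^2}p_ip_j\rho\,d\p$ $(i\neq j)$ vanish because $\rho$ is invariant under $p_j\mapsto -p_j$ whereas $p_ip_j$ changes sign, while the diagonal moments give $\int_{S^2}p_i^2\rho\,d\p=\partial_{\mu_i}\ln Z=m_i=\lambda_i+\frac13$, so $\Q(\rho)=\Q$ and $\rho\in\mathcal A_\Q$. To see $\rho$ minimises $E$, note $\ln\rho=\sum_i\mu_i p_i^2-\ln Z$ is affine in the constrained quantities $p_i^2$; hence for any $\sigma\in\mathcal A_\Q$ one has $\int_{S^2}\sigma\ln\rho\,d\p=\sum_i\mu_i m_i-\ln Z=\int_{S^2}\rho\ln\rho\,d\p$, and therefore $E(\sigma)-E(\rho)=\int_{S^2}\sigma\ln(\sigma/\rho)\,d\p\ge 0$ by Jensen's inequality, with equality iff $\sigma=\rho$. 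Thus $\rho$ is the minimiser, and uniqueness from the previous theorem identifies it with $\rho_\Q$, completing the proof.
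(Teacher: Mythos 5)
Your proposal is correct and follows essentially the same route as the paper, which proves the theorem via the dual variational principle of Mead \& Papanicolaou, i.e.\ by maximizing $J(\mu)=\sum_i(\lambda_i+\frac{1}{3})\mu_i-\ln Z(\mu)$ --- your $F$ is exactly $-J$, restricted to the hyperplane $\sum_i\mu_i=0$ to obtain coercivity. You simply supply the details the paper delegates to the reference (the recession-function argument using $m_i>0$, strict convexity of $\ln Z$ modulo $\mathbf 1$, and the Gibbs inequality identifying the exponential density with the unique minimizer $\rho_\Q$), all of which are sound.
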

Theorem \ref{explicitrhoQ} can be proved by showing that $\rho_\Q$ satisfies the corresponding Euler-Lagrange equation, the $\mu_i$ appearing as Lagrange multipliers. However, this is a bit tricky because of the possibility that  $\rho_\Q$ is not bounded away from zero.  A quicker proof is to use a `dual' variational principle for $\mu=(\mu_1,\mu_2,\mu_3)$ (see Mead \& Papanicolaou \cite{MeadPapanicolaou1984}), to maximize over $\mu\in{\mathbb R}^3$ the function
$$J(\mu)=\sum_{i=1}^3\left(\lambda_i+\frac{1}{3}\right)\mu_i-\ln Z(\mu).$$
For $\Q$ as above let $f(\Q)=E(\rho_\Q)=\min_{\rho\in{\mathcal A}_\Q}E(\rho)$, so that 
$$\psi_B^s(\Q,\theta)=k_B\theta f(\Q)-\kappa|\Q|^2.$$
Hence the bulk free energy has the form
\be 
\label{singularpotential}
\psi_B^s(\Q,\theta)=k_B\theta\left(\sum_{i=1}^3\mu_i\left(\lambda_i+\frac{1}{3}\right)-\ln Z(\mu)\right)-\kappa\sum_{i=1}^3\lambda_i^2,
\ee
and the following result shows that it satisfies the condition \eqref{blowup}.
\begin{thm}
\label{fprops}
 $f$ is strictly convex in $\Q$ and $$C_1-\frac{1}{2}\ln \left(\lambda_{\rm min}(\Q)+\frac{1}{3}\right)\leq f(\Q)\leq C_2- \ln \left(\lambda_{\rm min}(\Q)+\frac{1}{3}\right)$$
for constants $C_1,C_2$.
\end{thm}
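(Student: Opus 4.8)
The plan is to handle strict convexity and the two-sided estimate by separate arguments, using the primal characterisation $f(\Q)=\min_{\rho\in\mathcal A_\Q}E(\rho)$ for convexity and for the upper bound, and the dual characterisation $f(\Q)=\max_{\mu\in\R^3}J(\mu)$ recalled above for the lower bound. Throughout I write $\epsilon=\lambda_{\rm min}(\Q)+\frac13\in(0,\frac13]$ and work in the eigenframe $\{\n_i\}$ of $\Q$, setting $m_i=\lambda_i+\frac13=\int_{S^2}p_i^2\,d\mu$, so that $m_1+m_2+m_3=1$ and $m_1=\epsilon$.

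For strict convexity I would exploit that $\rho\mapsto E(\rho)$ is strictly convex, because $u\mapsto u\ln u$ is, while the constraint map $\rho\mapsto\Q(\rho)$ of \eqref{Qrho} is affine. Given distinct $\Q^0,\Q^1$ with $\lambda_{\rm min}>-\frac13$ and minimisers $\rho^0=\rho_{\Q^0}$, $\rho^1=\rho_{\Q^1}$, set $\rho^t=(1-t)\rho^0+t\rho^1$ for $t\in(0,1)$. Then $\rho^t\ge0$, $\int_{S^2}\rho^t\,d\p=1$, and by linearity $\Q(\rho^t)=(1-t)\Q^0+t\Q^1=:\Q^t$, so $\rho^t\in\mathcal A_{\Q^t}$ and $f(\Q^t)\le E(\rho^t)$. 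Since $\Q(\rho^0)\ne\Q(\rho^1)$ forces $\rho^0\ne\rho^1$ on a set of positive measure, pointwise strict convexity of $u\mapsto u\ln u$ integrates to $E(\rho^t)<(1-t)E(\rho^0)+tE(\rho^1)=(1-t)f(\Q^0)+tf(\Q^1)$, which is the assertion.

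For the lower bound I would use $f(\Q)=\max_\mu J(\mu)\ge J(\mu)$ for a single well-chosen $\mu$, namely $\mu=(-1/\epsilon,0,0)$ aligned with $\n_{\rm min}$. Then $\sum_i m_i\mu_i=m_1(-1/\epsilon)=-1$, while, using only rotational invariance of the surface measure in the form $\int_{S^2}g(p_1)\,d\p=2\pi\int_{-1}^1 g(u)\,du$, one has $Z(\mu)=\int_{S^2}e^{-p_1^2/\epsilon}\,d\p=2\pi\int_{-1}^1 e^{-u^2/\epsilon}\,du\le 2\pi\sqrt{\pi\epsilon}$ by extending the Gaussian to all of $\R$. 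Hence $J(\mu)=-1-\ln Z(\mu)\ge -1-\ln(2\pi\sqrt\pi)-\tfrac12\ln\epsilon$, giving the lower bound with $C_1=-1-\ln(2\pi\sqrt\pi)$.

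For the upper bound I would use $f(\Q)\le E(\rho^*)$ for an admissible test density. I would take $\rho^*$ concentrated in the directions of small prescribed moment: roughly uniform in a band $|p_1|\lesssim\sqrt{m_1}$ about the great circle $p_1=0$, and, within that band, carrying an angular density concentrated near $\pm\n_3$ at scale $\sqrt{m_2}$ in the case that $m_2$ is also small. The entropy of such a product-type density is $E(\rho^*)\le C-\tfrac12\ln m_1-\tfrac12\ln m_2$, and since $m_2\ge m_1=\epsilon$ this is at most $C_2-\ln\epsilon$, as required. The main obstacle is that admissibility demands $\Q(\rho^*)=\Q$ \emph{exactly}, not merely to leading order: I would meet this by building two or three free scale/concentration parameters into $\rho^*$ and invoking a continuity (degree or intermediate-value) argument to hit the pair $(m_1,m_2)$ precisely, with $m_3=1-m_1-m_2$ then automatic. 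A secondary point is uniformity of the constants: where $\epsilon$ is bounded away from $0$ the eigenvalues of $\M=\Q+\frac13\1$ range over a compact set, $f$ is finite and continuous there (indeed convex by the first part) and $-\ln\epsilon$ is bounded, so both inequalities persist after enlarging $C_1,C_2$; only the regime $\epsilon\to0$ requires the estimates above.
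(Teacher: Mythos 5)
The paper itself gives no proof of Theorem \ref{fprops} --- it is quoted from \cite{u9,j59} --- so I assess your proposal on its merits; it is in substance correct and follows the same duality-based strategy as the cited source. The strict convexity argument is complete: $\rho\mapsto\Q(\rho)$ is affine, $\rho^t\in\mathcal A_{\Q^t}$, and $\Q^0\neq\Q^1$ forces $\rho^0\neq\rho^1$ on a set of positive measure, so the pointwise strict convexity of $u\mapsto u\ln u$ gives $f(\Q^t)\le E(\rho^t)<(1-t)f(\Q^0)+tf(\Q^1)$. The lower bound is also complete and clean: weak duality $f(\Q)\ge J(\mu)$ (equivalently, nonnegativity of the relative entropy of $\rho$ with respect to $e^{\sum\mu_ip_i^2}/Z$) applied to the single test multiplier $\mu=(-1/\epsilon,0,0)$, together with Archimedes' projection identity and the Gaussian bound $Z\le 2\pi\sqrt{\pi\epsilon}$, yields $f(\Q)\ge -1-\ln(2\pi\sqrt\pi)-\tfrac12\ln\epsilon$ uniformly in $\epsilon\in(0,\tfrac13]$, which is exactly the claim. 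The upper bound is only sketched, but the sketch is implementable: in the coordinates $\p=(u,\sqrt{1-u^2}\cos\phi,\sqrt{1-u^2}\sin\phi)$ one has $d\p=du\,d\phi$, so a product density $\rho=g(u)h(\phi)$ with $g$ uniform on $[-a,a]$, $a=\sqrt{3m_1}$, and $h$ uniform on two arcs of half-width $\beta$ centred at $\phi=\pi/2,3\pi/2$ has entropy $-\ln(2a)-\ln(4\beta)\le C-\tfrac12\ln m_1-\tfrac12\ln m_2\le C-\ln\epsilon$. Two details you wave at should be made explicit: (i) the off-diagonal moments $\int p_ip_j\rho\,d\p$ must vanish, which the symmetries of this construction guarantee; and (ii) no degree argument is needed, since in the product form the two scalar conditions decouple ($a$ fixes $m_1$ exactly, and $\beta\mapsto\tfrac12(1-\tfrac{\sin 2\beta}{2\beta})=m_2/(1-m_1)\in(0,\tfrac12]$ is solved by a one-variable intermediate value argument, with $\beta\gtrsim\sqrt{m_2}$ giving the stated entropy bound). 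With these points filled in the proof is complete.
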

Thus we may extend the definition of the singular bulk potential to all of $\mathcal E$ 
by defining $\psi_B^s(\Q,\theta)$ to be $+\infty$ if $\lambda_{\rm min}(\Q)\leq -\frac{1}{3}$. Then $\psi_B^s(\cdot,\theta):{\mathcal E}\to \R\cup\{+\infty\}$ is continuous,  and $\psi_B^s(\Q,\theta)$ is finite if and only if $\lambda_{\rm min}(\Q)>-\frac{1}{3}$. 

Other predictions of this model (see \cite{u9}) include:
\begin{enumerate}
\item All stationary points of $\psi_B^s(\cdot,\theta)$ are uniaxial and a phase transition is predicted from the isotropic to a uniaxial nematic phase just as
in the quartic model.
\item Minimizers $\rho^*$ of $I_\theta(\rho)$ correspond to  minimizers over $\Q$
of $\psi_B^s(\Q,\theta)$. 
\item Near $\Q=0$ we have (see also Katriel et al \cite{katrieletal}) the expansion
\begin{eqnarray}
\frac{1}{\theta k_B}\psi_B^s(\Q,\theta)&=&\ln 4\pi +\left(\frac{15}{4}-\frac{\kappa}{2 \theta k_B}\right){\rm tr}\,\Q^2\label{expansion}\\ &&\hspace{1in}-\frac{75}{14}{\rm tr}\,\Q^3+\frac{3825}{784}{\rm tr}\,\Q^4 + \ldots,\nonumber
\end{eqnarray}
\noindent which predicts in particular the relation $b/c=.91..$ for the coefficients in \eqref{quarticbulk}.
\end{enumerate}
(For generalizations to singular potentials for general moment problems see Taylor \cite{Taylor2016}.)

\section{The Landau - de Gennes theory}
\label{LdG}For simplicity we work at a constant temperature $\theta$. Let $\Omega$ be a bounded domain in $\mathbb R^3$. At each point $\x\in\Omega$, we have a corresponding order parameter tensor $\Q(\x)$. We suppose that the material is described by a free-energy density $\psi(\Q,\nabla \Q,\theta)$, so that the total free energy is given by
\be 
\label{energy}
 I_\theta(\Q)=\int_\Omega \psi(\Q(\x),\nabla \Q(\x),\theta)\,d\x.
\ee
We write $\psi=\psi(\Q,\vD,\theta)$, where $\vD$ is a third order tensor.

\subsection{Frame-indifference and material symmetry} To determine the conditions for $\psi$  to be frame-indifferent, we consider as before two observers, one using the Cartesian coordinates $\x=(x_1,x_2,x_3)$ and the second using translated and rotated coordinates $\z=\bar \x+\vR(\x-\bar \x)$, where $\vR\in SO(3)$, and we  require that $$\psi(\Q^*(\z),\nabla_\z\Q^*(\z),\theta)=\psi(\Q(\x), \nabla_\x\Q( \x),\theta),$$
where $\Q^*(\z)$ is the value of $\Q$ measured by the second observer. Since $\Q^*(\bar \x)=\vR\Q(\bar \x)\vR^T$, 
\begin{eqnarray*}
\frac{\partial Q_{ij}^*}{\partial z_k}(\z)&=&\frac{\partial}{\partial z_k}(R_{il}Q_{lm}( \x)R_{jm})\\ &=& \frac{\partial}{\partial x_p}(R_{il}Q_{lm}R_{jm})\frac{\partial x_p}{\partial z_k}\\
&=& R_{il}R_{jm}R_{kp}\frac{\partial Q_{lm}}{\partial x_p}.
\end{eqnarray*}
Thus, for every $\vR\in SO(3)$,
\be 
\label{frame}  \psi(\Q^*,\vD^*,\theta)=\psi(\Q,\vD,\theta),
\ee
where $\Q^*=\vR\Q\vR^T$, $D_{ijk}^*=R_{il}R_{jm}R_{kp}D_{lmp}$. Such $\psi$ are
called {\it hemitropic}.

The requirement that
$$\psi(\Q^*(\z), \nabla_\z\Q^*(\z),\theta)=\psi(\Q( \x),\nabla_\x\Q( \x),\theta)$$
when $\z=\bar \x+\hat \vR(\x-\bar \x)$, where $\hat \vR={\bf 1}-2\e\otimes \e$, $|\e|=1$,
is a {\it reflection}, is a condition of material symmetry satisfied by nematics, but not cholesterics, whose molecules have a chiral nature. 

Since any $\vR\in O(3)$ can be written as $\hat \vR\tilde \vR$, where $\tilde \vR\in SO(3)$ and $\hat \vR$ is a reflection, for a nematic
$$
\psi(\Q^*,\vD^*,\theta)=\psi(\Q,\vD,\theta),$$
where $\Q^*=\vR\Q\vR^T,\;D^*_{ijk}=R_{il}R_{jm}R_{kp}D_{lmp}$ and $\vR\in O(3)$. Such $\psi$ are called {\it isotropic}.

\subsection{Bulk and elastic energies}
We can decompose $\psi$ as
\begin{eqnarray*}
\psi(\Q,\nabla \Q,\theta)&=&\psi(\Q,\bzero,\theta)+(\psi(\Q,\nabla \Q,\theta)-\psi(\Q,\bzero,\theta))\\
&=&\psi_B(\Q, \theta)+\psi_E(\Q,\nabla \Q,\theta)\\
&=& \mbox{ bulk energy }+\mbox{ elastic energy },
\end{eqnarray*}
so that $\psi_B(\Q, \theta)=\psi(\Q,\bzero, \theta)$.

We have already studied the properties of $\psi_B(\Q,\theta)$. Usually it is assumed that $\psi_E(\Q,\nabla \Q,\theta)$ is quadratic in $\nabla \Q$.
Examples of isotropic functions quadratic  in $\nabla \Q$ are the invariants $I_i=I_i(\Q,\nabla\Q)$ given by
\begin{eqnarray}
     &&I_1 =   Q_{ij,k}Q_{ij,k}, \;\;
    I_2 =   Q_{ij,j}  Q_{ik,k}, \nonumber\\
   &&I_3 =  Q_{ik,j}Q_{ij,k}, \;\;
    I_4 = Q_{lk} Q_{ij,l} Q_{ij,k}.\label{isotr}
\end{eqnarray}
The first three linearly independent invariants $I_1,I_2,I_3$ span the possible isotropic quadratic functions of $\nabla\Q$.  The invariant $I_4$ is one of 6 possible linearly independent cubic terms that are quadratic in $\nabla \Q$ (see  \cite{berremanmeiboom1984,longaetal1987,poniewierskisluckin1985,schieletrimper1983}). Note that $$I_2-I_3=(Q_{ij}Q_{ik,k})_{,j}-(Q_{ij}Q_{ik,j})_{,k}$$ is a null Lagrangian.
An example of a hemitropic, but not isotropic, function is 
$$I_5=\varepsilon_{ijk}Q_{il}Q_{jl,k}.$$
For the elastic energy we take
\be 
\label{elastic} \psi_E(\Q,\nabla \Q, \theta)=\half\sum_{i=1}^{{5}} L_i(\theta)I_i(\Q,\nabla\Q),
\ee
where the $L_i(\theta)$ are material constants, with $L_5(\theta)=0$ for nematics. 

To summarize, we assume that for nematics and cholesterics
\be 
\label{freeenergy}
\psi(\Q,\nabla \Q,\theta)=\psi_B(\Q,\theta)+\half\sum_{i=1}^{5} L_i(\theta)I_i(\Q,\nabla\Q),
\ee
where $\psi_B(\Q,\theta)$ has one of the forms discussed 
with $L_5(\theta)=0$ for nematics.

\section{The constrained Landau - de Gennes and Oseen-Frank theories}
\label{constrainedOF}
For small\footnote{Since the $L_i$ are not dimensionless, some care is required in interpreting what it means for them to be small (see Gartland \cite{gartland2015}).} $L_i=L_i(\theta)$  it is reasonable to consider a constrained theory in which we require $\Q$ to be uniaxial with a constant scalar order parameter $s=s(\theta)>0$, so that 
\be 
\label{unicon}\Q=s\left( \n\otimes   \n-\frac{1}{3}{\bf 1}\right), \;\;   \n\in S^2,
\ee
and we minimize $I_\theta(\Q)$ in \eqref{energy} with \eqref{freeenergy} subject to the constraint \eqref{unicon}.
(For  rigorous work studying whether and when this is justified see Majumdar \& Zarnescu \cite{majumdarzarnescu}, Nguyen \& Zarnescu \cite{nguyenzarnescu}, Bauman, Phillips \& Park \cite{baumanparkphillips2012}, Canevari \cite{canevari2016}.) Then  the bulk energy just depends on $\theta$, so we only have to consider the elastic energy
\be 
\label{tildeI}
\tilde I_\theta(\Q)=\int_\Omega \psi_E(\Q,\nabla \Q,\theta)\,d\x.\ee

Formally calculating $\psi_E$ in \eqref{elastic} in terms of $\n,\nabla \n$ using \eqref{unicon} we obtain up to an additive constant  the {\it Oseen-Frank energy functional}
\begin{eqnarray}\label{OF}    \hspace{-1in}&& I_\theta(\n) = \half\int_{\Omega}\left(K_{1}(\textrm{div}\,
\n)^{2} + K_{2}(\n\cdot \textrm{curl}\,\n{+q_0})^{2}  +
K_{3}|\n \times \textrm{curl}\,\n|^{2}\right.\\&&\hspace{1.7in} \left. +
(K_{2}+K_{4})(\textrm{tr}(\nabla \n)^{2} - (\textrm{div}
\n)^{2})\right)\,d\x,\nonumber
\end{eqnarray}
where the {\it Frank constants} $K_i=K_i(\theta)$, and $q_0=q_0(\theta)$ are given by
\begin{eqnarray}
K_1&=&2L_1s^2+L_2s^2+L_3s^2-\frac{2}{3}L_4s^3,\nonumber\\K_2&=&2L_1s^2-\frac{2}{3}L_4s^3,\nonumber\\K_3&=&2L_1s^2+L_2s^2+L_3s^2+\frac{4}{3}L_4s^3,\label{LstoKs}\\
K_4&=&L_3s^2,\nonumber \\q_0&=&-\frac{L_5s^2}{2K_2},\nonumber
\end{eqnarray}
 and $ q_0=0$ for nematics, $ q_0\neq 0$ for cholesterics. 

\section{Boundary conditions}
\label{bcs}
We consider various boundary conditions that can be imposed on a part $\partial\om_1\subset\partial\om$ of the boundary:
\subsection{Constrained Landau - de Gennes and Oseen-Frank}
(i) {\it Strong anchoring}. Here $\n$ is specified on $\partial\om_1$:
 $$\n(\x)=\pm\bar \n(\x), \;\x\in\partial\Omega_1,$$
where $\bar\n:\partial\om_1\to S^2$.
Special cases are:
\begin{enumerate}
\item   {\it Homeotropic} boundary conditions : $\bar \n(\x)=\pm\nnu(\x)$, where $\nnu(\x)$ denotes the unit  outward normal to $\partial\om$ at $\x$.
\item 
   {\it Planar} boundary conditions : $\bar \n(\x)\cdot\nnu(\x)=0$.
\end{enumerate}
(ii) {\it Conical anchoring}:
$$|\n(\x)\cdot \nnu(\x)|=\alpha(\x)\in[0,1],\; \x\in\partial\Omega_1,$$
where $\alpha(\x)$ is given. Special cases are:
\begin{enumerate}
 \item $\alpha(\x)=1$, which is the same as homeotropic.
 \item 
 $\alpha(\x)=0$ {\it planar degenerate} (or {\it tangent}), where the director $\n$ is required to be parallel to boundary but  the preferred direction is not prescribed.
\end{enumerate}
(iii) {\it No anchoring}: no condition on $\n$ on $\partial\Omega_1$. This is natural mathematically but seems difficult to realize in practice.\\
(iv) {\it Weak anchoring}: no boundary condition is explicitly imposed, but a surface energy term is added to the energy \eqref{tildeI} or \eqref{OF}, of the form
$$\int_{\partial\Omega_1}w(\x,\n)\,dS$$
where $w(\x,\n)=w(\x,-\n)$ is prescribed. For example, corresponding to strong anchoring we can choose 
\be 
\label{weaka}
w(\x,\n)=-\half K(\n(\x)\cdot \bar \n(\x))^2,
\ee
with $K>0$, formally recovering the strong anchoring condition in the limit $K\rightarrow\infty$. When $\bar\n(\x)=\nnu(\x)$, $w(\x,\n)$ in \eqref{weaka} is the Rapini-Papoular form \cite{rapinipapoular1969} of the anchoring energy. Note that $w(\x,\n)$ is well defined in the constrained Landau - de Gennes theory and and can be alternatively written in the form
$$w(\x,\n)=-\half K\left(s^{-1}\Q(\x)\bar\n(\x)\cdot\bar\n(\x)+\frac{1}{3}\right).$$
\subsection{ Landau - de Gennes} 
(i) Strong anchoring:
$$\Q(\x)=\bar \Q(\x),\;\x\in\partial\Omega,$$
where $\bar\Q$ is prescribed.\\
(ii) Weak anchoring: add to the energy $I_\theta(\Q)$ in \eqref{energy} a surface energy term
$$\int_{\partial\Omega}w(\x,\Q)\,dS.$$

\section{Orientability}
\label{orientability}
But is the derivation of the  Oseen-Frank theory from Landau - de Gennes given in Section \ref{constrainedOF} correct? The constrained Landau - de Gennes theory is invariant to changing $\n$ to $-\n$, but is the same true of Oseen-Frank? The issue here is whether a line field can be {\it oriented}, i.e. turned into a vector field by assigning an orientation at each point. If we don't care about the regularity of the vector field this can always be done by choosing an arbitrary orientation at each point.

For $s$ a nonzero constant and $\n\in S^2$ let
$$P(\n)=s\left(\n\otimes \n-\frac{1}{3}{\bf 1}\right),$$
and set 
$${\mathcal Q}=\left\{\Q\in M^{3\times 3}:\Q=P(\n) \mbox{ for some }\n\in S^2\right\}.$$
Thus $P:S^2\rightarrow {\mathcal Q}$. The operator $P$ provides us with a way of `unorienting' an $S^2$-valued vector field. Given $\Q\in W^{1,1}(\Omega, {\mathcal Q})$ we say  $\Q$ is {\it orientable} if we can write
$$\Q(\x)=s\left(\n(\x)\otimes \n(\x)-\frac{1}{3}{\bf 1}\right),$$
where $\n\in W^{1,1}(\Omega,S^2)$. 
In topological language this means that $\Q$ has a {\it lifting} to $W^{1,1}(\Omega,S^2)$. Note that if $\Q\in W^{1,1}(\Omega, {\mathcal Q})$ is orientable with lifting $\n$, then since $n_i\in L^\infty(\om)$,
$$Q_{ij,k}=s(n_in_{j,k}+n_jn_{i,k})\mbox{  a.e. }\x\in\om,$$
from which it follows since $|\n|=1$ that  
\be 
\label{Qformula}
Q_{ij,k}n_j=sn_{i,k}.
\ee
In particular,  if $\Q\in W^{1,p}(\Omega,{\mathcal Q})$ is orientable for some $p$ with $1\leq p\leq\infty$, then $\n\in W^{1,p}(\om,S^2)$.   
\begin{thm}[{\cite[Proposition 2]{j61}}]
\label{twoorientations}
An orientable $\Q$ has exactly two liftings.
\end{thm}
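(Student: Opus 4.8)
The plan is to separate a pointwise algebraic fact from a global connectedness argument. First I would record the behaviour of the map $P$ on fibres: if $P(\n)=P(\m)$ with $\n,\m\in S^2$ and $s\neq 0$, then $\n\otimes\n=\m\otimes\m$, so $\n$ and $\m$ span the same line. Indeed $\frac{2s}{3}$ is a simple eigenvalue of $P(\n)$ with eigenspace $\mathbb{R}\n$, so $\m=\pm\n$. Thus $P^{-1}(\Q)$ consists of exactly two antipodal vectors at each point, and the whole question is whether the sign can vary over $\Omega$.

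Next, suppose $\n,\m\in W^{1,1}(\Omega,S^2)$ are two liftings of the same orientable $\Q$, i.e. $P(\n(\x))=\Q(\x)=P(\m(\x))$ for a.e. $\x\in\Omega$. By the fibre fact, the measurable function $\sigma:=\n\cdot\m$ satisfies $\sigma(\x)\in\{-1,+1\}$ a.e. The crucial step is to upgrade this to constancy of $\sigma$. Since $\n,\m$ are $S^2$-valued they lie in $W^{1,1}(\Omega)\cap L^\infty(\Omega)$; as this space is closed under products, $\sigma=\sum_i n_im_i\in W^{1,1}(\Omega)$, with $\nabla\sigma$ given by the product rule. Because $\sigma^2\equiv 1$, the chain rule for Sobolev functions (valid since $\sigma$ is bounded) yields $\bzero=\nabla(\sigma^2)=2\sigma\,\nabla\sigma$ a.e., and $|\sigma|=1$ then forces $\nabla\sigma=\bzero$ a.e. On the connected domain $\Omega$ a $W^{1,1}$ function with vanishing weak gradient is a.e. constant, so $\sigma\equiv 1$ or $\sigma\equiv -1$; equivalently $\m=\n$ or $\m=-\n$ throughout $\Omega$.

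It remains to confirm that there really are two liftings and that they are distinct. Clearly $-\n\in W^{1,1}(\Omega,S^2)$ and $P(-\n)=P(\n)=\Q$, so $-\n$ is a lifting; and since $|\n(\x)|=1$ we have $\n(\x)\neq -\n(\x)$ for every $\x$, so $\n$ and $-\n$ are distinct elements of $W^{1,1}(\Omega,S^2)$. Together with the previous paragraph this gives precisely the two liftings $\pm\n$.

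The main obstacle is the regularity bookkeeping in the middle paragraph: the sign $\sigma$ is a priori only measurable, and one cannot conclude it is constant from the discreteness of its range alone. The argument works only because the constraint $|\n|=|\m|=1$ supplies the $L^\infty$ bounds needed to place $\sigma$ in $W^{1,1}$ and to run the chain rule, after which connectedness of $\Omega$ (built into the word ``domain'') converts $\nabla\sigma=\bzero$ into global constancy; for a disconnected $\Omega$ with $k$ components the correct count would instead be $2^k$.
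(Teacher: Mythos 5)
Your proof is correct, and its key step takes a genuinely different route from the paper's. Both arguments reduce the theorem to showing that the a.e.\ $\{\pm 1\}$-valued sign function ($\sigma=\n\cdot\m$ in your notation, $\tau$ in the paper's, where the second lifting is written as $\tau\n$) has vanishing weak gradient, after which connectedness of $\Omega$ finishes identically in both. The paper gets there via the ACL characterization of $W^{1,1}$: for a.e.\ line $L$ parallel to a coordinate axis both liftings are absolutely continuous on $L\cap C$ for a coordinate cube $C$, so $\tau=\n\cdot(\tau\n)$ is continuous and hence constant on each such segment, and Fubini then shows each weak partial derivative of $\tau$ exists and is zero. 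You instead invoke the product rule for $W^{1,1}\cap L^\infty$ to place $\sigma$ in $W^{1,1}$ outright, and then differentiate the identity $\sigma^2\equiv 1$ to get $2\sigma\nabla\sigma=\bzero$, hence $\nabla\sigma=\bzero$ since $|\sigma|=1$ a.e. Your route is shorter and delegates the measure-theoretic work to two standard lemmas (the product rule in $W^{1,1}\cap L^\infty$, and ``zero weak gradient on a connected open set implies a.e.\ constant''), whereas the paper's is more self-contained, working directly against test functions; the subtlety you correctly flag --- that discreteness of the range of a merely measurable sign proves nothing by itself --- is exactly the point both proofs are organized around. Your closing observation that a disconnected $\Omega$ with $k$ components would yield $2^k$ liftings applies equally to the paper's argument, which likewise uses connectedness only in its final sentence.
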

\begin{proof}
Suppose that $\n$ and $\tau \n$ both generate $\Q$ and belong to $W^{1,1}(\Omega,S^2)$, where $\tau^2(\x)=1$ a.e.. 
Let $C\subset\Omega$ be a cube with sides parallel to the coordinate axes. Let $x_2,x_3$ be such that the line $x_1\mapsto (x_1,x_2,x_3)$ intersects $C$. Let $L(x_2,x_3)$ denote the intersection. For a.e. such $x_2,x_3$ we have that $\n(\x)$ and $\tau(\x)\n(\x)$ are absolutely continuous in $x_1$ on $L(x_2,x_3)$. Hence $\n(\x)\cdot \tau(\x)\n(\x)=\tau(\x)$ is continuous in $x_1$, so that $\tau(\x)$ is constant on $L(x_2,x_3)$.

Let $\varphi\in C_0^\infty(C)$. 
Then by Fubini's theorem
$\int_C\tau\varphi_{,1}d\x =0,$
so that the weak derivative $\tau_{,1}$ exists in $C$ and is zero. Similarly the weak derivatives $\tau_{,2}, \tau_{,3}$ exist in $C$ and are zero. Thus $\nabla\tau=0$ in $C$ and hence $\tau$ is constant in $C$. Since $\Omega$ is connected, $\tau$ is constant in $\Omega$, and thus $\tau\equiv 1$ or $\tau\equiv -1$ in $\Omega$.
\end{proof}

It is easy to construct smooth line fields  in a non simply-connected domain $\om$ which are not orientable (for a rigorous proof of non-orientability for such an example, illustrated in Fig. \ref{recover} (a) below,  see \cite[Lemma 11]{j61}). However if $\Omega$ is simply-connected we have the following result.
\begin{thm}[{\cite[Theorem 2]{j61}}]
\label{orientabilitythm}
If $\Omega\subset\R^3$ is a bounded simply-connected domain   of class $C^0$ and  
$\Q\in W^{1,2}(\om,{\mathcal Q})$  then $\Q$ is orientable.
\end{thm}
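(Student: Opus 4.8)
The plan is to exploit the fact that the constraint set $\mathcal Q$ is diffeomorphic to the real projective plane $\R P^2$, with $P:S^2\to\mathcal Q$, $P(\n)=s(\n\otimes\n-\frac{1}{3}\1)$, being precisely the standard two-to-one universal covering map (the only identification is $P(\n)=P(-\n)$). For a \emph{continuous} (or smooth) $\Q:\om\to\mathcal Q$ the statement is then pure covering-space theory: since $\om$ is simply connected, the induced homomorphism $\Q_*:\pi_1(\om)\to\pi_1(\R P^2)=\mathbb Z/2$ is trivial, so $\Q$ admits a continuous lift $\n:\om\to S^2$ with $P\circ\n=\Q$. Moreover, whenever a lift exists one computes from $Q_{ij,k}=s(n_{i,k}n_j+n_in_{j,k})$ and $|\n|=1$ that $|\nabla\Q|^2=2s^2|\nabla\n|^2$, so the lift automatically inherits $W^{1,2}$ regularity and its energy is controlled by that of $\Q$. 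The whole problem is therefore to pass from smooth maps to a general $\Q\in W^{1,2}(\om,\mathcal Q)$.

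The genuine difficulty is that smooth maps are \emph{not} dense in $W^{1,2}(\om,\mathcal Q)$, because $\pi_2(\R P^2)=\pi_2(S^2)=\mathbb Z\neq0$ permits topologically nontrivial point singularities that cannot be removed by $W^{1,2}$-approximation. The remedy is to use a strong density theorem of Bethuel type: in the critical regime $n=3$, $p=2$ every $\Q\in W^{1,2}(\om,\mathcal Q)$ is the $W^{1,2}$-limit of maps $\Q_j$ that are smooth away from a \emph{finite} set of points $S_j\subset\om$ (the admissible singular dimension being $n-\lfloor p\rfloor-1=0$). Each $\Q_j$ is smooth on $\om\setminus S_j$, and since points have codimension three, $\om\setminus S_j$ is still simply connected; hence the covering-space argument of the first step produces a smooth lift $\n_j:\om\setminus S_j\to S^2$. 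Because single points have zero $2$-capacity in $\R^3$, and $\n_j$ is bounded with $\nabla\n_j\in L^2(\om\setminus S_j)$, the lift extends to $\n_j\in W^{1,2}(\om,S^2)$, with $\int_\om|\nabla\n_j|^2=\frac{1}{2s^2}\int_\om|\nabla\Q_j|^2$ uniformly bounded.

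It then remains to pass to the limit. The bound $|\n_j|=1$ together with the uniform $W^{1,2}$ energy estimate gives a subsequence converging weakly in $W^{1,2}$ and, by Rellich--Kondrachov, strongly in $L^2$ and hence a.e. to some $\n\in W^{1,2}(\om,\R^3)$. The a.e. convergence is exactly what defeats the potential danger that the arbitrary sign choices in the lifts $\pm\n_j$ oscillate and make the weak limit have length less than one: since $|\n_j|=1$ pointwise we get $|\n|=1$ a.e., and since $P$ is continuous and $\Q_j\to\Q$ a.e. along a further subsequence we get $P(\n)=\Q$ a.e. Thus $\n$ is the desired lifting.

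I expect the main obstacle to be the density step and its interface with the topology, rather than the limit passage. One must invoke (or reprove) the Bethuel-type approximation by maps with only point singularities in the borderline case $n=3$, $p=2$, verify that it is compatible with the merely $C^0$ boundary of $\om$ (localising to interior balls and exhausting $\om$, since the energy bound that drives the compactness argument is global), and confirm the two topological facts that make point singularities harmless here: that deleting finitely many points preserves simple connectivity, so the lift survives on $\om\setminus S_j$, and that such points are $W^{1,2}$-removable. By contrast, were $\om$ not simply connected the monodromy of $\Q$ around a noncontractible loop could realise the nontrivial class of $\pi_1(\R P^2)$, and no lift would exist; this is why the hypothesis enters decisively.
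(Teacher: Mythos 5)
Your proposal is correct in outline, but it reaches the theorem by a genuinely different route at the key approximation step. Your first ingredient (covering-space lifting for smooth $\Q$, using $\pi_1(\om)=0$ and the identity $|\nabla\Q|^2=2s^2|\nabla\n|^2$) and your last (uniform $W^{1,2}$ bound on the lifts, weak compactness, Rellich and a.e.\ convergence to preserve $|\n|=1$ and $P(\n)=\Q$) are exactly the paper's first and fourth ingredients --- the latter is its lemma that orientability is preserved under weak convergence, and it works just as well for the weak convergence the paper has as for your strong convergence. Where you diverge is in how $\Q$ is approximated: the paper invokes the Pakzad--Rivi\`ere theorem, which for smooth simply-connected $\partial\om$ produces \emph{globally smooth} maps $\Q^{(j)}$ converging only \emph{weakly} in $W^{1,2}$ (strong density of smooth maps being obstructed, as you note, by $\pi_2(\R P^2)\neq 0$); you instead use Bethuel's strong density of maps smooth away from finitely many points, and then correctly observe that point singularities are harmless --- deleting points from a $3$-dimensional domain does not change $\pi_1$, so the lift survives, and points have zero $2$-capacity, so the bounded lift with $L^2$ gradient extends to $W^{1,2}(\om,S^2)$. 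Your route buys strong convergence and avoids the weak-density theorem at the price of managing singular approximants; the paper's keeps every approximant smooth at the price of a deeper density result. Both routes still owe a debt on the merely $C^0$ boundary: the paper pays it either by approximating $\om$ from inside by smooth simply-connected subdomains or by Bedford's argument, and your proposed localisation to interior balls is essentially the latter --- but note that gluing the local lifts requires more than the two-liftings uniqueness theorem (which fixes the sign only on \emph{connected} overlaps); one needs a good cover and the vanishing of the \v{C}ech obstruction in $H^1(\om;\mathbb Z/2)$, which follows from simple connectivity. That gluing step should be made explicit for the argument to be complete.
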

{\it Thus in a simply-connected domain the constrained Landau - de Gennes and Oseen-Frank theories are equivalent.}

 The ingredients of the proof of Theorem \ref{orientabilitythm} are:
\begin{itemize} \item  By a classical argument a lifting is possible if $\Q$ is smooth and $\Omega$ is simply connected
  \item A theorem of  Pakzad \& Rivi\`ere \cite{PakzadRiviere} implies that if $\partial\Omega$ is smooth, then there is a sequence of smooth $\Q^{(j)}:\Omega\rightarrow {\mathcal Q}$ converging weakly to $\Q$ in $W^{1,2}(\om,{\mathcal Q})$. 
  \item We can approximate a simply-connected domain with boundary of class $C^0$ by ones that are simply-connected with smooth boundary (see \cite{j65}); this step, and the assumption that $\om$ is of class $C^0$,  can be avoided using an argument of Bedford \cite[Proposition 3]{bedfordcholesterics}.
  \item  Orientability is preserved under weak convergence.
  \end{itemize}
For a related topologically more general lifting result see Bethuel \& Chiron \cite{bethuelchiron}.

In order to show that the constrained Landau - de Gennes and Oseen-Frank theories can result in different predictions for non simply-connected domains, we consider a three-dimensional modification  of a two-dimensional example from \cite[Section 5]{j61}, with   more realistic boundary conditions. We denote by $\om_\delta\subset\R^3$, $\delta>1$, the stadium-like open set 
\begin{equation}\om_\delta=\{\x=(x_1,x_2,x_3):(x_1,x_2)\in M_\delta, |x_3|<1\},\label{Mdelta}
\end{equation}
\begin{figure}[tbp] 
  \centering
  \includegraphics[width=4.09in,height=2.82in,keepaspectratio]{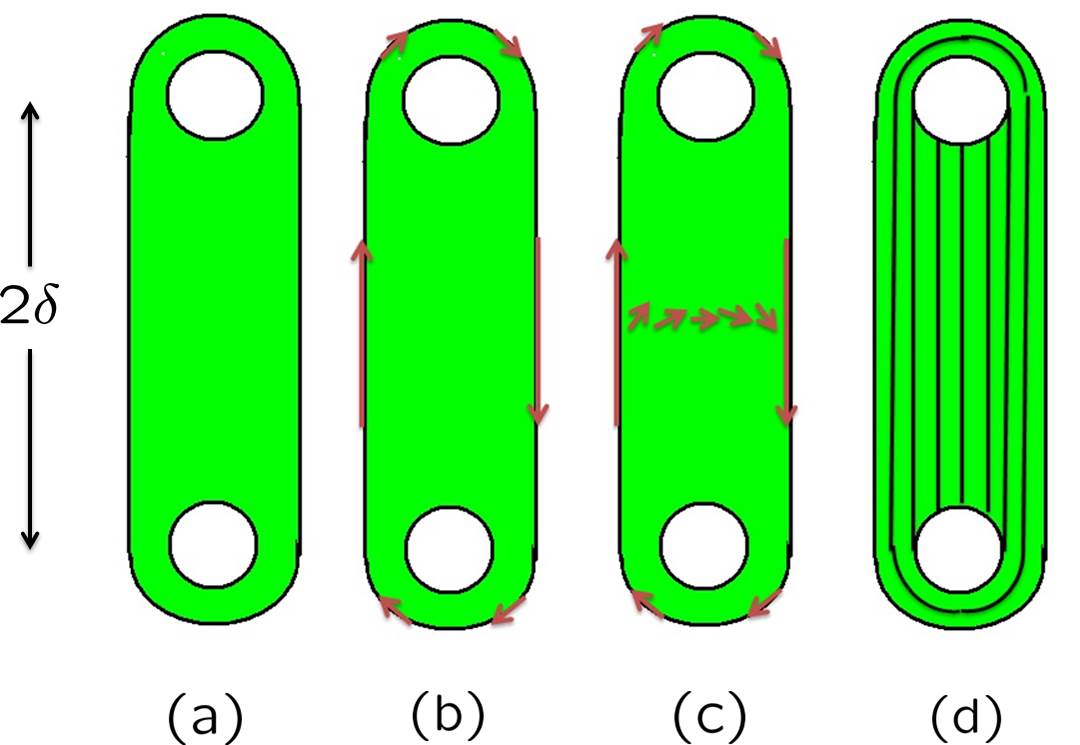}
  \caption{Cross-section (a) of stadium-like domain. The orientable outer boundary data $\n^+$ is shown in (b), and the idea for estimating the energy of any orientable director configuration in (c), with a nonorientable line field that has less energy than any orientable one for large $\delta$ shown in (d).}
  \label{stadium}
\end{figure}
shown in cross-section in Fig. \ref{stadium}(a), where $$M_\delta\stackrel{\rm def}{=}M_1\cup M_2\cup M_3\setminus( M_4\cup M_5)$$ and
\begin{equation}
\begin{array}{l}
M_1{=}\{\x=(x_1,x_2): x_1^2+(x_2-\delta)^2< 1\},\\
M_2{=}\{\x=(x_1,x_2): x_1^2+(x_2+\delta)^2< 1\},\\
M_3{=}\{\x=(x_1,x_2): |x_1|< 1; |x_2|\le \delta\},\\
M_4{=}\{\x=(x_1,x_2): x_1^2+(x_2-\delta)^2\le \frac{1}{4}\},\\
M_5{=} \{\x=(x_1,x_2): x_1^2+(x_2+\delta)^2\le\frac{1}{4}\}.
\label{M:domains}
\end{array}
\end{equation}
For simplicity we consider the constrained Landau - de Gennes energy in the one constant approximation $L_2=L_3=L_4=L_5=0,\, L_1>0$ with the following boundary conditions:
\begin{itemize}
  \item   on the curved  outer boundary $\partial(M_1\cup M_2\cup M_3)\times (-1,1)$ the line field is tangent to the boundary and lies in the $(x_1,x_2)-$plane,
  \item on the flat outer boundary $\{\x\in \partial \om_\delta:|x_3|=1\}$ the line field also lies in the $(x_1,x_2)-$plane,
  \item on the curved inner boundaries $(\partial M_4\cup\partial M_5)\times (-1,1)$ there is weak anchoring of Rapini-Papoular type.
  \end{itemize} 
Thus on the curved outer boundary we have planar boundary conditions in which the line field is specified, while on the flat outer boundary the boundary condition is planar degenerate. The corresponding energy  functional is
\be 
\label{functionalone}
I(\Q)=\half L_1\int_{\om_\delta}|\nabla\Q|^2d\x -\half K\int_{(\partial M_4\cup \partial M_5)\times (-1,1)} (\n\cdot\nnu)^2dS,
\ee
and we seek to minimize $I$ among  $\Q\in W^{1,2}(\om_\delta,{\mathcal Q})$  satisfying the boundary conditions.  
We first note that the set of such $\Q$ is nonempty, a member being given by  
$$\tilde \Q(\x)=\left\{\begin{array}{ll} s\left(\e_2\otimes \e_2-\frac{1}{3}\1\right),\, (x_1,x_2)\in M_3\cap M_\delta\\
s\left(\n_\delta(\x)\otimes \n_\delta(\x)-\frac{1}{3}\1\right),\, (x_1,x_2)\in M_1\setminus M_4,\, x_2\geq\delta\\
s\left(\m_\delta(\x)\otimes \m_\delta(\x)-\frac{1}{3}\1\right),\,(x_1,x_2)\in M_2\setminus M_5,\, x_2\leq -\delta
\end{array}\right.$$ 
where $\e_2=(0,1,0)$ and
$$\n_\delta(\x){=}\left(\frac{x_2-\delta}{|(x_1,x_2-\delta)|},-\frac{x_1}{|(x_1,x_2-\delta)|},0\right),$$
$$\m_\delta(\x){=}\left(\frac{x_2+\delta}{|(x_1,x_2+\delta)|},-\frac{x_1}{|(x_1,x_2+\delta)|},0\right),$$
the corresponding line field being illustrated in Fig. \ref{stadium}(d). Since the constraint $\Q=s\left(\n\otimes\n-\frac{1}{3}\1\right)$ is closed with respect to weak convergence in $W^{1,2}(\om_\delta,M^{3\times 3})$, and the embedding of $W^{1,2}(\om_\delta,M^{3\times 3})$ in $L^2(\partial \om_\delta,M^{3\times 3})$ is compact, a routine use of the direct method of the calculus of variations shows that $I$ attains a minimum on $W^{1,2}(\om_\delta,{\mathcal Q})$ subject to the boundary conditions. We will show that if $\delta$ is sufficiently large then any minimizer is non-orientable, even though the corresponding Oseen-Frank functional has a minimizer.

From \eqref{LstoKs} we see that the corresponding Oseen-Frank functional is
\be 
\label{functionaltwo}
I(\n)= s^2L_1\int_{\om_\delta}|\nabla\n|^2d\x -\half K\int_{(\partial M_4\cup \partial M_5)\times (-1,1)} (\n\cdot\nnu)^2dS,
\ee
which is to be minimized for $\n\in W^{1,2}(\om_\delta,S^2)$ subject to the boundary condition   $\n\cdot\e_3=0$ on the outer boundary. The set of such $\n$ is also nonempty. To see this it suffices to consider $\n$ of the form $\n=(\cos\theta(x_1,x_2),\sin\theta(x_1,x_2),0)$ with $(x_1,x_2)\in M_\delta$. Let $S=M_1\cup M_2\cup M_3$ and define $\n_\pm:\partial S\to S^2$ by
\be 
\label{nplusminus}
\n_\pm(\x)=\left\{\begin{array}{ll}
\pm\e_2 &\mbox{if }x_1=-1, |x_2|<\delta,\\
\pm(x_2-\delta,-x_1,0)&\mbox{if } x_1^2+(x_2-\delta)^2=1, x_2\geq \delta,\\
\mp\e_2&\mbox{if }x_1=1, |x_2|<\delta,\\
\pm(x_2+\delta,-x_1,0)&\mbox{if }x_1^2+(x_2+\delta)^2=1, x_2\leq-\delta,
\end{array}\right.
\ee
so that $\n_+$ is as shown in Fig. \ref{stadium}(b), with $\n_-$ the corresponding anti-clockwise unit vector field. Note that since $S$ is convex, its gauge function with respect to the interior point $(0,\delta)$, namely
$$f(x_1,x_2)=\inf\{t>0:(x_1,x_2-\delta)\in tS\}$$ is convex and hence Lipschitz (see, for example, \cite{rockafellar70}). Hence 
$$\tilde\n(\x)=\n_+\left(f(x_1,x_2)^{-1}\left(x_1,x_2-\delta\right),0\right),$$
as the composition of Lipschitz maps, is a Lipschitz map from $\om_\delta$ to $S^2$, and hence belongs to $W^{1,2}(\om_\delta,S^2)$ and satisfies $\tilde\n\cdot\e_3=0$ and is tangent to the boundary on $(\partial S\times (-1,1))\cup \{\x\in \partial\om_\delta:|x_3|=1\}$. 

Now suppose that $\n\in W^{1,2}(\om_\delta,S^2)$ is orientable and satisfies the boundary conditions. Then the trace of $\n$ belongs to $W^{\half,2}(\partial S\times (-1,1),S^2)$, and thus $\n\cdot(\n_+,0)\in W^{\half,2}(\partial S\times(-1,1))$ and takes only the values $\pm 1$. By \cite[Theorem B.1]{BourgainBrezisMironescu2000} (see also \cite[Lemma 9]{j61})  $\n\cdot(\n^+,0)$ is constant on $\partial S\times(-1,1)$, so that either $\n=(\n_+,0)$ on $\partial S\times(-1,1)$ or $\n=(\n_-,0)$ on $\partial S\times(-1,1)$. Let us suppose that $\n=(\n^+,0)$ on $\partial S\times(-1,1)$, the other case being treated similarly. For $|x_2|<\delta-\half, |x_3|<1$ consider the line segment $J=\{(x_1,x_2,x_3):|x_1|\leq 1\}$. For a.e. such $x_2,x_3$ we have that $\n|_J\in W^{1,2}(J,S^2)$ with $\n(-1,x_2,x_3)=\e_2,\; \n(1,x_2,x_3)=-\e_2$. The minimum value of $\int_{-1}^1|\m_{,1}|^2dx_1$ among $\m\in W^{1,2}((-1,1),S^2)$ satisfying $\m(-1)=\e_2,\;\m(1)=-\e_2$ is easily checked to be $\pi^2/2$. Therefore 
\be 
\label{stadest}
I(\n)\geq 2\pi^2s^2L_1\left(\delta-\half\right)-2\pi K,
\ee
since the sum of the surface areas of the two inner cylinders is $4\pi$. But
\be 
\label{stadest1}
I(\tilde\Q)\leq 4\pi s^2L_1\int_\half^1r^{-1}dr= 4\pi\ln 2\, s^2\;L_1,
\ee
so that from \eqref{stadest} any minimizer of $I(\Q)$ subject to the boundary conditions is non-orientable if 
$$2\pi^2s^2L_1\left(\delta-\half\right)-2\pi K>4\pi\ln 2\, s^2\; L_1,$$
that is if
\be 
\label{largedelta}
\delta >\half+\frac{2\ln 2}{\pi}+\frac{K}{\pi s^2L_1}.
\ee
 
\section{Existence of minimizers in the general Landau - de Gennes theory.}
\label{existenceLdG}
 Using the direct method of the calculus of variations one can prove
\begin{thm}[Davis \& Gartland \cite{gartlanddavis}]
Let $\Omega\subset\mathbb{R}^3$ be a bounded domain with Lipschitz boundary $\partial\Omega$.   For fixed $\theta>0$ let $\psi_B(\cdot,\theta)$ be continuous and bounded below on ${\mathcal E}$, and assume the constants $L_i=L_i(\theta)$ satisfy $L_4=L_5=0$ and 
\be 
\label{longa}
L_1>0, -L_1<L_3<2L_1,  L_1+\frac{5}{3}L_2+\frac{1}{6}L_3>0.
\ee
Let $\bar \Q:\partial \Omega\rightarrow \mathcal{E}$ belong to $W^{\half,2}(\om,\mathcal E)$, where $\mathcal E$ is defined in \eqref{E}. Then 
$$I_\theta(\Q)= \int_\Omega\left(\psi_B(\Q,\theta)+\half\sum_{i=1}^3L_i(\theta) I_i(\nabla \Q)\right)\,dx$$
attains a minimum on 
$$\mathcal{A}=\{\Q\in W^{1,2}(\Omega,\mathcal{E}):\Q|_{\partial\Omega}=\bar \Q\}.$$
\end{thm}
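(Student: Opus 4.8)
The plan is to run the direct method of the calculus of variations on $\mathcal{A}$, so I need three things: that $\mathcal{A}$ is nonempty, that $I_\theta$ is coercive, and that $I_\theta$ is sequentially weakly lower semicontinuous on $W^{1,2}(\om,\mE)$. Nonemptiness is immediate: since $\partial\om$ is Lipschitz and $\bar\Q\in W^{\half,2}(\partial\om,\mE)$, the trace theorem furnishes an extension $\Q_0\in W^{1,2}(\om,\mE)$ with $\Q_0|_{\partial\om}=\bar\Q$, so $\Q_0\in\mathcal{A}$. I then write a general competitor as $\Q=\Q_0+\tilde\Q$ with $\tilde\Q\in W^{1,2}_0(\om,\mE)$, which is the form in which the Poincar\'e inequality below is applied.

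The heart of the proof, and the step I expect to be the main obstacle, is coercivity of the elastic energy, which I would extract from the algebraic structure of the invariants together with the hypotheses on the $L_i$. The key observation is that, because $I_2-I_3=(Q_{ij}Q_{ik,k})_{,j}-(Q_{ij}Q_{ik,j})_{,k}$ is a null Lagrangian, I may subtract a multiple of it and balance the coefficients of $I_2$ and $I_3$, writing the elastic density as
\begin{equation}
\label{decomp}
\half\sum_{i=1}^3 L_iI_i(\nabla\Q)=W_c(\nabla\Q)+\frac{L_2-L_3}{4}(I_2-I_3)(\nabla\Q),\qquad W_c(\vD)=\half L_1I_1(\vD)+\frac{L_2+L_3}{4}\left(I_2(\vD)+I_3(\vD)\right).
\end{equation}
The integral of the null-Lagrangian term over $\om$ depends only on the boundary trace $\bar\Q$, hence is constant on $\mathcal{A}$. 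Thus everything reduces to the claim that $W_c$ is a positive-definite quadratic form, i.e.\ $W_c(\vD)\geq c|\vD|^2$ for some $c>0$ and all third-order tensors $\vD=(D_{ijk})$ satisfying the constraints $D_{ijk}=D_{jik}$ and $D_{iik}=0$ inherited from $\Q\in\mE$. This is a finite-dimensional linear-algebra problem on the space $V$ of such tensors; I would diagonalize $W_c$ on $V$ (equivalently, following Longa, Monselesano \& Trebin, test $\half\sum_i L_iI_i$ against plane waves $\Q(\x)=\A\cos(\bm\xi\cdot\x)$ and extremize over the wave vector $\bm\xi$ and the symmetric trace-free amplitude $\A$), and verify that strict positivity of every eigenvalue holds precisely under $L_1>0$, $-L_1<L_3<2L_1$ and $L_1+\frac{5}{3}L_2+\frac{1}{6}L_3>0$. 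Showing that these three inequalities are exactly, and not merely sufficient for, the positivity conditions is the delicate bookkeeping where the real work lies.

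Granted positive-definiteness of $W_c$, coercivity follows at once. Applying \eqref{decomp} and discarding the constant null-Lagrangian contribution, together with the Poincar\'e inequality $\|\tilde\Q\|_{L^2}\leq C\|\nabla\tilde\Q\|_{L^2}$ for $\tilde\Q\in W^{1,2}_0$ and the fact that $\psi_B(\cdot,\theta)$ is bounded below, one obtains $I_\theta(\Q)\geq c\|\nabla\Q\|_{L^2}^2-C'$ on $\mathcal{A}$. Hence $I_\theta$ is bounded below and any minimizing sequence $\{\Q^{(j)}\}$ is bounded in $W^{1,2}(\om,\mE)$; passing to a subsequence, $\Q^{(j)}\weak\Q^*$ in $W^{1,2}$, and since the trace operator is weakly continuous the affine set $\mathcal{A}$ is weakly closed, so $\Q^*\in\mathcal{A}$.

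Finally I would verify weak lower semicontinuity. The modified density $W_c$ is a nonnegative (indeed positive-definite) quadratic form in $\nabla\Q$, hence convex, so $\Q\mapsto\int_\om W_c(\nabla\Q)\,d\x$ is sequentially weakly lower semicontinuous; the null-Lagrangian term is weakly continuous (and constant on $\mathcal{A}$), so the whole elastic integral is weakly lower semicontinuous. For the bulk term I use the compact embedding $W^{1,2}(\om)\hookrightarrow L^2(\om)$, valid for Lipschitz $\om$, to get $\Q^{(j)}\to\Q^*$ strongly in $L^2$ and, along a further subsequence, a.e.\ in $\om$; since $\psi_B(\cdot,\theta)$ is continuous and bounded below, Fatou's lemma gives $\int_\om\psi_B(\Q^*,\theta)\,d\x\leq\liminf_j\int_\om\psi_B(\Q^{(j)},\theta)\,d\x$. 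Adding the two statements yields $I_\theta(\Q^*)\leq\liminf_j I_\theta(\Q^{(j)})=\inf_{\mathcal{A}}I_\theta$, so $\Q^*$ is the sought minimizer.
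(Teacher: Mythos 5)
Your proposal is correct and follows essentially the same route as the paper, which establishes this theorem by the direct method: nonemptiness of $\mathcal{A}$ via a trace extension, coercivity and convexity of the quadratic elastic part under the inequalities \eqref{longa}, weak closedness of the affine class, weak lower semicontinuity of the convex elastic term, and the compact embedding plus Fatou for the continuous, bounded-below bulk term. The only organisational difference is that the paper's Remark~\ref{longarem} asserts (citing Longa et al.) that \eqref{longa} is already necessary and sufficient for the pointwise bound $\half\sum_{i=1}^{3}L_iI_i(\nabla\Q)\geq c|\nabla\Q|^2$ on $\mathcal{E}$-valued fields, so your null-Lagrangian symmetrisation to $W_c$ --- a perfectly sound and standard way of deriving those conditions, and correctly justified since $I_2-I_3$ is a sum of $2\times 2$ Jacobian minors whose integral is fixed by the trace --- becomes an optional detour once that remark is granted; like the paper, you defer the finite-dimensional verification that \eqref{longa} characterises positivity, which is acceptable here.
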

\begin{rem}\label{longarem}\rm The inequalities \eqref{longa} are necessary and sufficient in order that $$\psi_E(\nabla\Q,\theta)=\half\sum_{i=1}^3 L_iI_i(\nabla\Q)\geq c|\nabla\Q|^2$$ for all $\Q\in\mathcal E$ and some $c>0$, and thus for the quadratic function $\psi_E(\cdot,\theta)$ to be strictly convex (see, for example, \cite{longaetal1987}).
\end{rem}
\begin{rem}\rm
The result holds also for $L_5\neq 0$ provided we make the slightly stronger assumption on $\psi_B$ that 
$$\psi_B(\Q,\theta)\geq c_0(\theta)|\Q|^p-c_1(\theta)$$ for all $\Q\in\mathcal E$, where $p>2, c_0(\theta)>0$ and $c_1(\theta)$ are constants, using the fact that for any $\ep>0$ there is a constant $C_\ep(\theta)>0$ such that  $$\int_\om|I_5|\,d\x\leq \ep\int_\om(|\nabla\Q|^2+|\Q|^p)\,d\x+C_\ep(\theta)$$
for all $\Q\in W^{1,2}(\om,\mathcal E)$.
\end{rem} 
In the case of the quartic bulk potential \eqref{quarticbulk} Davis \& Gartland \cite{gartlanddavis} used elliptic regularity to show that any minimizer $\Q^*$ is a smooth solution of the corresponding  Euler-Lagrange equation, given in weak form by 
\be 
\label{ELeqn}
\int_\om \left(\frac{\partial \psi}{\partial\Q}\cdot \vP+\frac{\partial\psi}{\partial\nabla\Q}\cdot\nabla\vP\right)\,d\x=0
\ee
for all $\vP\in C_0^\infty(\om,{\mathcal E})$. Note that \eqref{ELeqn} is semilinear elliptic in the variables $\q=(q_1,\ldots,q_5)$ given in \eqref{expandQ},  on account of $\psi_E$ not depending explicitly on $\Q$, so that $\psi_E(\nabla\Q,\theta)\geq c|\nabla\q|^2$. 

The hypothesis of Proposition \ref{unbdd} that  $L_4=0$ is unsatisfactory because it  implies by \eqref{LstoKs} that $K_1=K_3$, which is not generally true.  But  if $L_4\neq 0$ we have
\begin{thm}[\cite{u9,j59}]\label{unbdd}
 For any boundary conditions, if $\psi_B(\cdot,\theta)$ is real-valued, continuous and bounded below, and if $L_4(\theta)\neq 0$ then 
$$I_\theta(\Q)=\int_\Omega\left(\psi_B(\Q,\theta)+\half\sum_{i=1}^4L_i(\theta)I_i(\Q,\nabla\Q)\right)\,dx$$
is unbounded below.
\end{thm}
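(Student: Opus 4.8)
The plan is to produce admissible fields $\Q_\omega$ with $I_\theta(\Q_\omega)\to-\infty$, exploiting that the cubic invariant $I_4=Q_{lk}Q_{ij,l}Q_{ij,k}$ in \eqref{isotr} is linear in $\Q$ and quadratic in $\nabla\Q$, whereas $I_1,I_2,I_3$ involve only $\nabla\Q$ and $\psi_B$ will be controlled separately. The heart of the matter is a plane-wave computation. Fix a unit vector $\e$ and a constant $\Q_0\in\mathcal E$ with $L_4(\theta)\,(\Q_0\e\cdot\e)<0$; this is possible since $\Q_0\e\cdot\e$ takes both signs as $\Q_0$ ranges over $\mathcal E$ (for instance $\Q_0=\pm(\e\otimes\e-\frac13\1)$ gives $\Q_0\e\cdot\e=\pm\frac23$). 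Fix also a nonzero $\B\in\mathcal E$ and a scale $t>0$, and consider $\x\mapsto t\Q_0+\B\cos(\omega\e\cdot\x)$, for which $\nabla Q_{ij}=-\omega B_{ij}\sin(\omega\e\cdot\x)\,\e$. Using $\sin^2(\omega\e\cdot\x)\rightharpoonup\frac12$ as $\omega\to\infty$, the space averages of $I_1,I_2,I_3$ are $t$-independent and of order $\omega^2$, while that of $I_4$ equals $\frac12\omega^2|\B|^2\,t\,(\Q_0\e\cdot\e)$ up to lower-order terms. Hence the leading $\omega^2$-coefficient of $\half\sum_{i=1}^4 L_iI_i$ is a $t$-independent number plus $t\,L_4(\Q_0\e\cdot\e)$ times a positive constant; since $L_4(\Q_0\e\cdot\e)<0$ this coefficient is negative once $t$ is large, whatever the signs of $L_1,L_2,L_3$.

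The bulk term is then dealt with by a careful ordering of the two limits. Having \emph{first} fixed $t$ so that the elastic $\omega^2$-coefficient is negative, the values of $t\Q_0+\B\cos(\omega\e\cdot\x)$ lie, for all $\x$ and all $\omega$, in the single compact segment $\{t\Q_0+\B c:c\in[-1,1]\}\subset\mathcal E$; since $\psi_B(\cdot,\theta)$ is continuous it is bounded on this fixed set, so the bulk integral is bounded uniformly in $\omega$. Only \emph{then} do we let $\omega\to\infty$: the elastic energy diverges to $-\infty$ like a negative constant times $\omega^2$ while the bulk integral stays bounded, so $I_\theta\to-\infty$. It is precisely this order of limits that resolves what looks like the main obstacle, namely that at fixed bounded amplitude $I_4$ cannot by itself overcome the fixed contribution of $I_1,I_2,I_3$; the large offset $t\Q_0$ is indispensable, and it is harmless only because a merely continuous $\psi_B$ is automatically bounded on each fixed bounded subset of $\mathcal E$.

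It remains to respect the prescribed boundary data, which I would do by localizing. Fix any admissible competitor $\Q^*$ of finite energy (if there is none the infimum is $+\infty$ and the assertion is vacuous), choose an interior ball $B(\x_0,R)\subset\subset\om$ on which $\Q^*$ may be taken bounded, and at the cost of only a finite change of energy arrange $\Q^*$ to equal a constant $\Q_c$ on $B(\x_0,R/2)$. With a cutoff $\eta\in C_0^\infty(B(\x_0,R))$ equal to $1$ on $B(\x_0,R/2)$ I would set $\Q_\omega=\Q^*+\eta(\x)\big(t\Q_0+\B\cos(\omega\e\cdot\x)\big)$; this is admissible because it coincides with $\Q^*$ near $\partial\om$, and on $\supp\eta$ it stays in a fixed compact subset of $\mathcal E$, so the bulk integral is again bounded uniformly in $\omega$. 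The transition annulus $\{0<\eta<1\}$ generates extra gradient terms through $\nabla\eta$, but these are $O(1)$ in $\omega$ and enter the energy only at order $\omega$ through cross terms with the $O(\omega)$ oscillatory gradient, hence lie in the $o(\omega^2)$ remainder; the leading $\omega^2$-coefficient is still given by the homogeneous computation (now weighted by $\eta^2$ and $\eta^3$) and is negative for $t$ large.

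The genuinely delicate point, and the one I expect to require the most care, is the bulk/boundary bookkeeping of this last step: securing a finite-energy competitor that is bounded on the construction ball when $\psi_B$ is only assumed continuous (so of possibly arbitrary growth) and the boundary data lie merely in $W^{\half,2}$. Granting this, sending $\omega\to\infty$ yields $I_\theta(\Q_\omega)\to-\infty$, which is the claim.
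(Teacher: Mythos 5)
The paper itself gives no proof of Theorem \ref{unbdd}, only the citation to \cite{u9,j59}; your mechanism --- a plane wave $t\Q_0+\B\cos(\omega\,\e\cdot\x)$ with a large constant offset chosen so that $L_4(\Q_0\e\cdot\e)<0$, the observation that the averaged $\omega^2$-coefficient of $\half\sum_{i=1}^4L_iI_i$ is $\frac{1}{4}\big(L_1|\B|^2+(L_2+L_3)|\B\e|^2+tL_4|\B|^2\Q_0\e\cdot\e\big)$, and the order of limits ($t$ fixed first, then $\omega\to\infty$, so that continuity of $\psi_B$ on a fixed compact set controls the bulk term) --- is exactly the reason the cited result holds, and your computation of the averaged invariants is correct. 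The genuine gap is in the localization step, and it is not merely a matter of bookkeeping. First, ``choose an interior ball on which $\Q^*$ may be taken bounded'' is not available: a $W^{1,2}$ map of three variables can be essentially unbounded on every ball (superpose weak singularities on a dense set), so no such ball need exist. Second, the preliminary surgery ``arrange $\Q^*$ to equal a constant on $B(\x_0,R/2)$ at finite energy cost'' runs into the very difficulty it is meant to avoid: on the interpolation region one evaluates $\psi_B$ at $\Q^*$ plus a bounded field, and since $\psi_B$ is only continuous (of arbitrary growth) this is not controlled by $\int_\om\psi_B(\Q^*)\,d\x<\infty$. The same uncontrolled bulk integral appears on your transition annulus $\{0<\eta<1\}$. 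There is also a smaller point you should record: the $\omega^2$-coefficient of $I_4$ on $\supp\eta$ involves $(\Q^*+\eta(\cdot))\e\cdot\e$, not just $t\Q_0\e\cdot\e$; the $\Q^*$ contribution is $t$-independent and integrable (as $\Q^*\in L^6$), so it is dominated for large $t$, but it must be accounted for.

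The gap can be closed by truncating the \emph{perturbation} rather than the competitor. Fix $M$ large enough that $|\{\x\in B(\x_0,R/2):|\Q^*(\x)|\le M\}|>0$, let $\chi:[0,\infty)\to[0,1]$ be Lipschitz with $\chi=1$ on $[0,M]$ and $\chi=0$ on $[2M,\infty)$, and set
$$\Q_\omega=\Q^*+\chi(|\Q^*|)\,\eta(\x)\big(t\Q_0+\B\cos(\omega\,\e\cdot\x)\big).$$
This lies in $W^{1,2}(\om,\mathcal E)$ and leaves the trace unchanged; wherever the perturbation is nonzero one has $|\Q^*|\le 2M$, so $\Q_\omega$ takes values in a fixed compact set there and the bulk integral splits into a part bounded by $\sup_{|\Q|\le 2M+t|\Q_0|+|\B|}\psi_B\cdot|\om|$ and the fixed finite quantity $\int_{\{\chi=0\}}\psi_B(\Q^*)\,d\x$. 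The extra gradient terms coming from $\nabla\chi(|\Q^*|)$ and $\nabla\eta$ are $O(\omega)$ in the energy, the weight $\chi^2\eta^2$ has positive integral over the ball, and the $t$-dependent negative term $\tfrac{t}{2}L_4(\Q_0\e\cdot\e)|\B|^2\int\chi^3\eta^3$ beats the $M$- and $\Q^*$-dependent but $t$-independent contributions, so $I_\theta(\Q_\omega)\to-\infty$. Finally, your parenthetical that the statement is ``vacuous'' when no finite-energy competitor exists is slightly off --- in that degenerate case the infimum is $+\infty$ and the literal assertion would fail --- but this is a matter of how one reads ``for any boundary conditions'' in the theorem rather than a defect of your argument; for no anchoring or weak anchoring one can take globally bounded competitors and the issue does not arise.
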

If we use the singular bulk energy $\psi_B^s(\Q,\theta)$ defined in \eqref{singularpotential}, which does not satisfy the hypotheses on $\mathcal E$ in Theorem \ref{unbdd} on account of Theorem \ref{fprops}, then we can prove existence  when $L_4\neq 0$ under suitable inequalities on the $L_i$, because then the eigenvalue constraint $\lambda_{\rm min}(\Q(\x))>-\frac{1}{3}$ is satisfied a.e. in $\om$ whenever $I_\theta(\Q)<\infty$. For example, if $L_4>0$ then  for any $\Q\in \mathcal E$ with $\lambda_{\rm min}(\Q)\geq -\frac{1}{3}$ we have that 
$$L_4I_4(\Q,\nabla\Q)=L_4Q_{lk} Q_{ij,l} Q_{ij,k}\geq -\frac{1}{3}L_4|\nabla \Q|^2.$$ 
Hence if the inequalities
\be 
\label{longa1}
L_1'>0, -L_1'<L_3<2L_1',  L_1'+\frac{5}{3}L_2+\frac{1}{6}L_3>0.
\ee
hold with $L_1'=L_1-\frac{1}{3}L_4$, then by Remark \ref{longarem} the elastic energy $\psi_E(\Q,\nabla\Q,\theta)=\sum_{i=1}^4I_i(\Q,\nabla\Q)$ is coercive and strictly convex in $\nabla\Q$, and we can apply the direct method in a straightforward way to prove the existence of a minimizer $\Q^*$.

But now it is not so obvious that the Euler-Lagrange equation \eqref{ELeqn} holds, because of the one-sided constraint $\lambda_{\rm min}(\Q(\x))>-\frac{1}{3}$ for a.e. $\x\in\om$. In order to  prove that a minimizer $\Q^*$ is a weak solution of \eqref{ELeqn} it is natural to first try to show that $\lambda_{\rm min}(\Q^*(\x))$ is bounded away from $-\frac{1}{3}$, that is   
\be 
\label{unifbd}
\lambda_{\rm min}(\Q^*(\x))\geq -\frac{1}{3}+\delta \mbox{ for a.e. }\x\in\om \mbox{ and some }\delta>0, 
\ee
because then we can construct two-sided variations. We might expect this to be true because otherwise the integrand will be unbounded in the neighbourhood of some point of $\om$. However many examples from the calculus of variations, even in one dimension (see \cite{j25,j28}), show that it can indeed happen that minimizers have unbounded integrands (tautologically because having the integrand infinite somewhere can enable it to be smaller somewhere else).

{\it  It is an open problem to prove \eqref{unifbd}  for general elastic constants }$L_i$.\footnote{A related, and even harder, open problem is that of proving that  minimizers $\y^*:\om\to\R^3$ of the elastic energy $I(\y)=\int_\om W(\nabla\y(\x))\,d\x$ in nonlinear elasticity under the non-interpenetration hypothesis $W(\A)\to \infty$ as $\det\A\to 0+$ satisfy $\det \nabla y^*(\x)\geq\delta>0$ a.e. in $\om$.}  However, in the one-constant case this can be proved:
\begin{thm}[\cite{u9,j59}]
\label{boundedaway}
Let $\Q^*$ minimize
$$I_\theta(\Q)=\int_\Omega \left(\psi_B^s(\Q,\theta)+\half L_1(\theta)|\nabla \Q|^2\right)\,dx,$$
subject to $\Q|_{\partial\om}=\Q_0$, where $L_1(\theta)>0$ and $\Q_0(\cdot)$ is sufficiently smooth with $\lambda_{\min}(\Q_0(\x))>-\frac{1}{3}$. Then
$$\lambda_{\min}(\Q^*(x))>-\frac{1}{3}+\delta,$$
for some  $\delta>0$ and $\Q^*$ is a smooth solution of  the Euler-Lagrange equation \eqref{ELeqn}.
\end{thm}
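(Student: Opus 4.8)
The plan is to reduce everything to the uniform bound \eqref{unifbd}; the paper itself remarks that once \eqref{unifbd} is available the weak Euler--Lagrange equation and regularity follow by standard arguments. I first record the consequences of finite energy. A minimizer $\Q^*$ exists with $I_\theta(\Q^*)<\infty$, so $\lambda_{\min}(\Q^*(\x))>-\tfrac13$ for a.e.\ $\x$; the constraints \eqref{evconstraints} then bound $|\Q^*(\x)|$ a.e.\ by a universal constant, and since $\psi_B^s=k_B\theta f-\kappa|\Q|^2$ by \eqref{singularpotential}, finiteness of the energy gives $f(\Q^*)\in L^1(\om)$. Because $\Q_0$ is smooth with $\lambda_{\min}(\Q_0)>-\tfrac13$, there is $\delta_0>0$ with $\lambda_{\min}(\Q_0(\x))\ge-\tfrac13+\delta_0$ on $\partial\om$. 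The mechanism behind the bound is that the strictly convex $f$ of Theorem \ref{fprops} attains its global minimum at the isotropic state $\Q=\bzero$: since $f(\Q)=\min\{\int_{S^2}\rho\ln\rho : \Q(\rho)=\Q\}$, Gibbs' inequality gives $\int_{S^2}\rho\ln\rho\ge-\ln 4\pi$ with equality only for the uniform density, whose $\Q$-tensor is $\bzero$ (consistent with the leading constant $\ln 4\pi$ in \eqref{expansion}). Hence contracting $\Q$ toward $\bzero$ decreases $f$ while simultaneously pushing $\lambda_{\min}$ upward, as $\lambda_{\min}((1-t)\Q)+\tfrac13=\tfrac{t}{3}+(1-t)(\lambda_{\min}(\Q)+\tfrac13)\ge\tfrac{t}{3}$ for $t\in(0,1)$.

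To establish \eqref{unifbd} I argue by contradiction, assuming $\meas A_\eta>0$ for every $\eta>0$, where $A_\eta=\{\x:\lambda_{\min}(\Q^*(\x))<-\tfrac13+\eta\}$. Fix $\eta\le\delta_0$ and set $K_\eta=\{\Q\in\mathcal E:\lambda_{\min}(\Q)\ge-\tfrac13+\eta\}$, a closed convex set whose nearest-point projection $P_\eta$ is $1$-Lipschitz. The competitor $\tilde\Q=P_\eta\circ\Q^*$ lies in $W^{1,2}(\om,\mathcal E)$, agrees with $\Q^*$ off $A_\eta$, and meets the boundary data (which already lie in $K_\eta$); moreover $|\nabla\tilde\Q|\le|\nabla\Q^*|$ a.e., so the elastic term does not increase. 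It then suffices to show $\int_{A_\eta}(\psi_B^s(\tilde\Q)-\psi_B^s(\Q^*))<0$. The $-\kappa|\Q|^2$ part changes by a bounded amount on $A_\eta$ (eigenvalue bounds), so the decrease must come from $f$. For this I would use convexity, $f(\tilde\Q)-f(\Q^*)\le f'(\tilde\Q)\cdot(\tilde\Q-\Q^*)$, together with the fact that near $\partial K_\eta$ the gradient $f'(\tilde\Q)$ is dominated by $\tfrac{\partial h}{\partial\lambda_{\min}}(\n_{\min}\otimes\n_{\min}-\tfrac13\1)$ with $\tfrac{\partial h}{\partial\lambda_{\min}}\to-\infty$, a direction asymptotically aligned with the outward normal of $K_\eta$, whereas $\tilde\Q-\Q^*$ points inward; this makes $f'(\tilde\Q)\cdot(\tilde\Q-\Q^*)$ large and negative on $A_\eta$, contradicting the minimality of $\Q^*$. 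Thus \eqref{unifbd} holds for some $\delta>0$.

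Granting \eqref{unifbd}, the minimizer takes values in a compact subset of the region $\{\lambda_{\min}>-\tfrac13\}$, on which $\psi_B^s(\cdot,\theta)$ is smooth (it is built from the smooth, strictly convex $\ln Z(\mu)$ via the dual problem $J(\mu)$, so $\Q\mapsto\partial\psi_B^s/\partial\Q$ is smooth and bounded along $\Q^*$). Two-sided variations $\Q^*+\tau\vP$ with $\vP\in C_0^\infty(\om,\mathcal E)$ are therefore admissible for small $|\tau|$, yielding the weak form \eqref{ELeqn}. Since the one-constant elastic energy has principal part $L_1\Delta\Q$, the system is semilinear elliptic with smooth bounded right-hand side, and the elliptic bootstrap used by Davis \& Gartland \cite{gartlanddavis} for the quartic potential upgrades $\Q^*$ to a smooth solution.

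The main obstacle is the bulk-energy step in the contradiction argument: exhibiting a single admissible competitor that both fails to increase the Dirichlet energy and strictly lowers $\psi_B^s$ on $A_\eta$. The two natural choices pull in opposite directions. The nearest-point projection $P_\eta$ is manifestly $1$-Lipschitz, but its effect on the non-convex $\psi_B^s$ (through the concave term $-\kappa|\Q|^2$ and the gap between the $-\tfrac12\ln$ and $-\ln$ bounds of Theorem \ref{fprops}) is not monotone a priori; the radial contraction $(1-t)\Q$ manifestly decreases $f$ but is not $1$-Lipschitz. Reconciling the two---via the normal-cone alignment of $f'$ sketched above, or by localizing the contraction with a cutoff $\chi$ and absorbing the resulting $\nabla\chi$ terms into the logarithmically divergent gain from $f$---is the delicate point on which the whole argument turns.
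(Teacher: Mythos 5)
The paper states this theorem without proof, citing \cite{u9,j59}, so there is no in-text argument to compare against; I assess your proposal on its own terms. The reduction at the end (uniform bound \eqref{unifbd} $\Rightarrow$ admissibility of two-sided variations $\Rightarrow$ weak form \eqref{ELeqn} $\Rightarrow$ elliptic bootstrap) is sound and matches the role the paper assigns to \eqref{unifbd}. Your preliminary observations are also correct: finite energy forces $\lambda_{\min}(\Q^*)>-\frac13$ a.e., the eigenvalue constraints bound $|\Q^*|$, $f(\Q^*)\in L^1$, $K_\eta$ is closed and convex with $1$-Lipschitz nearest-point projection, and $P_\eta\circ\Q^*$ is admissible with $|\nabla(P_\eta\circ\Q^*)|\le|\nabla\Q^*|$ a.e.

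The gap is the one you concede yourself, and it is genuine: the strict decrease of the bulk energy under projection is never established, and the heuristic offered does not suffice. Writing $d(\x)=|\tilde\Q-\Q^*|$, convexity gives $k_B\theta\bigl(f(\tilde\Q)-f(\Q^*)\bigr)\le k_B\theta\,\nabla f(\tilde\Q)\cdot(\tilde\Q-\Q^*)$, a quantity of size $d\,\bigl|\nabla f(\tilde\Q)\cdot\N\bigr|$ with $\N$ the unit inward normal to $K_\eta$ at $\tilde\Q$. This is not ``large and negative'' in absolute terms: it must beat the $-\kappa|\Q|^2$ contribution, which is also $O(d)$ pointwise, so what you actually need is that $|\nabla f\cdot\N|\to\infty$ \emph{uniformly} on $\partial K_\eta\cap\{|\Q|^2\le\frac23\}$ as $\eta\to0$. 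That is a statement about the gradient of $f$ near the singular boundary, whereas Theorem \ref{fprops} controls only $f$ itself, between $-\frac12\ln$ and $-\ln$; a convex function squeezed between those bounds does not obviously have normal derivative diverging like $1/\eta$ (convexity plus the two-sided bound yields only a slower divergence, and you neither state nor prove any quantitative rate). There is also an unaddressed issue at the non-smooth part of $\partial K_\eta$ where $\lambda_{\min}$ has multiplicity two: there the normal cone is higher-dimensional and both $\partial h/\partial\lambda_{\rm min}$ and $\partial h/\partial\lambda_{\rm mid}$ blow up, so the sign and size of $\nabla f(\tilde\Q)\cdot(\tilde\Q-\Q^*)$ need a separate computation. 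Until the uniform divergence of the inward normal derivative of $f$ on $\partial K_\eta$ is proved --- e.g.\ from the dual representation $f=\sum_i\mu_i(\lambda_i+\frac13)-\ln Z(\mu)$, whose multipliers $\mu_i$ diverge as $\lambda_i\to-\frac13$ --- the contradiction is not reached and the theorem is not proved. (A minor side error: the obstruction to the radial contraction $(1-t)\Q^*$ is not a failure of $1$-Lipschitzness --- it is $(1-t)$-Lipschitz --- but that it violates the boundary condition, which is why the cutoff you mention would be needed.)
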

For additional partial results see Evans, Kneuss \& Tran \cite{EvansKneussTran2016} and Bauman \& Phillips \cite{baumanphillips2016}. 

\section{Description of defects}
\label{defects}
\subsection{Summary of liquid crystal models}
From now on we shall for simplicity restrict attention to nematics and drop the explicit dependence on the temperature. Thus we consider the Landau - de Gennes energy functional
\be 
\label{LdGf}
I_{\rm LdG}(\Q)=\int_\om \psi(\Q,\nabla\Q)\,d\x,
\ee 
where $\psi(\Q,\nabla\Q)=\psi_B(\Q)+\psi_E(\Q,\nabla\Q)$,  $\psi_B(\Q)$ has one of the forms \eqref{quarticbulk}, \eqref{singularpotential} previously discussed, and  $\psi_E(\Q,\nabla\Q)=\half\sum_{i=1}^4L_iI_i(\Q,\nabla\Q)$, together with the corresponding Oseen-Frank energy functional
\be 
\label{OFf}
I_{\rm OF}(\n)=\int_\om W(\n,\nabla\n)\,d\x,
\ee
where 
\begin{eqnarray}
\label{OFe}    2W(\n,\nabla\n)  = K_{1}(\textrm{div}\,
\n)^{2} + K_{2}(\n\cdot \textrm{curl}\,\n)^{2} \\&& \hspace{-1.5in} +
K_{3}|\n \times \textrm{curl}\,\n|^{2}  +
(K_{2}+K_{4})(\textrm{tr}(\nabla \n)^{2} - (\textrm{div}
\n)^{2}).\nonumber
\end{eqnarray}
The Euler-Lagrange equation for $I_{\rm OF}(\n)$ subject to the pointwise constraint $|\n(\x)|=1$ is given by 
\be 
\label{ELOF}
({\bf 1}-\n\otimes\n)\left({\rm div}\,\frac{\partial W}{\partial \nabla \n}-\frac{\partial W}{\partial \n}\right)=\bzero,
\ee
or equivalently by
\be 
\label{ELOF1}
\Div\frac{\partial W}{\partial\nabla\n}-\frac{\partial W}{\partial\n}=\lambda(\x)\n(\x),
\ee
where $\lambda(\x)$ is a Lagrange multiplier.

As we have seen, under the pointwise uniaxial constraint $$\Q=s\left(\n\otimes\n-\frac{1}{3}\1\right)$$ with $s>0$ constant the two functionals \eqref{LdGf}, \eqref{OFf} in general give different predictions, whereas they are equivalent for simply-connected domains $\om$. Another possible ansatz is to allow $s$ to depend on $\x$, when $I_{\rm LdG}$ reduces to an energy functional of the form proposed by Ericksen \cite{ericksen1991liquid}
\be 
\label{ericksenenergy}
I_{\rm E}(s,\n)=\int_\om W(s, \nabla s, \n, \nabla\n)\,d\x.
\ee
 \subsection{Function spaces}
As described, for example, in \cite{j70,j67,bedfordcholesterics}, it is not sufficient to specify the energy functional, as part of the model is also the function space in which minimizers (and critical points etc) are to be sought. The larger this function space, the wilder potential singularities of minimizers and critical points may be. Changing the function space can change the predicted minimizers, as well as the minimum value of the energy (the {\it Lavrentiev phenomenon}) as described both for nonlinear elasticity and liquid crystals in \cite{j70}. 

As we have already seen, the usual function space considered for $\Q$ in the Landau - de Gennes functional $I_{\rm LdG}(\Q)$ is the Sobolev space $W^{1,2}(\om,\mathcal E)$. For the Oseen-Frank energy $I_{\rm OF}(\n)$ we   have that $W(\n,\nabla\n)\leq c_1|\nabla\n|^2$ for some constant $c_1>0$, while the inequality
\be 
\label{coercive} W(\n,\nabla\n)\geq c_0|\nabla\n|^2
\ee 
for some constant $c_0>0$ holds if and only if the 
 {\it Ericksen inequalities}
 \cite{ericksen1966}
\be 
\label{ericksenineq}
K_1>0,\, K_2>0,\, K_3>0,\, K_2>|K_4|, \,2K_1>K_2+K_4,
\ee
 are satisfied. Hence a natural function space for $\n$ is $W^{1,2}(\om,S^2)$.  
\subsection{Point defects}
Defects can roughly be thought of as locations in the neighbourhood of which the order parameter ($\Q-$tensor, director or line field) changes rapidly. How they are described depends on the model and function space used. The simplest {\it point defect}, located at the point $\x=\bzero\in\om$, is described by the radial hedgehog with director field
\be 
\label{hedgehog}
\hat\n(\x)=\frac{\x}{|\x|}.
\ee
For $\x\in \om\setminus\bzero$ and any Frank constants the hedgehog is a smooth solution of \eqref{ELOF} with gradient
\be 
\label{hedgehoggradient}
\nabla\hat\n(\x)=\frac{1}{|\x|}\left(\1-\hat\n(\x)\otimes\hat\n(\x)\right),
\ee
so that $|\nabla\hat\n(\x)|=\frac{\sqrt 2}{|\x|}$. After checking that indeed \eqref{hedgehoggradient}   gives the weak derivative of $\hat\n$, we see that $\hat\n\in W^{1,p}(\om,S^2)$ if and only if $1\leq p<3$, so that $I_{\rm OF}(\hat\n)<\infty$.

In  the {\it one-constant approximation} $K_1=K_2=K_3=K,\; K_4=0$ the hedgehog $\hat\n$ is the unique
 minimizer of $I_{OF}(\n)=\half K\int_\Omega|\nabla\n|^2d\x$ subject to its own boundary conditions
(see Brezis, Coron \& Lieb \cite{breziscoronlieb}, Lin \cite{lin1987}). In the one-constant  case any minimizer is smooth in $\om$ 
except possibly for a finite number of point defects 
(Schoen \& Uhlenbeck \cite{schoenuhlenbeck}) at points $\x(i)\in\om$ such that
$$\n(\x)\sim \pm\vR(i)\frac{\x-\x(i)}{|\x-\x(i)|}\mbox{ as }\x\to\x(i),$$
for some $\vR(i)\in SO(3)$. 

For general elastic constants $K_i$ it is not known whether minimizers can only have a finite number of point defects, though by a partial regularity result of Hardt,  Kinderlehrer \& Lin \cite{hardtkinderlehrerlin1988} the set of singularities has one-dimensional Hausdorff measure zero. The conditions under which the hedgehog minimizes $I_{\rm OF}(\n)$ subject to its own boundary conditions are not known.  H\'elein \cite{helein} observed that the method of Lin \cite{lin1987} shows that the hedgehog is energy-minimizing if $K_2\geq K_1$, a detailed proof being given by Ou \cite{ou92}. At the same time, the work of H\'elein \cite{helein}, Cohen \& Taylor \cite{cohentaylor} and Kinderlehrer \& Ou \cite{KinderlehrerOu1992} established that the second variation of $I_{\rm OF}(\n)$ at $\hat\n$ is positive if and only if $8(K_2-K_1)+K_3\geq 0$. Thus $\hat\n$ is not minimizing if $8(K_2-K_1)+K_3<0$. For more discussion see \cite{j70}.

One indication as to why the one-constant approximation is easier than the general case is that in general the Lagrange multiplier $\lambda(\x)$ corresponding to the pointwise constraint $|\n(\x)|=1$ in the Euler-Lagrange equation \eqref{ELOF1} for $I_{\rm OF}$
in general depends on second derivatives of $\n$, as can be seen by taking the inner product of \eqref{ELOF1} with $\n$. However the identity $\Delta\n\cdot\n=-|\nabla\n|^2$ for $|\n|=1$ shows that in the one-constant case $\lambda =K|\nabla\n |^2$ is an explicit function of $\nabla\n$.

Since weak solutions in the Landau - de Gennes are smooth, modulo the 
 difficulties with the eigenvalue constraints described in Section \ref{existenceLdG} when the singular bulk potential is used,  defects are not represented
by   singularities in $\Q$. Hence the best way to characterize defects is unclear (for a discussion see Biscari \& Peroli \cite{BiscariPeroli1997}). 
 In both the Landau - de Gennes
 and Ericksen theories there are solutions to the Euler-Lagrange
equations representing {\it melting hedgehogs}, of the form
$$\Q(\x)=s(|\x|)\left(\frac{\x}{|\x|}\otimes\frac{\x}{|\x|}-\frac{1}{3}{\bf 1}\right),$$
where $s(0)=0$. For the quartic bulk energy $\psi_B$ and the one constant elastic energy such a solution is 
shown by   Ignat, Nguyen,  Slastikov \& Zarnescu \cite{ignatetal2015} to be a local minimizer for $\Omega={\mathbb R}^3$ 
subject to the 
condition at infinity
$$\Q(\x) \to s\left(\frac{\x}{|\x|}\otimes \frac{\x}{|\x|}-\frac{1}{3}
{\bf 1}\right) \mbox{ as } |\x|\to \infty,$$
where
$s=\frac{b+\sqrt{b^2-24ac}}{4c}>0$, for temperatures
close to the nematic initiation temperature. However for lower temperatures the melting hedgehog is not  a minimizer 
(Gartland \& Mkaddem \cite{GartlandMkaddem1999}) and numerical evidence suggests
a biaxial torus structure for the defect without melting. For other work on the description of the hedgehog defect according to the Landau - de Gennes theory see, for example, \cite{HenaoMajumdar2012,HenaoMajumdarPisante2016,KraljVirga2001,lamy2013,Majumdar2012}. 

The situation as regards minimizers in the Landau - de Gennes theory being smooth might be different for free-energy densities $\psi(\Q,\nabla \Q)$
which are convex but not quadratic in $\nabla \Q$. For such integrands there is a counterexample of
\v{S}ver\'{a}k \& Yan \cite{sverakyan00} with a singular minimizer of the form
$$\Q(\x)=|\x|\left(\frac{\x}{|\x|}\otimes\frac{\x}{|\x|}-\frac{1}{3}{\bf 1}\right).$$

\subsection{Line defects}
It is natural to consider a two-dimensional version of the hedgehog given by the director field (see Fig. \ref{2Dhedgehog})
$$\tilde \n(\x)=\left(\frac{x_1}{r},\frac{x_2}{r},0\right),\;\;r=\sqrt{x_1^2+x_2^2}\,,$$
defined for $\x$ belonging to the cylinder $\om=\{\x:0<x_3<L, r<1\}$.
\begin{figure}[htbp] 
  \centering
  \includegraphics[width=4.09in,height=2.18in,keepaspectratio]{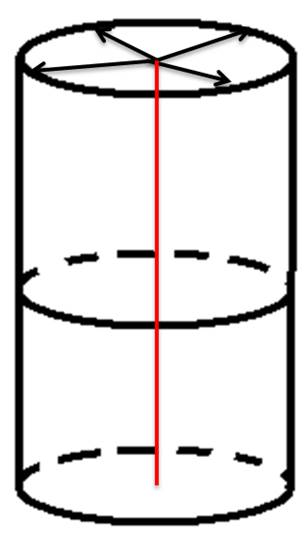}
  \caption{Two-dimensional hedgehog with director pointing radially outwards from axis of cylinder.}
  \label{2Dhedgehog}
\end{figure}
Since $|\nabla \tilde \n(\x)|^2=\frac{1}{r^2}$, it follows from \eqref{coercive} that  under the Ericksen inequalities \eqref{ericksenineq} 
$$I(\tilde\n)\geq c_0\int_\Omega |\nabla\n|^2d\x=2\pi c_0\int_0^Lr\cdot\frac{1}{r^2}\,dr=\infty,$$
so that $\tilde\n\not\in W^{1,2}(\om,S^2)$ and the {\it line defect} $\{(0,0,x_3):0<x_3<L\}$ has infinite energy according to the Oseen-Frank theory. 

Other more commonly observed line defects are the {\it index-$\half$} defects illustrated in Fig. \ref{indexhalf}, in which the corresponding line fields are parallel to the curves shown in the $(x_1,x_2)-$plane with zero $x_3$ component, and the line defects are in the $x_3-$direction at the points shown. (See \cite[Fig. 6]{ZhangKinlochWindle2006} for an interesting example of the defect in Fig. \ref{indexhalf} (a) occurring in a liquid crystalline phase of an aqueous suspension of carbon nanotubes.)  The terminology index-$\half$ means that the director rotates by half of $2\pi$ on a circle surrounding the defect. In particular the index-$\half$ defects are not orientable, as can be seen for example in the case Fig. \ref{indexhalf} (a) by trying to assign an orientation as in Fig. \ref{indexhalfa} (a). Thus by Theorem \ref{orientabilitythm} these line fields have infinite energy in the constrained Landau - de Gennes theory. 

Could we alter the line field just in a core encircling the line defect (see Fig. \ref{indexhalfa} (b)) so that the new line field has finite energy? For the two-dimensional hedgehog this is possible by `escape into the third dimension' (see, for example,  \cite[p 115ff]{stewart04}). However for the index-$\half$ defects it is not possible (while maintaining the uniaxiality constraint \eqref{unicon}) since the nonorientability argument works outside such a core, so that again we would have a contradiction to Theorem \ref{orientabilitythm}. In a sector such as shown in Fig. \ref{indexhalfa} (c) the line field is orientable, but the corresponding Oseen-Frank energy is still infinite (this follows from the preceding argument for at least one of the three sectors, and can be proved for each of them by applying  \cite[Theorem B.1]{BourgainBrezisMironescu2000} in a similar way as in Section \ref{orientability}).
\begin{figure}[tbp] 
  \centering
  \includegraphics[width=4in,height=2.29in,keepaspectratio]{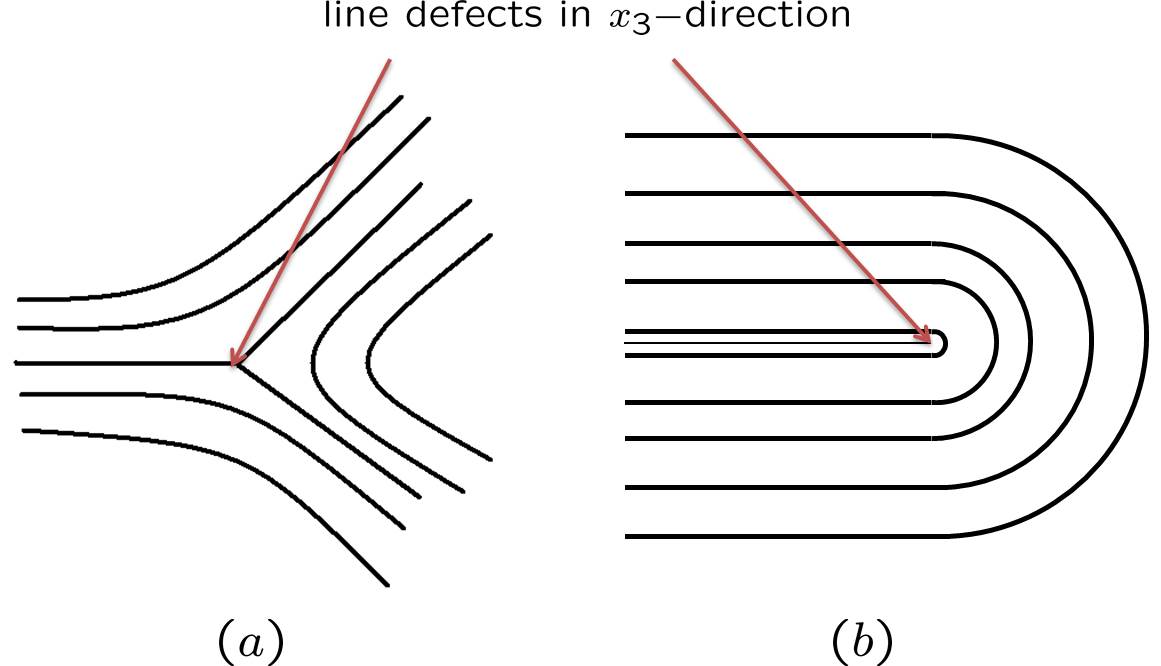}
  \caption{Two examples of index-$\half$ defects.}
  \label{indexhalf}
\end{figure}
\begin{figure}[tbp] 
  \centering
  \includegraphics[width=4.09in,height=1.62in,keepaspectratio]{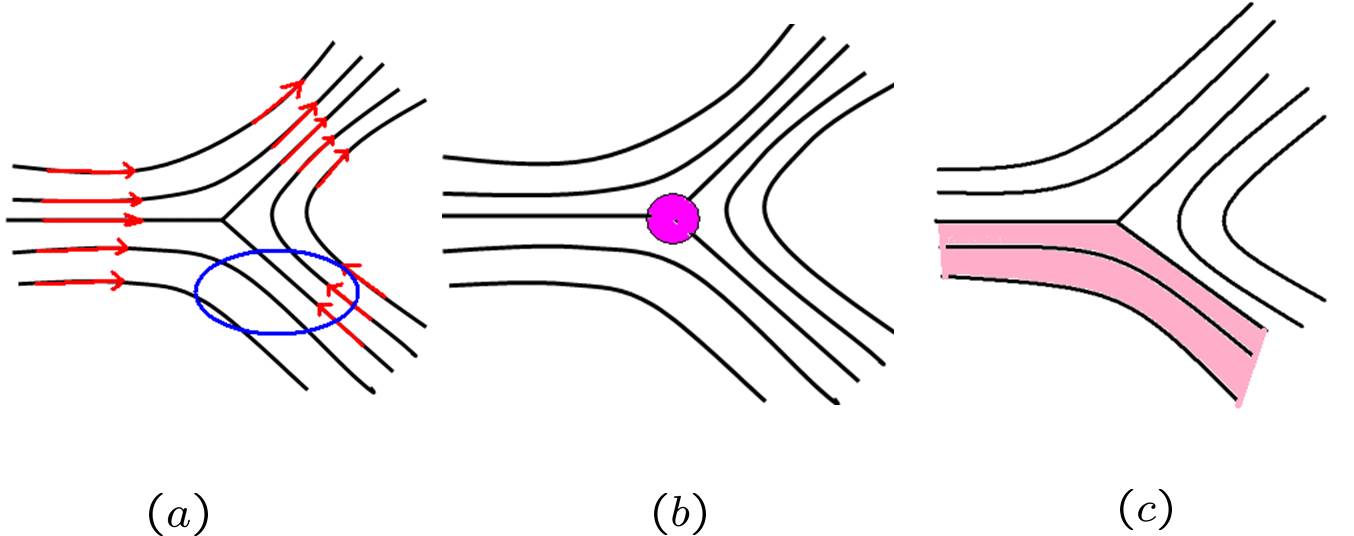}
  \caption{(a) argument showing that the index-$\half$ defect is not orientable, (b) the energy in the constrained Landau - de Gennes theory is still infinite however we alter  the line field in a core around the defect, (c) the Oseen-Frank energy is infinite in each sector.}
  \label{indexhalfa}
\end{figure}

That these line defects have infinite energy arises from the quadratic
growth in $\nabla\n$ of $W(\n,\nabla\n)$, which in turn follows from the quadratic growth of $\psi(\Q,\nabla\Q)$ in $\nabla\Q$ according to the derivation of the constrained theory. But  there is no reason to suppose
 that $W(\n,\nabla\n)$ is quadratic for large $|\nabla\n|$ (such as 
near defects). So a possible remedy would be to assume that $W(\n,\nabla\n)$
 has {\it subquadratic} growth, i.e. 
\be 
\label{subq}
W(\n,\nabla\n)\leq C(|\nabla \n|^p+1),
\ee 
where $1\leq p<2$, which would make line defects
 have finite energy. This can be done without affecting the behaviour of $W$ for small values of $\nabla\n$. For example, we can let
$$
W_\alpha(\n,\nabla\n)=\frac{2}{p\alpha}\left(\left(1+\alpha W(\n,\nabla \n)\right)^{\frac{p}{2}}-1\right),
$$
where   $\alpha>0$ is small. Then $W_\alpha(\n,\nabla\n)\to W(\n,\nabla\n)$ as
 $\alpha\to 0$.
 As shown in \cite{j67}, and assuming the Ericksen inequalities, $W_\alpha$ satisfies the growth 
conditions 
$$
C'_\alpha(|\nabla \n|^p-1)\leq W_\alpha(\n,\nabla\n)\leq C_\alpha|\nabla\n|^p,
$$
for positive constants $C_\alpha, C'_\alpha$. Setting 
$$I_\alpha(\n)=\int_\Omega W_\alpha(\n,\nabla\n)\,d\x,$$
 we obtain  that for the two-dimensional hedgehog $I_\alpha(\tilde\n)<\infty$ as desired.
 Also $W_\alpha(\n,\cdot)$ is convex.

Another undesirable consequence of the quadratic growth of $W(\n,\cdot)$ concerns the existence of finite energy configurations satisfying prescribed boundary conditions of physical interest. When $\om\subset \R^3$ has $C^2$ boundary and  
  ${\bf N}\in W^{\half,2}(\partial\om,S^2)$ , then
  (see Hardt \& Lin \cite[Theorem 6.2]{hardtlin})  there is an
$\n\in W^{1,2}(\om,S^2)$ with $\n|_{\partial\om}={\bf N}$. However the situation is different for less regular boundaries and boundary data.
Indeed, as shown by Bedford \cite{bedfordthesis} (an alternative proof can be based on \cite[Theorem B.1]{BourgainBrezisMironescu2000}), for the cube $Q=(-1,1)^3$ there is no $\n\in W^{1,2}(Q,S^2)$ satisfying the homeotropic boundary conditions $\n|_{\partial Q}=\nnu$, where $\nnu$ denotes the unit outward normal. However there are such $\n\in W^{1,p}(Q,S^2)$ for $1\leq p<2$, an example being given by\footnote{This can be verified by separately estimating $\nabla\n$ in neighbourhoods of the points where it is not smooth, namely $\x=0$, points on a cube edge, and corners of the cube.} 
\be 
\label{cubeex}
\n(\x)=\frac{\m(\x)}{|\m(\x)|}, \mbox{ where }\m(\x)=\left(\frac{x_1}{1-x_1^2}, \frac{x_2}{1-x_2^2},\frac{x_3}{1-x_3^2}\right),
\ee
so that for suitable $W$ with subquadratic growth in $\nabla\n$ there would be a corresponding energy minimizer having finite energy.

However, considering $W$ with subquadratic growth is insufficient by itself to handle the case of index-$\half$ defects due to their nonorientability. We return to this issue in the next section.

In the Ericksen theory (see \eqref{ericksenenergy}) we can model point and line defects 
by finite energy configurations in which $\n$ is
 discontinuous and $s=0$ at the defect (melting at the core). In this case there is
no need to change the growth rate at infinity. For example,
if we consider the special case when 
$$I_{LdG}(\Q)=\int_\Omega\left(\frac{K}{2}|\nabla \Q|^2+\psi_B( \Q)\right)\,d\x,$$
then the uniaxial ansatz
$$
\Q(\x)=s(\x)\left(\n(\x)\otimes \n(\x)-\frac{1}{3}{\bf 1}\right)
$$
gives the functional
$$I_E(s,\n)=\int_\Omega \left(\frac{K}{2}(|\nabla s|^2+2s^2|\nabla \n|^2)
 +\psi_B(s)\right)\,d\x,$$
where $\psi_b(s)=\hat\psi(\frac{2s^2}{3}, \frac{2s^3}{27})$. Then $\n$ can have a singularity at a point or curve and  
  $s$ can tend to zero sufficiently fast as the point or curve is
approached to make $I_E(s,\n)$ finite. However for non simply-connected domains or index-$\frac{1}{2}$ defects there is the same
 orientability
problem as in the Oseen-Frank theory.

\subsection{Planar defects}
\label{surfacedefects}
Following \cite{u14,j67}, and motivated by similar models from fracture mechanics (see \cite{francfortmarigo98,bourdinetal08}),  let us explore whether it might be reasonable to consider a free-energy functional for nematic and cholesteric liquid crystals of free-discontinuity type
\be 
\label{freedisc}
I(\n)=\int_\Omega W(\n,\nabla \n)\,d\x+\int_{S_\n}f(\n_+,\n_-,\nnu)\,d{\mathcal H}^2,
\ee
for $\n\in SBV(\Omega,S^2)$, where  $SBV(\Omega,S^2)$ denotes the space of special mappings of bounded variation taking values in $S^2$, $\nnu$ is the normal to the jump set $S_\n$ and $\n_+, \n_-$ the corresponding limits from either side of $S_\n$. The reader is referred to \cite{ambrosioetal00} for a comprehensive discussion of $SBV$, including an explanation of why $S_\n, \nnu,\n_+$ and $\n_-$ are well defined. Here $W(\n,\nabla \n)$ is assumed to have the Oseen-Frank form or be modified so as to have subquadratic growth as suggested in the previous section. 

We assume that the interfacial energy  $f:S^2\times S^2\times S^2\to [0,\infty)$ is continuous and frame-indifferent, i.e.
\begin{equation}
\label{surfacefi}f(\vR\n_+,\vR\n_-,\vR\nnu)=f(\n_+,\n_-,\nnu)
\end{equation}
for all $\vR\in SO(3), \n_+,\n_-,\nnu\in S^2$, and that $f$ is invariant to reversing the signs of $\n_+,\n_-$, reflecting the statistical head-to-tail symmetry of nematic and cholesteric molecules, so that 
\begin{equation}\label{surfacerev}
f(-\n_+,\n_-,\nu)=f(\n_+,-\n_-,\nnu)=f(\n_+,\n_-,\nnu).
\end{equation}
{A necessary and sufficient condition that $f$ satisfies $\eqref{surfacefi}, \eqref{surfacerev}$  is that (see \cite{u14,j67} and for a related result \cite{smith71})
\begin{eqnarray*}
f(\n_+,\n_-,\nnu)= 
 g((\n_+\cdot \n_-)^2, (\n_+\cdot\nnu)^2,(\n_-\cdot\nnu)^2,(\n_+\cdot \n_-)(\n_+\cdot\nnu)(\n_-\cdot\nnu))
\end{eqnarray*} 
for a continuous function $g:D\to[0,\infty)$, where
$$D=\{(\alpha,\beta,\gamma,\delta):\alpha,\beta,\gamma\in [0,1], \delta^2=\alpha\beta\gamma, \alpha+\beta+\gamma-2\delta\leq 1\}.$$
In the following subsections we consider various situations in which planar discontinuities of $\n$ and/or models such as \eqref{freedisc} are potentially of interest, referring the reader to \cite{j67} for more details.
\subsubsection{Nematic elastomers.}
Nematic elastomers are polymers to whose polymer chains rod-like mesogens are attached. Thus they combine features of nonlinear elasticity and liquid crystals. The energy functional for nematic elastomers
 proposed by Bladon, Terentjev \& Warner \cite{bladonetal} is given by
$$
I(\y,\n)=\int_\Omega \frac{\mu}{2}\left(D\y (D\y)^T\cdot L_{a, \n}^{-1}-3\right)\,d\x,
$$
where
$$
L_{a, \n}=a^\frac{2}{3}\n\otimes\n+a^{-\frac{1}{6}}({\bf 1}-\n\otimes\n)
$$
and $\mu>0, a>0$ are material parameters. Here $\y(\x)$ denotes the deformed position of the material point $\x\in\om$. As is usual for models of polymers the material is assumed incompressible, so that $\y$ satisfies the pointwise constraint $\det \nabla \y(\x)=1$ for $\x\in\om$. 

By minimizing 
 the integrand over $\n\in S^2$ we obtain the purely elastic energy
\be 
\label{one}
I(\y)=\int_\Omega W(\nabla\y)\,d\x, 
\ee 
where
$$
W(\A)=\frac{\mu}{2}\left(a^{-\frac{2}{3}}v_1^2(\A)+a^\frac{1}{3}(v_2^2(\A)+v_3^2(\A))\right),
$$
and $v_1(\A)\geq v_2(\A)\geq v_3(\A)>0$ denote the singular values of $\A$, 
that is the eigenvalues of $\sqrt{\A^T\A}$.

 The free-energy function  \eqref{one} is not quasiconvex \cite{desimonedolzmann},
 and admits minimizers in which $\nabla {\y}$ 
jumps across planar interfaces, so that the minimizing ${\bf n}$
 of the integrand also jumps. Stripe domains involving jumps in $\nabla\y$, similar to those seen in martensitic phase transformations (see, for example, \cite{bhattacharya03}), have been observed in  experiments of Kundler \& Finkelmann \cite{kundlerfinkelmann}. While the functional \eqref{one} ignores Frank elasticity,
 i.e. terms in 
$\nabla{\n}$, theories have been proposed in which such terms or corresponding terms in $\nabla\Q$ are included (see, for example, \cite{AndersonCarlsonFried1999,caldereretal2015}). The experimental observations suggest that it could be interesting to investigate whether in such models a corresponding  $SBV$  formulation  allowing jumps in ${\n}$ could be useful.

\subsubsection{Order reconstruction.}
We consider the situation illustrated in Fig. \ref{orderreconstruction}, in which a nematic liquid crystal occupies the region $\om_\delta=(0,l_1)\times(0,l_2)\times(0,\delta)$ of volume $|\om_\delta|=l_1l_2\delta$ between two parallel plates a small distance $\delta>0$ apart. The director $\n$ is subjected to antagonistic boundary conditions
$$\n(x_1,x_2,0)=\pm\e_1,\; \n(x_1,x_2,\delta)=\pm\e_3$$ on the plates, and periodic boundary conditions $$\n(0,x_2,x_3)=\n(l_1,x_2,x_3),\;\n(x_1,0,x_3)=\n(x_1,l_2,x_3)$$ on the other faces. 
\begin{figure}[h] 
  \centering
  \includegraphics[width=4.09in,height=1.55in,keepaspectratio]{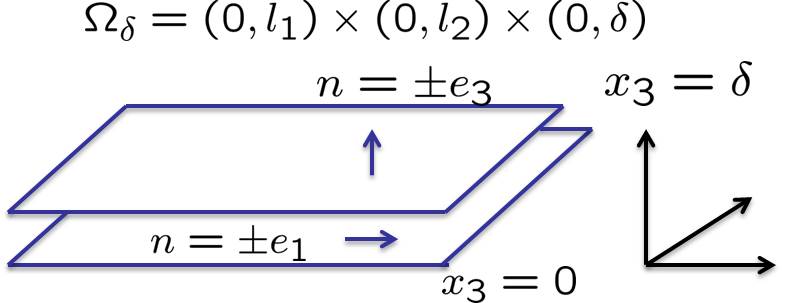}
  \caption{Thin plate with antagonistic boundary conditions.}
  \label{orderreconstruction}
\end{figure}
Similar problems have been considered by many authors using a variety of models (see, for example, \cite{ambrosiovirga91,barberietal04,barberobarberi,bisietal03,carboneetal,lamy14,palffyetal}). In \cite{j67} it is explained how using a Landau - de Gennes model, or molecular dynamics simulations \cite{zannonithinfilm}, leads for sufficiently small plate separation $\delta$ to a jump in the director (defined as in Section \ref{orderparameters} as the eigenvector of $\Q$ corresponding to it largest eigenvalue). Also in \cite{j67} it is shown that for a special choice of $W$ and $f$ in \eqref{freedisc} the minimum of $I$ is attained in $SBV(\om_\delta,S^2)$ satisfying the boundary conditions in a suitable sense. Here (see also \cite{j70}) we confine ourselves to showing  that in general, for $W(\n,\nabla\n)$ having the Oseen Frank form \eqref{OFe} with the Frank constants satisfying the Ericksen inequalities \eqref{ericksenineq}, for sufficiently small $\delta$ the infimum $I_{\rm inf}$ of $I(\n)$ among $\n\in SBV(\om_\delta,S^2)$ satisfying the boundary conditions is strictly less than the minimum of $I(\n)$ among $\n\in H^1(\om_\delta,S^2)$ satisfying the boundary conditions. Indeed, letting 
\be 
\label{test}
\N=\left\{\begin{array}{ll} \pm\e_1,& 0<x_3<\frac{\delta}{2}\\
\pm\e_3,&\frac{\delta}{2}<x_3\leq \delta \end{array}\right.,
\ee 
we deduce that $I_{\rm inf}\leq I(\N)=l_1l_2f(\e_1,\e_3,\e_3)$. On the other hand, if $\n\in W^{1,2}(\om_\delta,S^2)$ satisfies the boundary conditions, then
by \eqref{coercive}
\be 
  I(\n)&=&\int_{\om_\delta} W(\n,\nabla\n)\,d\x \nonumber\\
&\geq&c_0\int_{\om_\delta} |\nabla\n|^2d\x\nonumber\\
&\geq& c_0|\om_\delta|^{-1}\left|\int_{\om_\delta} \nabla\n\,d\x\right|^2\nonumber\\
&=&c_0|\om_\delta|^{-1}(l_1l_2)^2|\pm\e_3\mp\e_1|^2\nonumber\\
&=&2c_0 \frac{l_1l_2}{\delta},
\label{est}
\ee 
so that $I(\n)>I_{\rm inf}$ provided $\delta<\frac{2c_0}{f(\e_1,\e_3,\e_3)}$.
\subsubsection{Smectic thin films.}
A somewhat similar situation to the order reconstruction problem occurs in the experiments on smectic A thin films carried out by the research group of Emmanuelle Lacaze (see \cite{coursaultetal2014,coursaultetal2016,lacazemicheletal,michel04,michel06,zapponelacaze,zapponeetal,zapponeetal12}). Here there is parallel anchoring on the substrate, with homeotropic anchoring on the free surface of the film, leading to interesting configurations of the smectic layers in which their normals $\m$, and thus the director $\n$ also, suffer jump discontinuities on surfaces. The applicability of $SBV$ models for these experiments is currently being investigated.
\subsubsection{Recovering orientability}
Finally we indicate a purely mathematical application of $SBV$ models, to recover orientability of the director $\n$ in situations in which $\n$ is not orientable in smaller function spaces, by allowing $\n$ to jump to $-\n$ across suitable surfaces (see Fig. \ref{recover}). This can be formally associated with an energy functional of the form \eqref{freedisc} with a singular interfacial energy term
\be 
\label{singularf}
f(\n_+,\n_-,\nnu)=\left\{\begin{array}{ll}0 &\mbox{if }(\n_+\cdot\n_-)^2=1\\ \infty&\mbox{otherwise},\end{array}\right.
\ee 
which in turn can be considered as the limit $k\to\infty$ of $f(\n_+,\n_-,\nnu)=k(1-(\n_+\cdot\n_-)^2)$. That is, jumps from $\n$ to $-\n$ cost zero energy.
\begin{figure}[tbp] 
  \centering
  \includegraphics[width=4in,height=2.05in,keepaspectratio]{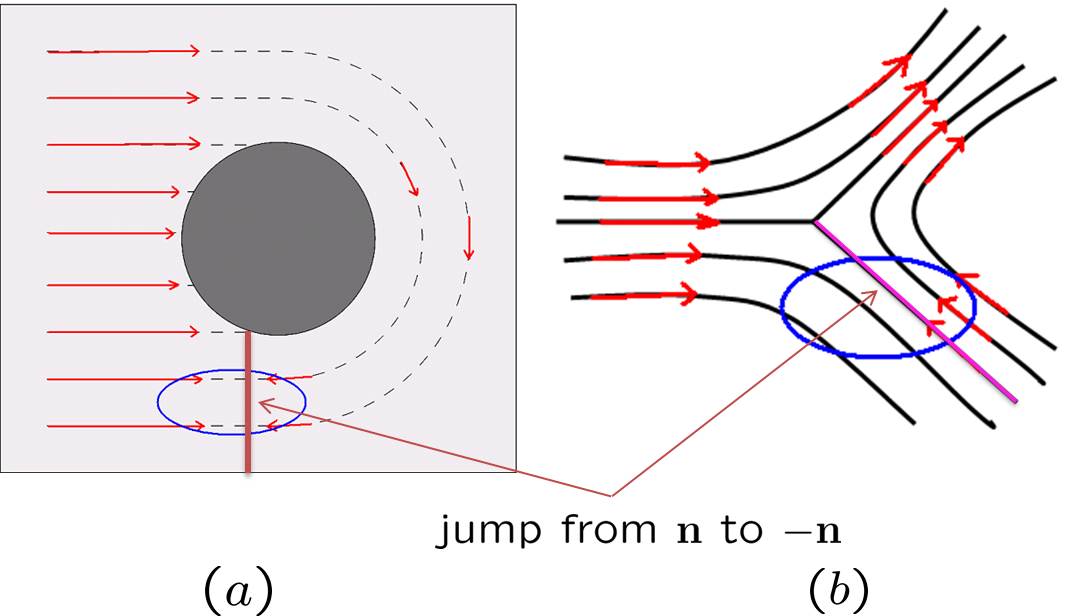}
  \caption{Recovering orientability in $SBV$ by allowing jumps of $\n$ to $-\n$ across suitable surfaces, $(a)$ for a smooth line field in a non simply-connected domain and $(b)$ for an index-$\half$ defect.}
  \label{recover}
\end{figure}
A corresponding lifting theorem is:
\begin{thm}[Bedford \cite{bedfordcholesterics}]
\label{SBVorient}
  Let $\om\subset\R^3$ be a bounded Lipschitz domain. $\Q=s\left(\n\otimes\n-\frac{1}{3}
{\bf 1}\right)\in W^{1,2}(\om,M^{3\times 3})$, where
$s\neq 0$ is constant. Then there exists a unit vector field ${\bf m}\in SBV(\om,S^2)$
such that $\m\otimes \m=\n\otimes \n$, and if $\x\in S_\m$ then $\m_+(\x)=-\m_-(\x)$.
\end{thm}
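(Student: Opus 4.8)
The plan is to build $\m$ by patching together local orientations of the line field and then to recognise the resulting map as an element of $SBV$ whose jumps are automatically of the type $\m_+=-\m_-$. First I would produce local liftings. Since $\partial\om$ is Lipschitz, every point of $\overline\om$ possesses a small ball $B$ for which $U:=B\cap\om$ is simply connected and of class $C^0$ (near an interior point $U=B$; near a boundary point $U$ is bi-Lipschitz to a half-ball, hence contractible). By compactness of $\overline\om$ I select a finite cover $U_1,\dots,U_N$ of this kind. On each $U_i$ the restriction of $\Q$ lies in $W^{1,2}(U_i,\mathcal Q)$, so Theorem \ref{orientabilitythm} gives a lifting $\n_i\in W^{1,2}(U_i,S^2)$ with $\n_i\otimes\n_i=\n\otimes\n$ a.e. on $U_i$; by \eqref{Qformula} these satisfy $|\nabla\n_i|\le |s|^{-1}|\nabla\Q|$, so their gradients are square-integrable with a bound independent of the patch.

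Next I would glue. Set $A_1=U_1$ and $A_i=U_i\setminus\bigcup_{j<i}U_j$, obtaining a partition of $\om$ (disjoint up to null sets) into sets of finite perimeter, since finite unions, intersections and differences of balls with $\om$ are Caccioppoli sets. Define $\m:=\n_i$ on $A_i$. Then $\m\in L^\infty(\om,S^2)$, $\m$ is $W^{1,2}$ on each $A_i$, and the interfaces $\partial A_i\cap\om\subset\bigcup_i(\partial B_i\cap\om)$ form a finite union of spherical pieces, hence an $\mathcal H^2$-rectifiable set of finite measure. The standard gluing criterion for special functions of bounded variation, namely that a bounded map which is Sobolev on each member of a finite partition into sets of finite perimeter lies in $SBV$, with jump set contained in the union of the reduced boundaries and with no Cantor part (see \cite{ambrosioetal00}), then yields $\m\in SBV(\om,S^2)$ with $S_\m\subset\bigcup_i\partial^*A_i$ and $\int_\om|\nabla\m|^2\,d\x\le s^{-2}\int_\om|\nabla\Q|^2\,d\x<\infty$. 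Moreover $\m\otimes\m=\n_i\otimes\n_i=\n\otimes\n$ a.e. on each $A_i$, hence a.e. on $\om$.

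Finally I would identify the jumps. Because $\m\otimes\m=\tfrac1s\Q+\tfrac13\1$ and $\Q\in W^{1,2}(\om,M^{3\times3})$, the tensor field $\m\otimes\m$ is itself $W^{1,2}(\om)$, so its distributional derivative is absolutely continuous and its jump part vanishes; consequently its one-sided approximate limits coincide $\mathcal H^2$-a.e. on $S_\m$. Thus $\m_+\otimes\m_+=\m_-\otimes\m_-$ there, which for unit vectors forces $\m_+=\pm\m_-$; since by definition $\m_+\neq\m_-$ on $S_\m$, we conclude $\m_+=-\m_-$ $\mathcal H^2$-a.e. on $S_\m$, as required. (Equivalently, on connected components of each overlap $U_i\cap U_j$ Theorem \ref{twoorientations} gives $\n_i=\pm\n_j$, which reproduces the same dichotomy for the two-sided traces.)

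The main obstacle is not the algebra but two soft points underlying the construction: verifying that small balls meeting $\partial\om$ intersect $\om$ in simply-connected sets of class $C^0$, so that Theorem \ref{orientabilitythm} remains applicable up to the boundary (this is precisely where the Lipschitz hypothesis is used, and where one may instead invoke Bedford's boundary argument), and checking rigorously that the patched map has no Cantor part, i.e. that $D\m$ splits into only an absolutely continuous part and a jump part supported on the rectifiable interfaces. Both are delivered by the finite-perimeter and $SBV$-gluing machinery, but they are the steps demanding the most care.
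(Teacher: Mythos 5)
The paper does not prove this theorem; it is quoted from Bedford \cite{bedfordcholesterics}, so there is no in-paper argument to compare against. Your proposal is correct and is essentially the argument used in the cited source: cover $\om$ by finitely many simply-connected pieces on which Theorem \ref{orientabilitythm} yields local liftings, glue them along a finite-perimeter partition to land in $SBV(\om,S^2)$, and read off $\m_+=-\m_-$ on $S_\m$ from the fact that $\m\otimes\m=s^{-1}\Q+\frac{1}{3}{\bf 1}$ is Sobolev and hence has no jump part (equivalently, from Theorem \ref{twoorientations} on overlaps). The only detail worth tightening is the local geometry at $\partial\om$, where one should use cylinders adapted to the local Lipschitz graph representation rather than balls to guarantee that the boundary patches are simply connected and of class $C^0$.
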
 
Bedford \cite{bedfordcholesterics} also proves a related result in the context of the Ericksen theory. Theorem \ref{SBVorient} applies to the situation in Fig. \ref{recover} (a), but not to situations involving index-$\half$ singularities, for which an extension to $W^{1,p}$ would be required.

 \section*{Acknowledgements} This research  was supported by 
EPSRC
(GRlJ03466, the Science and Innovation award to the Oxford Centre for Nonlinear
PDE EP/E035027/1, and EP/J014494/1), the European Research Council under the European Union's Seventh Framework Programme
(FP7/2007-2013) / ERC grant agreement no 291053 and
 by a Royal Society Wolfson Research Merit Award. 
I  offer warm thanks to Elisabetta Rocca and Eduard Feireisl for organizing such an interesting programme, to the other lecturers and participants for the lively interaction, and to Elvira Mascolo and the CIME staff for the smooth and friendly organization in a beautiful location.

I am indebted to my collaborators Apala Majumdar, Arghir Zarnescu and Stephen Bedford for many discussions related to the material in these notes,   and to Apala Majumdar, Epifanio Virga, Claudio Zannoni and Arghir Zarnescu for  kindly reading the notes and pointing out various errors and infelicitudes.

\bibliography{gen2,balljourn,ballconfproc,ballprep}
\bibliographystyle{abbrv}

\printindex
\end{document}